\newif\ifqip
\qipfalse

\newif\ifeprint
\eprinttrue

\newif\ifsubmit
\submittrue

\documentclass[11pt]{article}
\usepackage[pdftex,left=1in,top=1in,bottom=1in,right=1in]{geometry}

\usepackage{amsmath}
\usepackage{amssymb}
\usepackage{amsthm}
\usepackage{mathtools}
\usepackage{xcolor}
\usepackage{hyperref}
\usepackage{cleveref}
\usepackage{thm-restate}
\usepackage{xargs}
\usepackage{enumitem}
\usepackage{ytableau}
\usepackage{array,multirow,boldline}
\usepackage{float}
\usepackage{tikz-cd}
\usepackage{dsfont}
\usepackage{bbm}
\usepackage{braket}
\usepackage{xspace}
\usepackage{algorithm}
\usepackage{algpseudocode}
\usepackage[most]{tcolorbox}
\usepackage{thmtools}
\usepackage{thm-restate}

\usetikzlibrary{quantikz2}
\usetikzlibrary{shapes,arrows.meta,positioning}
\usetikzlibrary{calc,decorations.pathreplacing}
\usetikzlibrary{arrows.meta}

\newtcolorbox{algbox}[1]{
  enhanced,
  colback=white,
  colframe=black,
  coltitle=black,
  title={#1},
  attach boxed title to top left={xshift=3mm,yshift=-2mm},
  boxed title style={
    colback=white,
    colframe=black,
    boxrule=0.6pt,
    sharp corners
  },
  sharp corners,
  boxrule=0.6pt,
  left=2mm,
  right=2mm,
  top=4mm,
  bottom=2mm
}

\ifeprint
\usepackage{tocloft}

\fi

\allowdisplaybreaks

\theoremstyle{definition}
\newtheorem{theorem}{Theorem}
\newtheorem{definition}{Definition}

\newtheorem{lemma}{Lemma}

\newtheorem{conjecture}{Conjecture}

\newtheorem{remark}{Remark}

\newtheorem{construction}{Construction}

\newcommand{\Tr}{\operatorname{Tr}}

\newcommand{\BQP}{\mathsf{BQP}}
\newcommand{\QMA}{\mathsf{QMA}}
\newcommand{\NP}{\mathsf{NP}}

\newcommand{\QFT}{\algo{QFT}}

\newcommand{\cD}{\mathcal{D}}

\newcommand{\cM}{\mathcal{M}}

\newcommand{\cO}{\mathcal{O}}

\def\makeuppercase#1{
\expandafter\newcommand\csname tl#1\endcsname{\widetilde{#1}}
}

\def\makelowercase#1{
\expandafter\newcommand\csname tl#1\endcsname{\widetilde{#1}}
}

\newcommand{\Is}{\mathcal{I}}

\newcommand{\secp}{\lambda}

\newcommand{\A}{\entity{A}}
\newcommand{\B}{\entity{B}}

\newcommand*{\algo}[1]{\ensuremath{\mathsf{#1}}}

\newcommand*{\entity}[1]{\mathcal{#1}}
\newcommand{\compclass}[1]{\operatorname{#1}}

\newenvironment{boxfig}[2]{\begin{figure}[#1]\fbox{\begin{minipage}{0.97\linewidth}
                        \vspace{0.2em}
                        \makebox[0.025\linewidth]{}
                        \begin{minipage}{0.95\linewidth}
            {{
                        #2 }}
                        \end{minipage}
                        \vspace{0.2em}
                        \end{minipage}}}{\end{figure}}

\newcommand{\bit}{\{0,1\}}

\newcommand{\KeyGen}{\algo{KeyGen}}

\newcommand{\Sign}{\algo{Sign}}

\newcommand{\Ver}{\algo{Ver}}

\newcommand{\cBQPqpoly}{\mathsf{BQP}/\compclass{clonableqpoly}}
\newcommand{\cBQPqpolyO}{\mathsf{BQP}^{\mathcal{O}}/\compclass{clonableqpoly}}
\newcommand{\BQPqpoly}{\mathsf{BQP}/\compclass{qpoly}}

\newcommand{\BQPpoly}{\mathsf{BQP}/\compclass{poly}}
\newcommand{\BQPpolyO}{\mathsf{BQP}^{\mathcal{O}}/\compclass{poly}}

\makeatletter
\DeclareRobustCommand
  \myvdots{\vbox{\baselineskip4\p@ \lineskiplimit\z@
    \hbox{.}\hbox{.}\hbox{.}}}
\makeatother

\renewcommandx{\Pr}[3][1=,2=\big]{\mathop{\operatorname{Pr}}_{#1} #2 [ #3 #2 ]}
\newcommandx{\E}[3][1=,2=\big]{\mathop{\mathbb{E}}_{#1} #2 [ #3 #2 ]}

\newcommand{\negl}{\operatorname{negl}}
\newcommand{\bin}{\operatorname{\mathsf{BinEnc}}}

\newcommand{\bA}{\mathbf{A}}
\newcommand{\bB}{\mathbf{B}}

\newcommand{\bI}{\mathbf{I}}

\newcommand{\bO}{\mathbf{O}}

\newcommand{\bX}{\mathbf{X}}

\newcommand{\bWa}{\mathbf{W_1}}
\newcommand{\bWb}{\mathbf{W_2}}

\newcommand{\cQMA}{\mathsf{ClonableQMA}}

\newcommand{\QCMA}{\mathsf{QCMA}}

\newcommand{\Good}{\mathsf{Good}}
\newcommand{\YZ}{\mathsf{YZ}}
\newcommand{\BiasedYZ}{\mathsf{BiasedYZ}}
\newcommand{\Bias}{\mathsf{Bias}_{q,1/\lambda^{4},\Sigma}}
\newcommand{\advstate}{\ket{\mathsf{adv}_H}}

\newcommand{\Clone}{\mathsf{Clone}}
\newcommand{\Hash}{\mathsf{Hash}}

\newcommand{\Guesser}{\mathsf{Guesser}}

\newcommand{\email}[1]{\href{mailto:#1}{#1}}

\newcommand{\smallset}{\mathsf{SmallSet}_H}

\providecommand{\operationname}[1]{\operatorname{#1}}
\providecommand{\ket}[1]{\lvert #1\rangle}
\providecommand{\bra}[1]{\langle #1\rvert}
\providecommand{\QCMA}{\operationname{QCMA}}
\providecommand{\QMA}{\operationname{QMA}}

\providecommand{\QPT}{\operationname{QPT}}
\providecommand{\negl}{\operationname{negl}}
\providecommand{\Good}{\mathsf{Good}}
\providecommand{\wBias}{\mathsf{Bias}}

\providecommand{\QFireMini}{\mathsf{QFireMini}}

\providecommand{\DS}{\mathsf{DS}}

\providecommand{\KeyGen}{\mathsf{KeyGen}}

\providecommand{\Spark}{\mathsf{Spark}}
\providecommand{\Clone}{\mathsf{Clone}}
\providecommand{\Sign}{\mathsf{Sign}}
\providecommand{\Ver}{\mathsf{Ver}}
\providecommand{\cM}{\mathcal{M}}
\providecommand{\cO}{\mathcal{O}}
\providecommand{\AdvState}[1]{\ket{\mathrm{adv}_{#1}}}
\providecommand{\Tr}{\operationname{Tr}}
\providecommand{\E}{\mathbb{E}}
\providecommand{\bin}{\mathsf{bin}}

\title{
Separating $\cQMA$ and $\QCMA$ Relative to a Classical Oracle\\[0.5em]
\Large \emph{Clonability does not make quantum proofs and advice classical}
}

\author{
    Alper Cakan\thanks{Carnegie Mellon University, USA. \email{acakan@andrew.cmu.edu}. Work done while visiting Academia Sinica.}  \and
	Kai-Min Chung\thanks{Academia Sinica, Taiwan. \email{kmchung@iis.sinica.edu.tw}.}  \and
	Wei-Hsiang Hung\thanks{Academia Sinica, Taiwan. \email{arniehung@iis.sinica.edu.tw}.} \and
	Tzu-Yi Yang\thanks{Academia Sinica, Taiwan \& California Institute of Technology, USA. \email{tyang3@caltech.edu}.}
}

\date{\today}

\ifsubmit
\newcommand{\arnie}[1]{}
\newcommand{\red}[1]{}
\newcommand{\alper}[1]{}
\newcommand{\kaimin}[1]{}
\newcommand{\todo}[1]{}

\else
\newcommand{\arnie}[1]{$\ll$\textsf{\color{orange} Arnie: { #1}}$\gg$}
\newcommand{\red}[1]{\textcolor{red}{(Tzu-Yi: #1)}}
\newcommand{\alper}[1]{\textcolor{blue}{(Alper: #1)}}
\newcommand{\kaimin}[1]{\textcolor{teal}{(Kai-Min: #1)}}
\newcommand{\todo}[1]{\textcolor{blue}{(TODO: #1)}}

\fi

\begin{document}
\maketitle

\ifqip
  \input{111-main-qip}
\fi

\ifeprint
  % ePrint body; compile main.tex with \qipfalse.
\begin{abstract}
Since the introduction of the complexity class $\QMA$ as a quantum-verifier analogue of $\NP$ (Kitaev, 1997), many have wondered whether quantum proofs are necessary or classical proofs suffice -- that is, whether $\QMA = \QCMA$ or $\QCMA \subsetneq \QMA$ (Aharonov and Naveh, 2002; Aaronson and Kuperberg, CCC ’07). This longstanding question was recently answered by works of Bostanci, Haferkamp, Nirkhe, and Zhandry (STOC ’26) and Bostanci, Huang, and Vaikuntanathan (FOCS ’26), which showed that quantum proofs are more powerful than classical ones in the classical-oracle setting.

However, it remains unclear what exactly makes quantum proofs more powerful than classical ones. In the information-theoretic setting, a family of quantum states is \emph{not classicalizable} if and only if it is unclonable. Indeed, these recent works also explicitly highlight the unclonability of their quantum proofs as a key mechanism behind their separations, and their arguments crucially rely on this property. This raises the question of whether unclonability is necessary for quantum proofs to be more powerful than classical ones.

In this work, we show that even clonable quantum proofs can be more powerful than classical ones relative to a classical oracle by constructing a classical oracle $\cO$ such that $\QCMA^{\cO} \subsetneq \cQMA^{\cO}.$ This resolves the open question of Nehoran and Zhandry (ITCS ’24), who established the analogous separation relative to a quantum oracle. We also show a classical oracle separation between $\cBQPqpoly$ and $\BQPpoly$, and give applications of our results to quantum cryptography.
\end{abstract}

\newpage
\tableofcontents
\newpage

\section{Introduction}

% \begin{center}\begin{minipage}{0.8\linewidth}
%      \textit{Are clonable quantum proofs and advice strictly more useful than classical ones?}
% \end{minipage}\end{center}

% In this paper, we give an affirmative answer. At a high level, this result indicates that unclonability is \emph{not} the only essence that makes a quantum state different to a classical one.

% \paragraph{Hidden-Subspace Based vs. Recoverable-Code Based Construction.} \arnie{Why do we bring this up here? Are we suggesting that the QCMA hardness of our code-based construction has implications for the crypto applications listed here?}
% Previous constructions of quantum money \cite{AC:2012}, quantum fire \cite{CGS:2026} and one-shot signature \red{Add more reference...} are all based on \emph{hidden subspace}. The banknote state and the flame state are uniform superpositions of all vectors of a subspace or coset. Their security (copy-protection or non-transferring) is proven against $\BQP$-adversaries but not $\QCMA$-adversaries. This is because one can describe a subspace classically by their basis.

% \paragraph{Self-Verification Mechanism.}

Understanding the quantum advantage is a central
goal of quantum computation. In this paper, we focus on the following questions:

\begin{center}\begin{minipage}{0.8\linewidth}
      \textit{What makes quantum proofs more powerful than classical ones?\\Is unclonability necessary to the quantum advantage?}
\end{minipage}\end{center}

Classical proofs can be copied and reused freely, whereas arbitrary quantum states cannot. It is therefore natural to ask whether the additional power of a quantum proof fundamentally relies on this unclonability. A complementary motivation comes from the study of computational cloning and telegraphing. Nehoran and Zhandry \cite{NZ:2024} showed that, under computational constraints, efficient clonability need not imply the ability to convert a quantum state into reusable classical information, and introduced $\cQMA$ as the corresponding complexity-theoretic notion. Thus,
\[
    \QCMA \subseteq \cQMA \subseteq \QMA,
\]
placing clonable quantum proofs between classical and unrestricted quantum proofs.

The two endpoints of this hierarchy, $\QCMA$ and $\QMA$, have been the subject of a long-standing question. First posed by Aharonov and Naveh \cite{AN02}, the question of whether quantum proofs are more powerful than classical ones---that is, whether $\QCMA=\QMA$---has a long history. Following a sequence of relativized, restricted, and conditional results \cite{AK07, FK18, BFM23, NN24, LLPY24, BDK24, Zha25, LMY25}, recent breakthroughs established unconditional separations between $\QMA$ and $\QCMA$ relative to a classical oracle. Bostanci, Haferkamp, Nirkhe, and Zhandry \cite{BHNZ26} obtained the first such separation, and Bostanci, Huang, and Vaikuntanathan \cite{BHV:2026} subsequently gave a simpler proof based on the Yamakawa--Zhandry code-intersection problem and strong list-recoverable codes.

The recent classical oracle separation \cite{BHNZ26, BHV:2026} suggests that clonability may be tied to the the source of the QMA--QCMA gap. A common feature of their lower-bound arguments is that a classical witness can be copied and reused across many executions, eventually allowing one to extract too much information about the underlying oracle. This perspective naturally suggests that the same reasoning might extend to quantum witnesses that are themselves efficiently clonable. Indeed,~\cite{BHNZ26} and the first version
of~\cite{BHV:2026} suggest extensions to clonable variants
of their lower bounds; we discuss the additional
difficulty posed by oracle-assisted cloning in
Section~\ref{subsec:open-problems}.

A complementary line of evidence comes from Nehoran and Zhandry \cite{NZ:2024}, who directly studied the intermediate class $\cQMA$. They showed, relative to a unitary quantum oracle $U$, that
\[
    \cQMA^U\not\subseteq\QCMA^U.
\]
Thus, in a relativized setting, efficient clonability alone does not force quantum proofs to become classical. Their construction, however, relies on a quantum oracle and the clonabillity is direclty baked into the oracle, leaving open whether the same phenomenon can arise in a purely classical relativized world.

There is encouraging evidence from quantum cryptography that a quantum oracle may not be necessary for separating clonability from classicality. In the language of quantum fire, the analogous goal is to construct quantum states that can be efficiently cloned but cannot be efficiently telegraphed---that is, converted into classical information sufficient to reproduce the state. Çakan, Goyal, and Shmueli \cite{CGS:2026} obtain such a separation relative to a classical oracle. While this is a cryptographic rather than complexity-theoretic statement, it shows that a classical oracle can already support quantum information that is efficiently clonable yet resists classicalization.

This leads us to the following question:

\begin{center}
\begin{minipage}{0.8\linewidth}
\textit{Does there exist a classical oracle $\cO$ such that
$\cQMA^{\cO} \not\subseteq \QCMA^{\cO}$?}
\end{minipage}
\end{center}

We show that this phenomenon persists relative to a classical oracle. Moreover, our technique extends from quantum proofs to quantum advice: relative to a classical oracle, polynomial-size advice that can be efficiently cloned is strictly more useful than polynomial-size classical advice. Thus, in both settings, efficient clonability does not eliminate the advantage of quantum information.

The same construction also has cryptographic applications. Building on the Yamakawa--Zhandry algorithm~\cite{YZ:2022} and the good-code framework of~\cite{BHV:2026}, we obtain a quantum fire relative to a classical oracle, with a proof of efficient cloning and noninteractive untelegraphability.
%We further propose a quantum money scheme and an one-shot signature, whose security remains conjectural. Together, these constructions suggest a \emph{good-code based} approach to quantum cryptography.\alper{maybe remove}

\subsection{Our Results}

We answer the preceding question affirmatively: clonable quantum proofs can be strictly more powerful than classical proofs, even relative to a classical oracle.

\begin{restatable}{theorem}{mainseparationQMA}
\label{thm:main-separationQMA}
There exists a classical oracle $\cO$ such that
\[
    \cQMA^{\cO} \not\subseteq \QCMA^{\cO}.
\]
\end{restatable}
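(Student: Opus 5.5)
We will build the oracle from the Yamakawa--Zhandry (YZ) code-intersection problem together with the strong list-recoverable good-code framework of \cite{BHV:2026}, and add a small amount of oracle structure so that the witness state becomes efficiently clonable. Concretely, $\cO$ consists of the following components:
\begin{itemize}
\item a random function $H$ that defines, for each input length, a YZ-type instance built from a good code $C$;
\item a membership/verification oracle that tests whether a string is a valid code-intersection solution;
\item a random hash $\Hash$ that assigns to each YES-instance a ``label'' selecting one small set $\smallset$ of solutions.
\end{itemize}
The language $L$ is defined by a random choice, per input length, of whether this structure is planted.

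The witness for a YES instance is the state $\advstate$: the (approximately) uniform superposition over the solutions in $\smallset$, which YZ-style quantum sampling prepares. The $\cQMA$ verifier checks the state coherently: it measures membership in the solution set and applies a test in a Fourier-like basis to confirm it is close to the correct superposition.

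Clonability is the point of the added structure. The oracle exposes, on input a solution $x\in\smallset$, enough classical data, namely the label $\Hash(x)$, to run the YZ sampling procedure again restricted to that label. The cloning algorithm $\Clone$ therefore measures or coherently computes the label, which leaves the state invariant because the whole superposition shares that label. It then re-prepares a fresh copy using oracle queries. We will verify two things: that this is an efficient map sending $\advstate$ to $\advstate^{\otimes 2}$ with high fidelity, and that the construction satisfies the formal $\cQMA$ definition, i.e.\ that the honest witness is clonable by a polynomial-time oracle algorithm.

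Completeness and soundness of the $\cQMA$ protocol then follow from the standard YZ/\cite{BHV:2026} analysis. On NO instances no state passes the verifier, because no solutions are planted with the required structure.

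The lower bound against $\QCMA^{\cO}$ is the main obstacle. We follow \cite{BHV:2026}: fix a classical witness $w$ of length $p(n)$, and note that a $\QCMA$ verifier accepting for a $1/\mathrm{poly}$ fraction of oracles can be run repeatedly, since $w$ is reusable, to output many distinct solutions. Strong list-recoverability of the good code then shows that producing that many solutions with polynomially many queries is impossible except with probability $2^{-\omega(p(n))}$. A union bound over all $2^{p(n)}$ witnesses finishes the argument.

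The new difficulty is the cloning oracle: an adversary with classical advice could query it to bootstrap solutions. Our first attempt will be to argue that the cloning interface only answers on inputs that are already solutions. Hence, starting from no solutions, a query-bounded algorithm cannot use it except via the hash labels. We will handle this with a $\Bias$/$\Good$-style lemma: conditioning on a short classical witness only mildly biases the oracle, so YZ hardness still holds given polynomially many fixed labels. We expect most of the technical work to lie in making this simulation (answering clone-queries without knowing $H$) rigorous. Finally, we diagonalize over all $\QCMA$ machines in the standard way to place $L\in\cQMA^{\cO}\setminus\QCMA^{\cO}$.
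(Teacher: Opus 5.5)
There is a genuine gap, and it is in the clonability mechanism itself.

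\textbf{The cloner makes the witness classical.} Your cloner computes the classical label $\Hash(x)$, which is shared by the whole superposition. It then re-prepares a fresh copy of the witness from that label using only oracle queries. But then the label is a polynomial-size classical witness. A $\QCMA^{\cO}$ verifier can take it, run the same re-preparation, and feed the result to your quantum verifier. Completeness is inherited from the honest label. Soundness is inherited because your verifier rejects \emph{every} state on NO instances. So whatever problem you define lies in $\QCMA^{\cO}$, and the lower bound you sketch cannot hold.

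The point of the separation is that cloning must never pass through a short classical description; the quantum state itself has to serve as a reusable resource. Relatedly, if $H$ is a directly queryable component of $\cO$, the YZ algorithm prepares the relevant states from scratch in polynomial time. The problem would then be in $\BQP^{\cO}$, so access to $H$ must be restricted.

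\textbf{The missing idea is the paper's locked oracle.} Take the witness to be the YZ preprocessing state $\ket{\psi_H}=\bigotimes_i\ket{H_i^{-1}(0)}\otimes\bigotimes_i\ket{H_i^{-1}(1)}\otimes\ket{\phi_2}$, not a superposition over a small solution set. Give no direct access to $H$. Provide only the locked oracle $M[H,E](i,x,c)$, which returns the single bit $H_i(x)$ when $c\in C$, $H(c)=\bin(i,x)\|0^{n-\ell}$, and the pair lies in $E$. Since $\BiasedYZ$ on $\ket{\psi_H}$ solves \emph{every} target with probability $1-2^{-\lambda}$, the cloner can do the following:
\begin{enumerate}
\item Run it coherently over a uniform superposition of $x$, query $M[H]$, and uncompute. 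This gently produces $|\Sigma|^{-1/2}\sum_x\ket{x}\ket{H_i(x)}$ while approximately preserving $\ket{\psi_H}$.
\item Measure the hash bit on many such copies to obtain $\ket{H_i^{-1}(0)}$ and $\ket{H_i^{-1}(1)}$. The state $\ket{\phi_2}$ needs no queries.
\end{enumerate}
The released information is spread over exponentially many $x$, so it never becomes a short classical string.

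Soundness then needs no Fourier test. The verifier picks a uniformly random target, and NO oracles answer only on targets in a set $F$ of density at most $1/3$.

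\textbf{The lower bound's real difficulty.} The hard step is not that an adversary could ``bootstrap solutions'' from the cloning interface. It is that the BHV extraction must simulate the oracle \emph{query-free} on previously extracted pairs, and the oracle now also returns a hash bit. Because the target encodes both $i$ and $x$, each extracted pair unlocks exactly one bit. The guesser can therefore guess it at a cost of only $1/2$ per round, giving success at least $2^{-t}(72Q^2)^{-L}$. This contradicts the bound $2^{t}(1-\lambda^{-4})^{nL/2}$ from coordinate-symbol diversity. Note also that this guessing bound applies to query-free guessers, not to algorithms ``with polynomially many queries'' as you state. That is exactly why the query-free simulation is needed.
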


%At a high level, our result suggests that unclonability need not be essential to the advantage of quantum proofs over classical ones: relative to a classical oracle, even efficiently clonable quantum proofs can be more powerful than classical proofs.
This strengthens the quantum-oracle separation of~\cite{NZ:2024} to a classical oracle. Our honest witnesses admit polynomial-time cloning with fidelity $1-\operatorname{negl}(\lambda)$. Thus, unclonability is not essential to quantum-proof advantage in a relativized setting.

The same technique extends to clonable quantum advice

\begin{restatable}{theorem}{mainseparationBQP}
\label{thm:main-separationBQP}
There exists a classical oracle $\cO$ such that
\[
    \cBQPqpolyO \not\subseteq \BQPpolyO.
\]
\end{restatable}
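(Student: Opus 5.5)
We plan to prove \Cref{thm:main-separationBQP} by reusing the oracle behind \Cref{thm:main-separationQMA}, with one change: the witness becomes advice, and the hard instance is fixed per input length, so nothing can be guessed or proved on the fly. For each length $\lambda$ we sample a random oracle $H$ defining a Yamakawa--Zhandry-type code-intersection instance over a good (strongly list-recoverable) code as in \cite{BHV:2026}. We then attach a classical verification oracle that, on input a candidate codeword $x$ with $H(x)$ in the accepting set, returns a fresh hidden bit $b_\lambda(x)$. Unary inputs $1^\lambda$ paired with queries $x$ define the language $L$: the task is to output $b_\lambda(x)$ for a hidden designated ``challenge'' $x$, or more simply to decide a bit string $s_\lambda$ that is recoverable only from valid YZ solutions. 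The advice for length $\lambda$ is the YZ state $\ket{\mathsf{adv}_H}$, a superposition over codewords weighted by $H$. The $\BQP$ machine measures this state to obtain a valid solution $x$, and from $x$ it recovers the needed bit through the oracle.

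\textbf{Upper bound and clonability.} We will show that the advice can be produced by running the YZ preparation procedure forward on fresh randomness, and that it can be cloned. Cloning works by coherently computing an oracle-provided ``reference'' of the state, using the same mechanism the honest $\cQMA$ witnesses use for cloning with fidelity $1-\negl(\lambda)$. Measuring a copy gives a valid solution with overwhelming probability. Hence $L\in\cBQPqpolyO$.

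\textbf{Lower bound.} Suppose a $\BQP$ machine with classical advice $a_\lambda$ of length $p(\lambda)$ decides $L$. We follow the argument of \cite{BHV:2026}, adapted from witnesses to advice. Since the advice is fixed, we may fix a single string $a$ and argue that, for a random $H$, the set of oracles on which a $p$-bit advice succeeds has density at most $2^{-\Omega(\lambda)}$ relative to what a union bound over $2^{p}$ strings can afford. Concretely, we will use the reusability of classical advice. Running the algorithm $q=\mathrm{poly}$ times on the same $a$ yields many distinct valid codeword solutions. We then use a presampling / bit-fixing argument (the advice is a $p$-bit leakage about $H$, so $H$ is $\delta$-close to a convex combination of $O(p/\delta)$-bit-fixing sources), followed by the list-recoverability bound that a query-bounded algorithm on a nearly uniform oracle finds YZ solutions with negligible probability. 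The pieces combine as follows: advice success implies a leaky algorithm that finds solutions, which contradicts the list-recoverable hardness. Finally, the Borel--Cantelli diagonalization over all advice machines gives a single oracle.

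\textbf{Main obstacle.} The hard step is the clonability mechanism, not the lower bound. Cloning must be done with oracle help, yet the cloning oracle must not leak classical descriptions. For that step we would directly import whatever cloning oracle and untelegraphability analysis the proof of \Cref{thm:main-separationQMA} uses. The $\QCMA$ lower bound there already handles a $p$-bit classical string given alongside the cloning oracle, and advice is just such a string that does not depend on the input beyond its length. We will check carefully that the presampling argument tolerates the cloning oracle queries; this is where we expect most of the work.
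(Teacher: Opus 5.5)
Your proposal has genuine gaps in three places: the oracle and language design, the cloning mechanism, and the lower bound.

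\textbf{Oracle and language.} As designed, your oracle and language do not give a separation.

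You treat $H$ as a random oracle and say the advice can be produced by running the YZ preparation forward. If the machine has coherent access to $H$, or can prepare $\advstate$ itself, it simply runs the Yamakawa--Zhandry algorithm. The language then lies in $\BQP^{\cO}\subseteq\BQPpolyO$.

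For the same reason, your ingredient ``a query-bounded algorithm on a nearly uniform oracle finds YZ solutions with negligible probability'' is false for quantum query algorithms. List recovery only gives hardness against classical or query-free algorithms. The separation needs $H$ to be reachable only through a locked interface, so that the advice cannot be prepared from the oracle.

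The language also needs exponentially many inputs per length with independent hidden answers. A single designated challenge, or a fixed string $s_\lambda$, is decided by polynomially many advice bits. Taking a codeword as input makes the task a single query.

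The paper's oracle $M'[H,G]$ returns $G(\bin(i,x))\|H_i(x)$ only on a valid solution for the target $\bin(i,x)$, and the input is the target $z$. The advice is target-independent, and the machine runs $\BiasedYZ$ on target $z$ rather than measuring the state. The uniformly random $G$ defeats classical advice.

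\textbf{Cloning.} The released hash bit in $M'[H,G]$ is what makes cloning easy: two queries to $M'$ simulate one query to $M[H]$, and \Cref{tools:lem-cloning} applies. An oracle that returns only fresh bits $b_\lambda(x)$ gives no way to rebuild the states $\ket{H_i^{-1}(b)}$. So cloning is not where the work lies; the lower bound is.

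\textbf{Lower bound.} It cannot be imported from the $\QCMA$ argument, because that argument uses soundness. Each simulated restricted oracle $M[H,E_{r-1}]$ is a NO instance, which forces query mass onto a fresh valid pair. The CISS problem itself, with one instance per length, is decided by one advice bit, so ``advice is just such a string'' does not help.

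With advice there is no soundness. The advice can hard-code a few valid pairs on which all queries concentrate, so restricting the oracle need not change anything. Your step ``running the algorithm many times on the same $a$ yields many distinct valid solutions'' therefore has two holes. First, a decision algorithm outputs a bit, not codewords. Second, repeated runs need oracle answers that a query-free guesser must simulate.

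The paper closes these holes in two steps.
\begin{enumerate}
\item The search lemma, \Cref{lem:avg-seach-separation}, is proved by a dichotomy on $\smallset$; each case contradicts \Cref{tools:lem-guessing-upper-bound}.
\begin{itemize}
\item If the query mass is spread out, it feeds the extracting guesser of \Cref{lem:spread-sampler}, which guesses one hash bit per extracted pair.
\item If the mass is concentrated, one punctures the oracle to at most $2^{\lambda/2}$ valid pairs. Its table, at most $2^{2\lambda/3}$ bits, is handed to a query-free sampler as side information. That sampler then collects about $2^{3\lambda/4}$ valid pairs with inverse-polynomial probability per pair.
\end{itemize}
\item Yao's box bound, \Cref{lem:yao-box}, shows that predicting the random bit $G(z)$ with probability above $3/5$ forces inverse-polynomial query mass on valid preimages of $z$.
\end{enumerate}

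A presampling or bit-fixing route might be made to work. It would still need a hardness bound for the bit-fixed locked oracle and a search-to-decision step, and your proposal supplies neither.
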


Our constructions combine the code-intersection states of~\cite{YZ:2022,BHV:2026} with locked access to hash values, enabling quantum cloning while preserving hardness against classical proofs and advice.

We also obtain a good-code-based construction of quantum fire.
\begin{theorem}[Quantum Fire]
    Relative to a classical oracle, there is a quantum fire scheme satisfying strong clonability and untelegraphable security.
\end{theorem}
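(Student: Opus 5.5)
We prove the quantum fire statement by building the scheme from the same oracle as in Theorem~\ref{thm:main-separationQMA}. Untelegraphability will then follow from the $\QCMA$ lower bound, and strong clonability from the locked hash access.

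\textbf{Construction.} The oracle contains a random Yamakawa--Zhandry instance: a good code $C\subseteq\Sigma^n$ in the sense of~\cite{BHV:2026}, together with random functions $H_i:\Sigma\to\{0,1\}$. Running the Yamakawa--Zhandry algorithm~\cite{YZ:2022} prepares the code-intersection state $\ket{\psi_H}$, the (near-)uniform superposition over $x\in C$ with $H_i(x_i)=0$ for all $i$. The oracle also exposes a \emph{locked} interface, which reveals hash values only when queried with an authenticated input derived from a valid codeword.

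\begin{itemize}[leftmargin=*]
\item $\Spark$ samples $H$ and outputs the flame $\ket{\psi_H}$.
\item Verification coherently checks membership in $C$ and the hash condition, then performs a swap/projection test. This accepts honest flames with probability $1-\negl(\lambda)$.
\item $\Clone$ queries the locked interface to obtain the hash data needed to rerun the Yamakawa--Zhandry preparation. Because verification is coherent and the intersection set is well defined, the computation can be uncomputed. The result is a second copy with fidelity $1-\negl(\lambda)$.
\end{itemize}

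Correctness and strong clonability are then direct computations, reusing the cloning analysis behind the $\cQMA$ witness in Theorem~\ref{thm:main-separationQMA}.

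\textbf{Untelegraphability, via reduction.} We reduce noninteractive untelegraphability to the classical-proof lower bound. Suppose an adversary $(\A,\B)$ is given the flame and oracle access. $\A$ outputs a polynomial-size classical string $s$, and $\B$, given $s$ and the oracle but not the flame, produces a state that passes verification with non-negligible probability.

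We use $\B$ together with the verifier to build a $\QCMA$ verifier for the language of Theorem~\ref{thm:main-separationQMA}, with $s$ serving as the classical witness. Since $\B$ succeeds, such a witness exists whenever the oracle is in the YES case. This contradicts the theorem's lower bound, which says no polynomial-size classical string lets an efficient oracle algorithm produce a state passing the code-intersection test for a fresh random instance.

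\textbf{Main obstacle.} The lower bound in Theorem~\ref{thm:main-separationQMA} is stated for the worst case. Here we need an average-case statement over the random choice of $H$, and $s$ may depend on the flame, which $\A$ obtained using locked-oracle queries.

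To handle this, we invoke the good-code framework directly. The relevant property is that for most $H$, any bounded-length classical advice combined with bounded queries yields only a negligible chance of outputting many intersection codewords; this is the strong list-recoverability argument of~\cite{BHV:2026}, applied with a presampling/bit-fixing step that absorbs $s$.

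We must also show that the locked interface gives $\B$ no help without the flame. Its inputs are authenticated by codewords that $\B$ would already need to know. We argue this by a hybrid in which the locked interface is replaced with one answering $\bot$, and bound the distinguishing advantage with a one-way-to-hiding argument. This hybrid is the step we expect to be hardest.
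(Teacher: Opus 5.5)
\textbf{Gap 1: the construction.} Your flame is the Yamakawa--Zhandry \emph{output} state, the superposition over $c\in C$ with $H(c)=0^n$. The paper's flame is the preprocessing state $\ket{\psi_H}=\bigotimes_i\ket{H_i^{-1}(0)}\otimes\bigotimes_i\ket{H_i^{-1}(1)}\otimes\ket{\phi_2}$, and the difference breaks security.

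\begin{itemize}
\item \emph{Cloning forces telegraphability.} For a cloner to use your flame as a key to a locked interface without disturbing it, the interface's answers must be (essentially) independent of which codeword in the support is queried. But then one measured codeword is a polynomial-size classical string that gives $A_2$ the same coherent access to $H$, and $A_2$ simply reruns YZ to rebuild the flame.
\item \emph{How the paper avoids this.} Its lock releases only $H_i(x)$, and only when $H(c)=\bin(i,x)\|0^{n-\ell}$. So each classical valid pair unlocks one hash bit. The preprocessing state, by contrast, solves \emph{every} such target via $\BiasedYZ$ and is then uncomputed (Lemma~\ref{tools:lem-cloning}).
\item \emph{Verification.} Checking the hash condition needs access to $H$. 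If that access is public, $A_2$ runs YZ itself with an empty transcript. The paper's verifier instead runs $\BiasedYZ$ on a uniformly random target $\bin(i,a)$ and accepts iff the locked oracle answers non-$\bot$. This gives soundness $\delta_\lambda$ against every restricted oracle $M[H,E]$ with small challenge support. Any reduction to the CISS lower bound needs exactly this property, and a swap/projection test does not provide it.
\item \emph{Generation.} ``$\Spark$ samples $H$'' is not meaningful relative to a fixed classical oracle. The paper fixes independent $H_u$ for all $u\in\{0,1\}^\lambda$ and a random permutation $P$. Generation uses $\mathbf H(u,\cdot,\cdot)$, and only the serial $s=P(u)$ with the locked oracle $\mathbf M(s,\cdot,\cdot,\cdot)$ is public. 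It then needs a separate permutation-locking hybrid (Lemma~\ref{lem:fire-locking}) to decouple the public hash row at $P^{-1}(s)$. Only after that does everything except $M[H^*]$ become an auxiliary oracle independent of $H^*$.
\end{itemize}

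\textbf{Gap 2: the security step.} Replacing the locked interface by an all-$\bot$ oracle is not a small perturbation. First, $A_1$ can legitimately put polynomially many valid pairs, extracted from the flame, into its transcript, and $A_2$'s queries on them really do reveal hash bits. Second, the verifier queries the same interface, so under all-$\bot$ it always rejects. The ``distinguishing advantage'' you would need to bound is therefore the entire winning probability, and a one-way-to-hiding argument cannot bypass this.

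The paper instead reproves the $\QCMA$ argument in average-case form (Lemma~\ref{lem:fire-synthesis}):
\begin{itemize}
\item Treat $A_2$ followed by $\widetilde V_M$ as one $Q$-query algorithm on the $t$-bit transcript $w$.
\item A query-free guesser guesses $w$, then runs $L$ rounds. In each round it simulates $M[H,\Delta]$ from stored pairs and guessed bits, measures a uniformly random query, and guesses the revealed hash bit.
\item Acceptance under $M[H,\Delta]$ is at most $\delta_\lambda$. So the hybrid lemma gives per-round success at least $(p_{H,r}-\delta_\lambda)_+^2/(8Q^2)$, where $p_{H,r}$ is the best acceptance over transcripts.
\item Coordinate-symbol diversity caps the guesser's success at $\bigl((1-\lambda^{-4})^{n/2}\bigr)^L$.
\item Accounting for the $2^{-t}$ cost of guessing $w$, the power-mean inequality yields the bound with $\max_w$ inside the expectation. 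No presampling or bit-fixing is needed.
\end{itemize}
The per-round factor $1/2$ for guessing hash bits is affordable only because each stored pair unlocks a single bit. That is again the $(i,x)$-keyed lock your construction lacks.
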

This gives a direct proof of the classical-oracle quantum-fire existence result of~\cite{CGS:2026}, without claiming their stronger key-fire or interactive-security guarantees.

%Tzu-Yi: I can write the below.

\subsection{Technical Overview}
\label{subsec:technical-overview}

Our construction builds on the code-intersection problem of
Yamakawa and Zhandry~\cite{YZ:2022}.
Given a linear code $C\subseteq\Sigma^n$, hash functions
$H_i:\Sigma\to\{0,1\}$, and a target
$b\in\{0,1\}^n$, the task is to find a codeword $c\in C$
such that
\[
H(c):=H_1(c_1)\|\cdots\|H_n(c_n)=b.
\]
For suitable codes, their quantum algorithm solves this
problem efficiently, whereas polynomial-query classical
algorithms fail. We use the biased variant
of~\cite{BHV:2026}, with targets
$b=z\|0^{n-\ell}$ and hash functions strongly biased
toward zero.

The central quantum state in our work is
\[
\advstate
=
\left(\bigotimes_{i=1}^{n}\ket{H_i^{-1}(0)}\right)
\otimes
\left(\bigotimes_{i=1}^{n}\ket{H_i^{-1}(1)}\right)
\otimes\ket{\phi_2},
\qquad
\ket{\phi_2}
=
\frac{1}{\sqrt{|C|}}\sum_{c\in C}\ket{c}.
\]
Here $\ket{H_i^{-1}(b)}$ is the uniform superposition
over the corresponding hash preimages.
Under different oracle interfaces, this same state serves as:
\begin{itemize}
    \item a clonable quantum \emph{proof} in $\cQMA$
          (\Cref{sec:QMA});
    \item a clonable quantum \emph{advice} in $\cBQPqpoly$
          (\Cref{sec:BQP});
    \item a \emph{flame state} in quantum fire
          (\Cref{sec:crypto-applications});
    %\item a \emph{banknote} in quantum money (\Cref{subsec:money-construction});
    %\item a \emph{signing key} in one-shot signature (\Cref{subsec:oss-construction}).
\end{itemize}
Because this state originates from an algorithm for the
Yamakawa-Zhandry search problem, we consistently call it the
`advice' state, including when it plays one of
the other roles above.

\paragraph{The Yamakawa-Zhandry algorithm.}
For a target $b$, select the appropriate preimage register
for each coordinate to obtain
\[
\ket{\phi_1(b)}
:=
\bigotimes_{i=1}^{n}\ket{H_i^{-1}(b_i)}.
\]
Let $\QFT$ denote the Fourier transform over the
additive group $\Sigma^n$, and write
\[
\QFT\ket{\phi_1(b)}
=\sum_{e\in\Sigma^n}a_b(e)\ket{e}.
\]
Fourier duality maps the uniform superposition over $C$
to the uniform superposition over $C^\perp$.
The algorithm then performs the following schematic
transformation:
\[
\begin{aligned}
\ket{\phi_1(b)}_A\ket{\phi_2}_B
&\xrightarrow{\ \QFT_A\otimes\QFT_B\ }
  \frac{1}{\sqrt{|C^\perp|}}
  \sum_{\substack{e\in\Sigma^n\\u\in C^\perp}}^{\phantom{\big|}}
  a_b(e)\ket{e}_A\ket{u}_B\\[-1mm]
&\xrightarrow{\ \mathrm{add}\ }
  \frac{1}{\sqrt{|C^\perp|}}
  \sum_{\substack{e\in\Sigma^n\\u\in C^\perp}}
  a_b(e)\ket{u+e}_A\ket{u}_B\\
&\xrightarrow{\ \mathrm{decode}\ }
  \left(
    \frac{1}{\sqrt{|C^\perp|}}
    \sum_{\substack{e\in\Sigma^n\\u\in C^\perp}}
    a_b(e)\ket{u+e}_A
  \right)\ket{0}_B\\
&\xrightarrow{\ \QFT_A^{-1}\ }
  \ket{\phi_1(b)\cdot\mathbf{1}_C}_A\ket{0}_B.
\end{aligned}
\]
In the first line, the summation means
$e\in\Sigma^n$ and $u\in C^\perp$.
The last two transformations hold approximately, and
the final state is understood to be normalized.
Addition implements convolution in the Fourier domain;
decoding $u+e$ recovers $u$ and allows its register to be
erased. The inverse Fourier transform then produces a
state supported on codewords hashing to $b$.
The bias of $H$, together with the short nonzero prefix
of $b$, makes the error $e$ sufficiently sparse for
dual decoding~\cite{BHV:2026}.

With unrestricted coherent access to $H$, a quantum
polynomial-time algorithm can prepare the required
preimage states itself. This is the setting underlying
the original quantum-versus-classical query
advantage~\cite{YZ:2022}.
To obtain separations involving quantum proofs or advice,
we instead restrict how algorithms can access $H$.

We consider two oracle interfaces:
\begin{enumerate}
    \item The verification oracle $O[H,E]$
          of~\cite{BHV:2026}, which only checks whether
          a supplied code-intersection pair is valid
          and belongs to $E$;
          %With $E$ full, this interface underlies our candidate quantum money and signing constructions.

    \item Our oracle $M[H,E]$, which additionally provides
          \emph{locked access} to hash values:
          \[
          M[H,E](i,x,c)
          =
          \begin{cases}
          H_i(x),
            &c\in C,\ 
             H(c)=\bin(i,x)\|0^{n-\ell},
             (\bin(i,x),c)\in E,\\
          \perp,&\text{otherwise}.
          \end{cases}
          \]
          $\bin(i,x)$ is the binary expression of $i\in [n]$ and $x\in\Sigma$. When $E$ is full, we denote $M[H]:=M[H,E]$, and the advice state can unlock
          this access and thereby reproduce itself.
          This interface is the basis of our
          $\cQMA$, $\cBQPqpoly$, and quantum-fire
          constructions,
\end{enumerate}
where
$$E\subseteq \{0,1\}^{\ell(\lambda)}\times \Sigma^n$$
is a subset, and $\ell=\lambda$ for $O$ and $\ell=\Theta(\lambda\log \lambda)$ for $E$. We define YES instances when $E$ is full and NO instances when the size of $E$ is smaller than some constant fraction.

For brevity, we focus below on the separation between
$\cQMA$ and $\QCMA$.

\paragraph{Clonability of $\advstate$.}
The main observation is that solving the code-intersection
problem coherently lets us extract hash values while
approximately preserving the advice.
Fix a coordinate $i$ and prepare a uniform superposition
over $x\in\Sigma$. Controlled on $x$, run a purification
$U^{\mathrm{YZ}}$ of the biased YZ algorithm on the
challenge $\bin(i,x)$, query $M[H]$ using
the resulting codeword, and then apply
$(U^{\mathrm{YZ}})^{-1}$.

On a successful YZ output, the oracle answer is always
$H_i(x)$, independently of which valid codeword was
produced. Thus the answer can be copied to a separate
register while the YZ computation is uncomputed.
Since YZ succeeds with overwhelming probability for
every challenge, this approximately implements
\[
\advstate\otimes
\frac{1}{\sqrt{|\Sigma|}}\sum_{x\in\Sigma}\ket{x}\ket{0}
\longmapsto
\advstate\otimes
\underbrace{
\frac{1}{\sqrt{|\Sigma|}}
\sum_{x\in\Sigma}\ket{x}\ket{H_i(x)}
}_{\displaystyle \ket{\psi_i^H}}.
\]
The approximation includes preservation of the advice
and its decoupling from the extracted registers.

We repeat this procedure polynomially many times using
the same seed advice. A hybrid argument bounds the
accumulated disturbance by a negligible quantity.
Measuring the last qubit of $\ket{\psi_i^H}$ produces
$\ket{H_i^{-1}(0)}$ or $\ket{H_i^{-1}(1)}$.
The smaller preimage set still occupies an inverse-polynomial
fraction of the alphabet, so polynomially many repetitions
suffice to obtain both states for every coordinate with
overwhelming probability.
Finally, $\ket{\phi_2}$ can be prepared without oracle
queries because $C$ is an explicit linear code.
Assembling these registers gives a fresh advice state
while preserving the original one, with fidelity
$1-\operatorname{negl}(\lambda)$. An overview of the cloning algorithm is \Cref{fig:cloning_schematic}.

\begin{figure}[t]
\centering
\begin{tikzpicture}[
    >=Latex,
    font=\small,
    every node/.style={inner sep=2pt}
]

% ------------------------------------------------------------
% 3D operator box: name, x, y, width, height, depth, label
% ------------------------------------------------------------
\newcommand{\opBox}[7]{%
  \coordinate (#1A) at (#2,#3);
  \coordinate (#1B) at ($(#1A)+(#4,0)$);
  \coordinate (#1C) at ($(#1A)+(0,#5)$);
  \coordinate (#1D) at ($(#1B)+(0,#5)$);
  \coordinate (#1E) at ($(#1A)+(#6,#6)$);
  \coordinate (#1F) at ($(#1B)+(#6,#6)$);
  \coordinate (#1G) at ($(#1C)+(#6,#6)$);
  \coordinate (#1H) at ($(#1D)+(#6,#6)$);
  \coordinate (#1M) at ($(#1A)+({0.5*#4},{0.5*#5})$);

  \draw[thick,fill=white] (#1A)--(#1B)--(#1D)--(#1C)--cycle;
  \draw[thick,fill=gray!10] (#1C)--(#1D)--(#1H)--(#1G)--cycle;
  \draw[thick,fill=gray!5] (#1B)--(#1D)--(#1H)--(#1F)--cycle;
  \draw[thick] (#1A)--(#1E)--(#1F)--(#1B);
  \draw[thick] (#1E)--(#1G);
  \draw[thick] (#1G)--(#1H);

  \node at (#1M) {\footnotesize #7};

  \coordinate (#1in)    at ($(#1A)+(0,{0.5*#5})$);
  \coordinate (#1out)   at ($(#1B)+(0,{0.5*#5})$);
  \coordinate (#1south) at ($(#1A)+({0.5*#4},0)$);
}

% ------------------------------------------------------------
% Background regions
% ------------------------------------------------------------
\fill[blue!5,rounded corners=4pt]   (-0.70,-2.70) rectangle (1.00,1.95);
\fill[orange!6,rounded corners=4pt] (1.20,-2.70) rectangle (8.95,1.95);
\fill[green!6,rounded corners=4pt]  (9.15,-2.70) rectangle (14.75,1.95);

\node at (0.15,2.18) {\scriptsize Seed witness};
\node at (5.05,2.18) {\scriptsize Locked hash extraction};
\node at (11.75,2.18) {\scriptsize Reconstruction};

% ------------------------------------------------------------
% Advice flow
% ------------------------------------------------------------
\node[draw,rounded corners=3pt,fill=white,minimum width=0.88cm] (advIn) at (0.15,0.95)
  {$|\mathrm{adv}_{H}\rangle$};

\opBox{Uone}{1.55}{0.54}{0.92}{0.82}{0.16}{$U_1$}
\opBox{Utwo}{3.60}{0.54}{0.92}{0.82}{0.16}{$U_2$}
\opBox{UR}{7.58}{0.54}{0.92}{0.82}{0.16}{$U_R$}

% Top dots centered exactly between U_2 and U_R
\coordinate (dotL) at (5.65,0.95);
\coordinate (dotR) at (6.45,0.95);
\node at (6.05,0.95) {$\cdots$};

\node[draw,rounded corners=3pt,fill=white,minimum width=0.88cm] (advKeep) at (12.05,0.95)
  {$|\mathrm{adv}_{H}\rangle$};
\node[draw,rounded corners=3pt,fill=white,minimum width=0.88cm] (advNew) at (12.05,-1.23)
  {$|\mathrm{adv}_{H}\rangle$};

\draw[->,thick] (advIn.east) -- (Uonein);
\draw[->,thick] (Uoneout) -- (Utwoin);
\draw[->,thick] (Utwoout) -- (dotL);
\draw[->,thick] (dotR) -- (URin);
\draw[->,thick] (URout) -- node[above,font=\scriptsize] {preserved} (advKeep.west);

% ------------------------------------------------------------
% Extracted hash states
% ------------------------------------------------------------
\node[draw,rounded corners=3pt,fill=white,minimum width=0.86cm] (psi1) at (2.01,-0.78)
  {$|\psi_{i_1}\rangle$};
\node[draw,rounded corners=3pt,fill=white,minimum width=0.86cm] (psi2) at (4.06,-0.78)
  {$|\psi_{i_2}\rangle$};

\node at (6.05,-0.78) {$\cdots$};

\node[draw,rounded corners=3pt,fill=white,minimum width=0.86cm] (psiR) at (8.04,-0.78)
  {$|\psi_{i_R}\rangle$};

\draw[->,thick] (Uonesouth) -- (psi1.north);
\draw[->,thick] (Utwosouth) -- (psi2.north);
\draw[->,thick] (URsouth) -- (psiR.north);

% ------------------------------------------------------------
% Reconstruction box M
% ------------------------------------------------------------
% Order of the four equally spaced levels near M:
%   1. psi_R   (visible)
%   2. dots    (invisible omitted wire)
%   3. psi_2   (visible)
%   4. psi_1   (visible)
%
% This swaps the dots and the psi_2 wire compared to before.

\opBox{Mbox}{9.55}{-2.04}{1.35}{1.62}{0.17}{$\mathsf M$}

% Four equally spaced input levels
\coordinate (MinTop)    at ($(Mboxin)+(0,0.45)$);   % level 1: psi_R
\coordinate (MinDots)   at ($(Mboxin)+(0,0.15)$);   % level 2: dots
\coordinate (MinPsiTwo) at ($(Mboxin)+(0,-0.15)$);  % level 3: psi_2
\coordinate (MinBot)    at ($(Mboxin)+(0,-0.45)$);  % level 4: psi_1

% Common x-position before entering M
\coordinate (preM) at ($(Mboxin)+(-0.42,0)$);

% ------------------------------------------------------------
% Top wire: psi_R -> M (horizontal)
% ------------------------------------------------------------
\draw[->,thick] (psiR.east) -- (MinTop);

% ------------------------------------------------------------
% Middle wire: psi_2 -> M
% Swapped below the dots so its vertical part is longer
% ------------------------------------------------------------
\coordinate (pTwoA) at ($(psi2.east)+(0.62,0)$);
\coordinate (pTwoB) at ($(pTwoA |- MinPsiTwo)$);
\coordinate (pTwoC) at ($(preM  |- MinPsiTwo)$);

\draw[->,thick,rounded corners=8pt,line cap=round,line join=round]
  (psi2.east) -- (pTwoA) -- (pTwoB) -- (pTwoC) -- (MinPsiTwo);

% ------------------------------------------------------------
% Vertical dots: exactly between the top and middle wires, near M
% ------------------------------------------------------------
\coordinate (dotPos) at ($(preM |- MinDots)+(0.10,0)$);
\foreach \dy in {-0.09,0,0.09}{
  \fill ($(dotPos)+(0,\dy)$) circle (0.55pt);
}

% ------------------------------------------------------------
% Bottom wire: psi_1 -> M
% ------------------------------------------------------------
\coordinate (pOneA) at ($(psi1.east)+(0.58,0)$);
\coordinate (pOneB) at ($(pOneA |- MinBot)$);
\coordinate (pOneC) at ($(preM  |- MinBot)$);

\draw[->,thick,rounded corners=8pt,line cap=round,line join=round]
  (psi1.east) -- (pOneA) -- (pOneB) -- (pOneC) -- (MinBot);

% ------------------------------------------------------------
% Output arrow from M to fresh advice
% ------------------------------------------------------------
\draw[->,thick] (Mboxout) -- (advNew.west);

% ------------------------------------------------------------
% Output brace and label
% ------------------------------------------------------------
\draw[decorate,decoration={brace,mirror,amplitude=5pt},thick]
  (12.85,-1.55) -- (12.85,1.22)
  node[midway,right=7pt] {$|\mathrm{adv}_{H}\rangle^{\otimes 2}$};

% ------------------------------------------------------------
% Short formula below extraction part
% ------------------------------------------------------------
\node[align=center,font=\scriptsize] at (5.10,-2.30)
  {$\displaystyle
  |\psi_i\rangle :=
  \frac{1}{\sqrt{|\Sigma|}}
  \sum_{x\in\Sigma}|x\rangle |H_i(x)\rangle
  $};

\end{tikzpicture}

\caption{\textit{Schematic of the cloning algorithm.}
Starting from one copy of \(|\mathrm{adv}_H\rangle\), each locked hash-extraction round
\(U_j\) approximately preserves \(|\mathrm{adv}_H\rangle\) while producing
\(|\psi_{i_j}\rangle = |\Sigma|^{-1/2}\sum_{x\in\Sigma}|x\rangle|H_{i_j}(x)\rangle\).
After \(R=\mathrm{poly}(\lambda)\) repetitions, the reconstruction map \(\mathsf M\)
postselects the states \(|H_i^{-1}(0)\rangle\) and \(|H_i^{-1}(1)\rangle\), and assembles
a fresh copy of \(|\mathrm{adv}_H\rangle\).}
\label{fig:cloning_schematic}
\end{figure}

\paragraph{QCMA-hardness relative to $M[H,E]$.}
We follow the query-extraction approach underlying the
separations of~\cite{BHNZ26,BHV:2026}.
Suppose a polynomial-query verifier with a classical
witness distinguishes the full oracle $M[H]$ from all
allowed restricted oracles $M[H,E]$.
The hybrid lemma forces its queries to place noticeable
weight on valid pairs omitted from $E$.
Measuring a uniformly selected query therefore extracts
a new valid pair with inverse-polynomial probability.

The crucial step is to make this extraction
\emph{repeatable without oracle access}.
After extracting a pair, we insert it into the simulated
set $E$ and restart the verifier with the same classical
witness. The next hybrid comparison then forces query
weight onto valid pairs outside the updated set.
To maintain this argument, however, we must simulate
the complete oracle answers on all previously inserted
pairs. Unlike BHV's verification oracle, our oracle
also returns a hash bit.

This is why we encode both $i$ and $x$ into the challenge
$z=\bin(i,x)$.
Each stored valid pair then unlocks only one hash value,
$H_i(x)$, which the no-query simulator guesses and records.
A fresh guess costs only a factor of $1/2$ per round.
If the challenge encoded only $x$, the same stored
codeword could unlock $H_j(x)$ for every $j\in[n]$.
Simulating subsequent executions could then require
guessing an entire $n$-bit hash column, introducing a
per-round loss too large for this argument.
Including the coordinate index keeps the simulation
cost compatible with the lower bound.

Consequently, guessing the classical witness and
performing $L$ extraction rounds yields a no-query
guesser whose success probability is at least
\[
2^{-t(\lambda)}
\left(\frac{1}{\operatorname{poly}(\lambda)}\right)^L.
\]
On the other hand, the near-optimal list recovery of
our code forces $L$ distinct valid pairs to prescribe
hash values at $\Omega(nL)$ distinct coordinate-symbol
locations. Under the biased hash distribution, any
no-query guesser succeeds with probability at most
\[
(1-\lambda^{-4})^{\Omega(nL)}
=\exp(-\Omega(\lambda L)).
\]
Choosing $L$ large enough to absorb the witness-guessing
cost makes these bounds incompatible.
Thus the locked hash access enables efficient quantum
cloning while preserving the classical-witness
lower bound.

\subsection{Open Problems}
\label{subsec:open-problems}

\paragraph{Separating $\QMA$ from $\cQMA$.}
Chatterjee, Kundu, and Podder~\cite{CKP:2025} establish
a quantum-oracle separation between $\QMA$ and $\cQMA$,
through the stronger notion of strictly uncloneable
proofs. Our conjecture below would strengthen the
$\QMA$-versus-$\cQMA$ separation from a quantum oracle
to a classical oracle.

Bostanci, Haferkamp, Nirkhe, and Zhandry~\cite{BHNZ26}
suggest that their sampling argument extends to this
setting, and the first version of~\cite{BHV:2026}
makes a similar remark.
However, extending the classical-witness argument to
oracle-assisted quantum cloning requires an additional
step. A classical witness can be reused without oracle
queries, whereas a quantum cloning algorithm may
query the hidden oracle. These queries remain even
when the verifier's queries are replaced by a simulated
restricted oracle. Furthermore, cloning correctness
under the original oracle need not persist under
this replacement. Thus, clonability alone does not
yield the no-query guesser needed by that argument.
This obstacle disappears for query-free cloners,
but $\cQMA$ permits oracle-assisted cloning.

A possible approach is to extend the guessing bound
of~\cite[Lemma~6.6]{BHV:2026} to algorithms making
polynomially many queries to $O[H,E]$, accounting for
both cloning and sample extraction.
We have not obtained a sufficiently strong bound.
We conjecture that BHV's code intersection subset
size problem lies outside $\cQMA$, which would imply
the following separation.

\begin{conjecture}
There exists a classical oracle $O$ such that
\[
\QMA^O\not\subseteq\cQMA^O.
\]
\end{conjecture}

\begin{figure}[t]
\centering
\begin{tikzpicture}[
    scale=0.8,
    transform shape,
    class boundary/.style={draw=black, thick},
    annotation/.style={font=\small, align=left},
    pointer/.style={
        -{Stealth[length=2mm]},
        semithick,
        shorten <=2pt
    }
]
    % Each inner ellipse is shifted downward.
    \draw[class boundary]
        (0,0) ellipse (4 and 2.6);
    \draw[class boundary]
        (0,-0.35) ellipse (2.65 and 1.8);
    \draw[class boundary]
        (0,-0.7) ellipse (1.35 and 0.95);

    % Labels placed in the enlarged upper gaps.
    \node at (0,2.02) {$\QMA$};
    \node at (0,0.7) {$\cQMA$};
    \node at (0,-0.7) {$\QCMA$};

    % Our problem: inside clonableQMA, outside QCMA.
    \coordinate (ours) at (1.9,-0.5);
    \fill (ours) circle (2pt);
    \draw[pointer]
        (ours) -- (5.15,0.55);
    \node[annotation, anchor=west]
        at (5.25,0.55) {Our problem};

    % Conjectured placement: inside QMA, outside clonableQMA.
    \coordinate (bhv) at (3.15,-0.95);
    \fill (bhv) circle (2pt);
    \draw[pointer, dashed]
        (bhv) -- (5.15,-1.45);
    \node[annotation, anchor=west]
        at (5.25,-1.45) {BHV's problem\\(conjectured)};
\end{tikzpicture}
\caption{
The hierarchy
$\QCMA\subseteq\cQMA\subseteq\QMA$.
The clonable code-intersection subset size problem
separates $\cQMA$ from $\QCMA$ relative to a classical
oracle.
We conjecture that BHV's code intersection subset
size problem~\cite{BHV:2026} separates $\QMA$ from
$\cQMA$ relative to a classical oracle.
A quantum-oracle
separation is already known~\cite{CKP:2025}.
}
\label{fig:clonable-qma-hierarchy}
\end{figure}

%\paragraph{Good-code-based quantum cryptography.}\alper{maybe remove?}
%Our construction provides a good-code-based approach to quantum cryptography: the same code-intersection advice state supports different functionalities depending on the classical oracle interface. We prove quantum-fire security using the query-extraction and guessing bounds underlying our complexity separation. We also propose a quantum-money mini-scheme and an honestly generated signature token, whose security remains conjectural.

%For quantum money, the main challenge is to rule out the production of two accepted states from one banknote. This may be related to the conjectured separation between $\QMA$ and $\cQMA$, although we do not know a reduction. For the signature token, the corresponding challenge is to rule out signatures on two distinct messages from one issued state. Obtaining standard one-shot signatures additionally requires a generation mechanism secure against malicious key generation. Establishing these properties would extend the good-code-based approach beyond quantum fire.

\paragraph{Quantum fire beyond classical-witness hardness.}
Goldin, Morimae, Mutreja, and Yamakawa \cite{GMMY:2024}
study \emph{CountCrypt}, a class of quantum cryptographic
primitives that can survive $\BQP=\QCMA$ but are broken
when $\BQP=\mathsf{PP}$.
They construct a unitary-oracle world in which
$\BQP=\QCMA$ while quantum-computation-classical-communication
commitments and multiparty noninteractive key exchange
exist.
This motivates asking whether comparable phenomena
can be obtained relative to purely classical oracles,
and whether quantum fire can survive such a collapse.

For constructions whose flame states are uniform
superpositions over hidden linear subspaces, a basis
provides a polynomial-size classical description from
which the state can be prepared.
Whenever a proposed basis can be checked efficiently
using the public verification procedure, recovering
such a description becomes a classical-witness search
problem. This presents an obstacle to security against
$\QCMA$-powered reconstruction.

Our current construction has a related obstacle:
the hidden index $u$ is itself a succinct classical
preparation key.
Given the serial number $s=P(u)$, the predicate
\[
\exists\,v\in\{0,1\}^{\lambda}:
\quad v\text{ extends }a
\quad\land\quad P(v)=s
\]
belongs to $\operatorname{NP}$ relative to the public
oracle. If $\QCMA=\BQP$ relative to that oracle,
an adversary can decide these predicates and recover
$u$ one bit at a time.
It can then use $\mathbf{H}(u,\cdot,\cdot)$ to prepare
a fresh flame.
Consequently, the present construction does not
survive this collapse.

A natural open problem is therefore to design a
good-code-based generation mechanism that avoids
such classical preparation keys, and to prove its
security in a classical-oracle world satisfying
$\QCMA=\BQP$.
A $\mathsf{PSPACE}$-style oracle augmentation is one possible
route to a complexity collapse, but its effect on
security must be analyzed explicitly.
Such a result would connect quantum fire to the
classical-oracle questions motivated by CountCrypt;
it would not by itself establish that quantum fire
is a CountCrypt primitive.

\subsection{Organization}
In \Cref{sec:preliminaries}, we recall the quantum query model and
the complexity classes that we use.

Part~I contains our technical tools. In \Cref{tools:sec-coding}, we
define good code families and recall a family of multiplicity codes
satisfying the definition. In \Cref{tools:sec-code-intersection}, we
introduce the biased code intersection problem and prove a guessing
bound with classical side information. Then, we describe the code
intersection state and define good hashes for which it solves every
target under consideration. We also show that biased hashes are
good with overwhelming probability for our parameters. In
\Cref{tools:sec-cloning}, we define self-encrypted oracles and prove
that they allow the code intersection state to be cloned.

In Part~II, we prove our oracle separations. In \Cref{sec:QMA}, we
give a $\cQMA$ proof system for the self-encrypted CISS problem and
prove that no $\QCMA$ proof system solves it. Then, in
\Cref{sec:BQP}, we separate $\cBQPqpoly$ from $\BQPpoly$ by first
proving search hardness and then passing to decision using Yao's
box problem.

Finally, in \Cref{sec:crypto-applications}, we define a classical
oracle and give our quantum fire construction relative to it.
We state its verification correctness, cloning correctness, and
noninteractive untelegraphability guarantees there. The formal
definitions appear in \Cref{app:crypto-definitions}, and we give
the proof in \Cref{app:missing-security}.
\newpage\section{Preliminaries}
\label{sec:preliminaries}
\subsection{Notation}
We write $\operatorname{Pr}$ and $\mathbb{E}$ for probability
and expectation. A nonnegative function $\epsilon(\lambda)$
is negligible if, for every constant $c>0$,
$\epsilon(\lambda)\leq\lambda^{-c}$ for all sufficiently
large $\lambda$; we denote such functions by
$\operatorname{negl}(\lambda)$.

For density operators $\rho,\sigma$, write
\[
D(\rho,\sigma):=\frac12\|\rho-\sigma\|_1.
\]
Trace distance is contractive under quantum channels and
bounds the difference in the probability of any measurement
event. For a pure target state, we use the fidelity convention
\[
F(\rho,|\psi\rangle):=
\sqrt{\langle\psi|\rho|\psi\rangle}.
\]

\subsection{Quantum Query Model}
\label{subsec:quantum-query-model}

A classical oracle $f:\mathcal{X}\to\{0,1\}^{r}$ is accessed
through the unitary
\[
U_f|x,y\rangle=|x,y\oplus f(x)\rangle.
\]
A $T$-query quantum algorithm interleaves $T$ applications
of $U_f$ with oracle-independent quantum operations.
Measurements may be deferred by purification.
In query lower bounds, computation between queries is
unrestricted; in efficient algorithms, the entire computation
has polynomial size. All algorithms compared in a hybrid
argument start from the same state.

\begin{lemma}[Hybrid Lemma {\cite{BBBV:1997}}]
\label{lem:query-hybrid}
Let $f,g:\mathcal{X}\to\{0,1\}^{r}$ agree outside
$S\subseteq\mathcal{X}$.
Consider a $T$-query algorithm with final states
$\rho^f$ and $\rho^g$. Let $\rho_t^f$ be its state
immediately before its $t$th query when using $f$, and put
\[
p_t(S):=
\operatorname{Tr}\bigl[
  (\Pi_S\otimes I)\rho_t^f
\bigr],
\qquad
\Pi_S:=\sum_{x\in S}|x\rangle\langle x|.
\]
Then
\[
D(\rho^f,\rho^g)
\leq 2\sum_{t=1}^{T}\sqrt{p_t(S)}
\leq 2\left( T\sum_{t=1}^{T}p_t(S)\right)^{1/2}.
\]
In particular, the same bound applies to the difference
between the two acceptance probabilities.
\end{lemma}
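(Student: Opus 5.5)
The plan is the standard BBBV argument: compare the two runs query by query, bound the divergence each single oracle substitution introduces, and sum by the triangle inequality; the second inequality is then Cauchy--Schwarz.

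\textbf{Purifying and writing the runs.} Purify both runs, so that with oracle $f$ the state before the $t$th query is the pure state $\ket{\phi_t}$. Thus $\rho_t^f=\ket{\phi_t}\bra{\phi_t}$, and the final state is
\[
\ket{\phi_{T+1}}=V_T U_f V_{T-1}\cdots U_f V_0\ket{\mathrm{init}}.
\]
The run with $g$ is the same expression with $U_f$ replaced by $U_g$. Introduce hybrids $\ket{\chi_k}$, $k=0,\dots,T$, in which the first $k$ queries use $g$ and the remaining ones use $f$. Then $\ket{\chi_0}=\ket{\phi_{T+1}}$ and $\ket{\chi_T}$ is the $g$-run.

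\textbf{One substitution step.} The hybrids $\ket{\chi_{t-1}}$ and $\ket{\chi_t}$ differ only at query $t$. Both run the same $f$-computation up to that point, so the input to query $t$ is $\ket{\phi_t}$. Since every subsequent operation is unitary, the triangle inequality gives
\[
\bigl\|\ket{\chi_{t-1}}-\ket{\chi_t}\bigr\|_2=\bigl\|(U_f-U_g)\ket{\phi_t}\bigr\|_2 .
\]
Because $f=g$ outside $S$, we have $U_f-U_g=(U_f-U_g)(\Pi_S\otimes I)$. Moreover $\|U_f-U_g\|_{\mathrm{op}}\le 2$, so this quantity is at most
\[
2\bigl\|(\Pi_S\otimes I)\ket{\phi_t}\bigr\|_2=2\sqrt{p_t(S)}.
\]

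\textbf{Summing and converting to trace distance.} Summing over $t$ gives
\[
\bigl\|\ket{\phi_{T+1}}-\ket{\chi_T}\bigr\|_2\le 2\sum_{t=1}^T\sqrt{p_t(S)}.
\]
For pure states, $D\le\|\psi-\phi\|_2$; indeed $D=\sqrt{1-|\langle\psi|\phi\rangle|^2}\le\|\psi-\phi\|_2$. Tracing out the purification is contractive in trace distance, so the same bound holds for the mixed final states $\rho^f,\rho^g$. This yields the first inequality.

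\textbf{Second inequality and acceptance probabilities.} The second inequality is Cauchy--Schwarz:
\[
\sum_{t=1}^T\sqrt{p_t}\le\sqrt{T\sum_{t=1}^T p_t}.
\]
The claim about acceptance probabilities follows because trace distance bounds the difference in the probability of any measurement event.

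\textbf{Main obstacle.} The only delicate point is the bookkeeping in the hybrid step. One must check that the state fed into query $t$ in both compared hybrids is exactly the $f$-run state $\ket{\phi_t}$. That is why $p_t(S)$ is measured with respect to $f$, and why both runs must start from the same initial state.
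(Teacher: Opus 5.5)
The paper gives no proof of its own here; it cites BBBV. Your outline is that standard argument: telescoping hybrids, $U_f-U_g=(U_f-U_g)(\Pi_S\otimes I)$, $\|U_f-U_g\|_{\mathrm{op}}\le 2$, the pure-state trace-distance bound, and Cauchy--Schwarz. However, the hybrid step, which you yourself single out as the delicate point, is wrong as written.

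You define $\ket{\chi_k}$ to use $g$ for the \emph{first} $k$ queries and $f$ for the rest. Then $\ket{\chi_{t-1}}$ and $\ket{\chi_t}$ both apply $g$ at queries $1,\dots,t-1$. So the state entering query $t$ is the $g$-run state $\ket{\phi_t^g}$, not $\ket{\phi_t}=\ket{\phi_t^f}$. Your claim that ``both run the same $f$-computation up to that point'' is false for these hybrids. What your argument actually proves is $D(\rho^f,\rho^g)\le 2\sum_t\sqrt{p_t^g(S)}$, with the weights taken in the $g$-run. That is true by symmetry, but it is not the stated lemma.

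The asymmetry matters. In the paper's applications (e.g.\ \Cref{lem:qcma-guesser} and the advice and fire lower bounds), the weights must be measured in the restricted-oracle execution. That is the only execution the no-query guesser can simulate.

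The fix is to reverse the order. Let $\ket{\chi_k}$ use $f$ for the first $k$ queries and $g$ for the remaining $T-k$. Then $\ket{\chi_0}$ is the $g$-run and $\ket{\chi_T}$ is the $f$-run. Adjacent hybrids $\ket{\chi_{t-1}}$ and $\ket{\chi_t}$ share the $f$-prefix, so query $t$ acts on $\ket{\phi_t^f}$. They also share the $g$-suffix $V_TU_gV_{T-1}\cdots U_gV_t$, which is a common unitary. Hence $\|\ket{\chi_t}-\ket{\chi_{t-1}}\|=\|(U_f-U_g)\ket{\phi_t^f}\|\le 2\sqrt{p_t(S)}$.

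This equality comes from unitary invariance of the norm, not the triangle inequality; the triangle inequality is used only when you sum the telescoping differences. With this change the rest of your proof goes through unchanged.
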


\subsection{Hybrid Bound for Quantum Channels}
We need another lemma to compare the algorithmic outcome running by ideal channels and real channels.

\begin{lemma}[Hybrid Bound for Quantum Channels] \label{lem:hybrid-channel} Let $\Phi_1,\ldots,\Phi_T$ and $\widetilde{\Phi}_1,\ldots,\widetilde{\Phi}_T$ be quantum channels. Let $\rho_0=\widetilde{\rho}_0$ be a fixed initial state, define \[ \rho_t=\Phi_t(\rho_{t-1}), \qquad \widetilde{\rho}_t= \widetilde{\Phi}_t(\widetilde{\rho}_{t-1}), \] and suppose that \[ D\!\left( \Phi_t(\widetilde{\rho}_{t-1}), \widetilde{\Phi}_t(\widetilde{\rho}_{t-1}) \right) \le \epsilon_t. \] Then \[ D(\rho_T,\widetilde{\rho}_T) \le \sum_{t=1}^T\epsilon_t. \]
The same statement continues to hold if arbitrary quantum channels, including measurements, are interleaved between the subroutines.
\end{lemma}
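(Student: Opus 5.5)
The plan is the standard telescoping (hybrid) argument combined with contractivity of trace distance under channels. I would prove by induction on $t$ that
\[
D(\rho_t,\widetilde{\rho}_t)\le\sum_{s=1}^{t}\epsilon_s .
\]
The base case $t=0$ holds trivially, since $\rho_0=\widetilde{\rho}_0$.

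For the inductive step, insert the intermediate state $\Phi_t(\widetilde{\rho}_{t-1})$ and use the triangle inequality:
\[
D(\rho_t,\widetilde{\rho}_t)
\le D\bigl(\Phi_t(\rho_{t-1}),\Phi_t(\widetilde{\rho}_{t-1})\bigr)
+D\bigl(\Phi_t(\widetilde{\rho}_{t-1}),\widetilde{\Phi}_t(\widetilde{\rho}_{t-1})\bigr).
\]
The second term is at most $\epsilon_t$ by hypothesis. The first term is at most $D(\rho_{t-1},\widetilde{\rho}_{t-1})$, because trace distance is contractive under the quantum channel $\Phi_t$, as recalled in the notation section. Applying the inductive hypothesis to this term gives the bound at step $t$, and taking $t=T$ gives the claim.

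It is worth noting why the intermediate state must be $\Phi_t(\widetilde{\rho}_{t-1})$ and not $\widetilde{\Phi}_t(\rho_{t-1})$. The hypothesis only controls the discrepancy between $\Phi_t$ and $\widetilde{\Phi}_t$ on the ``real'' trajectory $\widetilde{\rho}_{t-1}$, so this is the only ordering that lets us use it.

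For the extension, let arbitrary channels $\Psi_t$ be interleaved, including measurements written as channels with a classical output register. Absorb each $\Psi_t$ into the subroutines by setting $\Phi'_t:=\Phi_t\circ\Psi_t$ and $\widetilde{\Phi}'_t:=\widetilde{\Phi}_t\circ\Psi_t$. The hypothesis should then be read as $D(\Phi_t(\sigma),\widetilde{\Phi}_t(\sigma))\le\epsilon_t$ for the state $\sigma=\Psi_t(\widetilde{\rho}_{t-1})$ that actually reaches the $t$-th subroutine on the tilde trajectory. Any channel applied after the last subroutine only decreases the distance, again by contractivity.

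The one point requiring care is exactly this reinterpretation of $\epsilon_t$: it must bound the error on the state actually fed to the $t$-th subroutine in the interleaved process. With that convention, the argument above goes through verbatim.
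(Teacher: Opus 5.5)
Your proposal is correct and matches the paper's proof: insert $\Phi_t(\widetilde{\rho}_{t-1})$, apply the triangle inequality and contractivity of trace distance under $\Phi_t$, and iterate. For interleaved channels, you absorb each $\Psi_t$ into the subroutines and read $\epsilon_t$ as the error on the state actually fed to the $t$th subroutine; this is a slightly more careful version of the paper's one-line remark that intermediate channels are contractive.
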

\begin{proof} The triangle inequality and contractivity of trace distance under quantum channels give \begin{align*} D(\rho_t,\widetilde{\rho}_t) &= D\!\left( \Phi_t(\rho_{t-1}), \widetilde{\Phi}_t(\widetilde{\rho}_{t-1}) \right)\\ &\le D\!\left( \Phi_t(\rho_{t-1}), \Phi_t(\widetilde{\rho}_{t-1}) \right) + D\!\left( \Phi_t(\widetilde{\rho}_{t-1}), \widetilde{\Phi}_t(\widetilde{\rho}_{t-1}) \right)\\ &\le D(\rho_{t-1},\widetilde{\rho}_{t-1}) +\epsilon_t. \end{align*} Iterating this inequality proves \[ D(\rho_T,\widetilde{\rho}_T) \le \sum_{t=1}^T\epsilon_t. \]
Measurements and general intermediate quantum operations cause no difficulty, since they are quantum channels and trace distance is contractive under every quantum channel. \end{proof}

\subsection{Complexity Theory}
\label{subsec:complexity-theory}

All quantum algorithms below are uniform polynomial-time
algorithms. Witnesses and advice have polynomially many
bits or qubits.

\begin{definition}[$\QMA$, $\QCMA$ and $\cQMA$]
A promise problem $\mathcal{L}=(\mathcal{L}_{\mathrm{Yes}},\mathcal{L}_{\mathrm{No}})$ belongs to $\QMA$ if there are a
verifier $V$ and a polynomial $m$ such that, for
$\lambda=|x|$,
\[
\begin{aligned}
x\in \mathcal{L}_{\mathrm{Yes}}
&\implies
\exists\text{ $m(\lambda)$-qubit proof }\ket{\psi_x}\text{ such that }
\operatorname{Pr}[V(x,|\psi_x\rangle)=1]\geq \frac23,\\
x\in \mathcal{L}_{\mathrm{No}}
&\implies
\forall\text{ $m(\lambda)$-qubit }\ket{\psi_x}, \operatorname{Pr}[V(x,\rho)=1]\leq \frac13.
\end{aligned}
\]
The class $\QCMA$ is defined analogously, with the proof
restricted to a classical string
$w_x\in\{0,1\}^{m(\lambda)}$.

The class $\cQMA$ additionally requires a polynomial-time
cloning algorithm $\operatorname{Clone}$ such that, for every
$x\in L_{\mathrm{YES}}$, there is a proof
$|\psi_x\rangle$ satisfying both completeness and
\[
F\left(
  \operatorname{Clone}
    (x,|\psi_x\rangle\langle\psi_x|),
  |\psi_x\rangle^{\otimes 2}
\right)
\geq 1-\negl(\lambda).
\]
Here the cloning algorithm outputs two witness registers,
with all other registers traced out.
\end{definition}

Cloning is required only for the designated honest
witnesses; soundness still quantifies over every witness.
This is the negligible-error fidelity version of
clonable quantum proofs considered in~\cite{NZ:2024}. Obviously,
\[
\QCMA\subseteq\cQMA\subseteq\QMA.
\]

\begin{definition}[$\BQPpoly$, $\BQPqpoly$ and $\cBQPqpoly$]
A promise problem $\mathcal{L}=(\mathcal{L}_{\mathrm{Yes}},\mathcal{L}_{\mathrm{No}})$ belongs to $\BQPqpoly$ if there
are an algorithm $A$ and a family of polynomial-size
advice states $\{|\alpha_\lambda\rangle\}_{\lambda\in\mathbb{N}}$
such that, for every promised input $x$ of length $\lambda$,
\[
\operatorname{Pr}
 \bigl[A(x,|\alpha_\lambda\rangle)=L(x)\bigr]
\geq \frac23,
\]
where $L(x)=1$ on $\mathcal{L}_{\mathrm{Yes}}$ and $L(x)=0$ on
$\mathcal{L}_{\mathrm{No}}$.
The advice depends only on the input length, and need
not be efficiently preparable.

The class $\BQPpoly$ restricts this advice to a
polynomial-length classical string.
The class $\cBQPqpoly$ additionally requires a
polynomial-time algorithm $\operatorname{Clone}$ satisfying
\[
F\left(
 \operatorname{Clone}
   (1^\lambda,|\alpha_\lambda\rangle
                    \langle\alpha_\lambda|),
 |\alpha_\lambda\rangle^{\otimes 2}
\right)
\geq 1-\negl(\lambda).
\]
as above.
Removing advice altogether gives $\BQP$.
\end{definition}

Hence,
\[
\BQP\subseteq\BQPpoly
\subseteq\cBQPqpoly
\subseteq\BQPqpoly.
\]

\paragraph{Classical oracle relativization.}
For a fixed classical oracle $O:\{0,1\}^*\to\{0,1\}$,
the relativized classes are defined by allowing every
verifier, decision algorithm, and cloning algorithm
coherent access to $U_O$.
Witnesses may depend on both $x$ and $O$, while advice
may depend on $O$ and the input length, but not on the
particular input of that length.
The underlying oracle remains a classical function,
although queries to it may be quantum.

\paragraph{Diagonalization and clonability.}
Our lower bounds are first formulated as oracle-input
problems: the instance is specified by the oracle supplied
to the algorithm. Standard diagonalization places suitable
instances in disjoint blocks of one fixed oracle and
encodes their labels as classical inputs, yielding the
claimed relativized language or promise-problem
separations. We omit these details, following the
diagonalization framework of~\cite[Appendix~B]{BHV:2026}.

Clonability is preserved by this construction. Indeed, the global
cloning algorithm uses the same honest state and simulates each
query to the instance oracle by accessing its block. This coherent
simulation is exact, so it leaves the cloning channel and its
fidelity bound unchanged. We use efficiently computable block
addresses and input lengths polynomially related to the original
parameter. Thus, the running time and state size remain polynomial,
and the error remains negligible.

In the advice setting, we place at most one instance at each
selected input length, so the same advice state serves every input
of that length. The cloner can therefore locate the block using
only the input length. At unused lengths, we use a fixed basis
state, which can be copied exactly.\footnote{A classical flag in
the advice identifies unused lengths and is copied along with the
state; the decision algorithm rejects at those lengths. The selected
advice blocks can be chosen to have good hashes: otherwise, choosing
successful classical advice for every good block would give average
success at least $\frac23(1-2^{-2\lambda})>\frac35$, contradicting
\Cref{thm:advice-oracle-input}. In the proof setting, good hashes are
already part of the YES promise, and the input identifies the block.}

%\section{Separating $\QMA$ from $\cQMA$}
%\subsection{Biased Compressed Oracle}
%\subsection{Reduction from $\cQMA$-Hardness to Biased Inversion Problem}

\newpage\part{Technical Tools}
\section{Coding Theory}
\label{tools:sec-coding}

In this section, we collect the coding-theoretic properties that
we will use. First, we define a good code family by its construction,
decoding, and combinatorial properties. Then, we recall a family of
multiplicity codes satisfying these properties.

Let $C\subseteq\Sigma^n$, where $\Sigma=\mathbb F_q^s$, be an
$\mathbb F_q$-linear code. Define its dual using the standard
$\mathbb F_q$-bilinear pairing:
\[
C^\perp
 :=\left\{
 u\in\Sigma^n:
 \sum_{i=1}^{n}\sum_{j=0}^{s-1}u_{i,j}c_{i,j}=0
 \text{ for every }c\in C
 \right\}.
\]
Throughout, Hamming weight and distance count nonzero
$\Sigma$-symbols, rather than individual field elements.

We now give the definition of a good code family. We leave its
parameters general here and specify them in the existence theorem
below.

\begin{definition}[Good code family]\label{tools:def-good-code}
Let $n,q,s,r,L_{\max}$ be functions of a parameter $\lambda$, with
$n,s\ge1$ integers, $q$ a prime power,
$0\le r\le n$ and $L_{\max}\ge1$ integers,
and $n,s,\log q=\operatorname{poly}(\lambda)$.
A family of linear codes
$C_\lambda\subseteq\Sigma^n=(\mathbb F_q^s)^n$ is
\emph{$(r,L_{\max})$-good} if, for all sufficiently large $\lambda$,
it has the following properties.
\begin{enumerate}
    \item \textbf{Efficient construction and dual decoding.}
    A uniform polynomial-time algorithm constructs the field
    representation and a full-rank generator matrix of $C_\lambda$.
    Field arithmetic is efficient, and there is a uniform
    polynomial-time decoder satisfying
    \[
    \operatorname{Dec}_{C_\lambda^\perp}(u+e)=u
    \quad\text{for every }u\in C_\lambda^\perp
    \text{ and every }e\in\Sigma^n\text{ with }\operatorname{wt}(e)\le r.
    \]
    \item \textbf{List recovery.} For every integer
    $1\le L\le L_{\max}$, $C_\lambda$ is $(L,2L)$-list recoverable:
    if $S_1,\ldots,S_n\subseteq\Sigma$ satisfy
    \[
    \frac1n\sum_{i=1}^n|S_i|\le L,
    \]
    then at most $2L$ codewords $c\in C_\lambda$ satisfy
    $c_i\in S_i$ for every $i\in[n]$.
    \item \textbf{Coordinate-symbol diversity.} For every integer
    $1\le L\le L_{\max}$ and every $L$ distinct codewords
    $c^{(1)},\ldots,c^{(L)}\in C_\lambda$,
    \begin{equation}
    \sum_{i=1}^n\bigl|\{c_i^{(j)}:j\in[L]\}\bigr|\ge\frac{nL}{2}.
    \label{tools:eq-code-diversity}
    \end{equation}
\end{enumerate}
The last condition is an explicit expansion requirement; we do not
infer its exact constant from list recovery at integer list sizes.
\end{definition}

\begin{remark}
The generator matrix also permits preparation of
$|C_\lambda|^{-1/2}\sum_{c\in C_\lambda}|c\rangle$ without oracle
queries, to trace-distance error $\delta$ in time
$\operatorname{poly}(\lambda,\log(1/\delta))$.
Prepare uniform field coefficients, apply the linear encoding
reversibly, and uncompute the coefficient register using a left
inverse of the encoding obtained by Gaussian elimination.
\end{remark}

We now recall the code family of~\cite{BHV:2026} and verify that it
satisfies the definition above.

\begin{theorem}[Existence of good code families]
\label{tools:thm-good-code-existence}
We use the code and code parameters of~\cite{BHV:2026}.
For sufficiently large $\lambda$, choose $q$ to be the smallest
prime greater than $\lambda^5$, and set
\[
\begin{array}{lll}
    \lambda^5\le q\le2\lambda^5 &\text{a prime number,}&\\
    k=\lambda^3 &\text{the message dimension,}&\\
    s=\lambda &\text{the number of field elements in each symbol,}&\\
    n=q &\text{the codeword length,}&\\
    \Sigma=\mathbb F_q^s &\text{the alphabet.}&
\end{array}
\]
For a polynomial $f\in\mathbb F_q[X]$, its $j$th Hasse derivative
$f^{(j)}$ is defined by
\[
f(X+T)=\sum_{j\ge0}f^{(j)}(X)T^j.
\]
Fix an ordering $\alpha_1,\ldots,\alpha_n$ of $\mathbb F_q$,
with $n=q$. The univariate multiplicity code is
\[
\operatorname{Mult}_{s,\mathbb F_q,k}
 :=\left\{
   \left(
     \bigl(f^{(0)}(\alpha_i),\ldots,
           f^{(s-1)}(\alpha_i)\bigr)
   \right)_{i\in[n]}
   : f\in\mathbb F_q[X],\ \deg f<k
 \right\}
 \subseteq\Sigma^n.
\]
For $k<ns$, this is an $\mathbb F_q$-linear code of dimension $k$,
and its size is $q^k$.

Then $C_\lambda:=\operatorname{Mult}_{s,\mathbb F_q,k}$ is an
$(r,L_{\max})$-good family with
\[
r=\lfloor\lambda^2/2\rfloor,\qquad L_{\max}=2^\lambda.
\]
Moreover,
\[
\operatorname{dist}(C_\lambda^\perp)
\ge\frac{k+1}{s}>\lambda^2.
\]
\end{theorem}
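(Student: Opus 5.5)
We verify the three properties of \Cref{tools:def-good-code} for $C_\lambda=\operatorname{Mult}_{s,\mathbb F_q,k}$ in turn, together with the dual-distance bound.

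\textbf{Construction and dual decoding.} The generator matrix is explicit: row $m$ (for $X^m$, $m<k$) has entries $\binom{m}{j}\alpha_i^{m-j}$ in coordinate $(i,j)$. It is computable in $\operatorname{poly}(\lambda)$ time since $q\le 2\lambda^5$. Full rank follows because $k=\lambda^3<q$ and the $j=0$ block is already a Vandermonde matrix.

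\emph{Dual.} By the standard duality for multiplicity codes, used in~\cite{BHV:2026}, $C_\lambda^\perp$ is, up to an invertible symbol-wise linear map, again a multiplicity-type code. Its codewords are evaluations with multiplicity $s$ of polynomials of degree $<ns-k$, twisted by a fixed per-coordinate invertible transformation. A nonzero dual codeword of weight $w$ would force a polynomial identity of degree $<k$ to vanish with multiplicity $s$ at $n-w$ points. Equivalently, counting roots with multiplicity in the primal relation, the dual distance is at least $\lceil (k+1)/s\rceil$, which is $\lambda^2+1/\lambda>\lambda^2$.

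\emph{Decoder.} An efficient unique decoder for multiplicity codes, for example a Berlekamp--Welch-style algorithm of Guruswami--Wang/Kopparty, corrects fewer than half the distance. That is at least $\lfloor\lambda^2/2\rfloor=r$ symbol errors. Conjugating by the fixed transformation gives $\operatorname{Dec}_{C^\perp}$ in polynomial time. I would cite~\cite{BHV:2026} for this part rather than re-derive it.

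\textbf{List recovery.} We invoke the near-optimal list-recovery theorem for univariate multiplicity codes (Kopparty--Ron-Zewi--Saraf--Wootters, as used in~\cite{BHV:2026}). With $s=\lambda$ and rate $k/(ns)=\lambda^{-3}$ tiny, it gives $(L,2L)$ list recovery for all $L$ up to $2^\lambda$, once the average list size is at most $L$. The key check is that the required relation among $s$, $k/n$ and $L$ (roughly $L\cdot k/(ns)\ll 1$, up to losses polynomial in $s$ and $L$) holds for $L\le 2^\lambda$. I expect this parameter check to be the main obstacle. The list-size bound in the literature may come with a constant or exponent depending on $L$ and $s$, so I would need to redo the parameter substitution carefully from the precise theorem in~\cite{BHV:2026}.

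\textbf{Coordinate-symbol diversity.} We derive this by a direct distance argument rather than from list recovery. Two distinct codewords agree at coordinate $i$ only if $f-g$ vanishes with multiplicity $s$ at $\alpha_i$, so they agree on at most $(k-1)/s<\lambda^2$ coordinates. For each coordinate $i$, $|\{c^{(j)}_i\}|\ge L-\#\{\text{pairs } j<j' \text{ colliding at } i\}$. Summing over $i$ gives
\[
\sum_i|\{c_i^{(j)}\}|\ge nL-\binom{L}{2}\lambda^2 ,
\]
which is at least $nL/2$ when $L\le n/\lambda^2$. For larger $L$, up to $2^\lambda$, this pairwise bound fails, so I would instead argue by list recovery. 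Suppose the sum were below $nL/2$, and set $S_i=\{c_i^{(j)}\}$. These sets have average size less than $L/2$ yet contain all $L$ codewords. But $(L/2,L)$ list recovery allows at most $L$ codewords (strictly fewer after a slight slack adjustment), a contradiction. The boundary constant needs care here, so I would obtain the slack by applying list recovery at $L'=\lceil L/2\rceil-1$ and treating small $L$ with the pairwise bound.
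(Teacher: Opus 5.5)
Your handling of construction, dual distance, dual decoding and list recovery matches the paper, which also defers to \cite{BHV:2026} for these steps. Two small slips there. In the dual-distance sentence the polynomial should have degree $<ns-k$, not $<k$; that dual multiplicity representation is what gives $n-(ns-k-1)/s=(k+1)/s$. Also, $\lceil (k+1)/s\rceil=\lambda^2+1$, not $\lambda^2+1/\lambda$.

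\textbf{The gap: coordinate-symbol diversity for $L$ beyond about $n/\lambda^2$.} Your pairwise-collision bound is correct, but it only covers $L$ up to about $n/\lambda^2\approx\lambda^3$. For larger $L\le 2^\lambda$, the reduction to integer list recovery cannot reach the constant $1/2$.
\begin{itemize}
\item From $\sum_i|S_i|<nL/2$ you only learn $\frac1n\sum_i|S_i|<L/2$. This average can lie strictly between $\lceil L/2\rceil-1$ and $L/2$.
\item So the hypothesis of $(L',2L')$ list recovery at $L'=\lceil L/2\rceil-1$ (average at most $L'$) need not hold.
\item At $L'=\lceil L/2\rceil$, the conclusion ``at most $2\lceil L/2\rceil\ge L$ codewords'' gives no contradiction.
\end{itemize}
In general, integer-size list recovery yields only $\sum_i|S_i|>n(\lceil L/2\rceil-1)$, which misses $nL/2$ by up to $n$. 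Your small-$L$ pairwise bound cannot absorb this, because the shortfall occurs at every large $L$. This is exactly why the paper remarks, after the definition of good codes, that it does not infer the diversity constant from list recovery at integer list sizes.

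\textbf{The missing ingredient: the expansion theorem for multiplicity codes.} This is \cite{KTS:2022}, stated as \cite[Theorem~4.16]{BHV:2026}, and the paper uses it directly. Join each polynomial $f$ to the pairs $(i,(f^{(j)}(\alpha_i))_{j<s})$. In this bipartite graph, every set of at most $K$ polynomials has at least $A$ times as many neighbours, where $A=q-\frac{k(s+1)}{2}(qK)^{1/(s+1)}$.
\begin{itemize}
\item With $K=2^{\lambda+1}$ we have $(qK)^{1/(s+1)}=2q^{1/(\lambda+1)}\to2$. So the subtracted term is $O(\lambda^4)$ while $q/2\ge\lambda^5/2$, giving $A>q/2$ for large $\lambda$.
\item Neighbours are precisely coordinate-symbol pairs, so any $L\le2^\lambda$ distinct codewords touch at least $AL\ge nL/2$ pairs. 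This settles all $L$ at once and makes the pairwise argument unnecessary.
\item The theorem is phrased for formal derivatives. These are $j!$ times Hasse derivatives, and $j!\neq0$ since $q>s$, so this invertible per-symbol map preserves neighbourhood sizes.
\end{itemize}
Alternatively, you could repair your argument with a list-recovery statement valid for non-integer average list size. But that statement is essentially this expansion bound, which also implies the integer $(L,2L)$ list recovery you cite. Expansion is therefore the more basic property to invoke here.
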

\begin{proof}
The list-recovery property is~\cite[Corollary~4.18]{BHV:2026}.
For the diversity bound, the expansion result of~\cite{KTS:2022},
stated in~\cite[Theorem~4.16]{BHV:2026}, gives expansion
\[
A=q-\frac{k(s+1)}2(qK)^{1/(s+1)}>q/2,
\qquad K=2^{\lambda+1},
\]
for sufficiently large $\lambda$, as in the proof of
\cite[Corollary~4.17]{BHV:2026}.
Its neighbors are precisely the coordinate-symbol pairs for the
formal-derivative code. Formal derivatives are $j!$ times Hasse
derivatives, and these factors are nonzero because $q>s$.
This invertible change of symbols preserves all neighborhood sizes.
Thus $L\le2^\lambda$ distinct codewords touch at least $AL\ge nL/2$
coordinate-symbol pairs, proving~\eqref{tools:eq-code-diversity}.

The dual-distance bound follows from
\cite[Theorem~4.19 and Appendix~A]{BHV:2026}.
Theorem~A.5 and the proof of Corollary~5.23 there give efficient
unique decoding of the dual through invertible coordinate maps of
a multiplicity code. Since $2r\le\lambda^2<\operatorname{dist}(C_\lambda^\perp)$,
this decoder corrects every error of weight at most $r$.
Finally, $q$ can be found in polynomial time in $\lambda$, and
evaluating the first $s$ Hasse derivatives of
$1,X,\ldots,X^{k-1}$ at every $\alpha_i$ constructs a full-rank
generator matrix in polynomial time.
\end{proof}
\newpage\section{Code Intersection Problem}\label{tools:sec-code-intersection}

In this section, we define the biased code intersection problem and
give the tools that we need to study it. We first bound the
probability of guessing solutions, even when the algorithm receives
classical information about the hash. Then, we recall a quantum
state that solves the problem and define good hashes for which the
same state works for every target under consideration. Finally, we
show that biased hashes are good with overwhelming probability for
our choice of parameters.

\paragraph{Biased hash functions and code intersection.}
The biased variant of code intersection was introduced by
G{\"o}{\"o}s et al.~\cite{GGJL:2025} for the all-zero target.
We use its extension to low-Hamming-weight targets
from~\cite{BHV:2026}.
Consider a family of hash functions $H=(H_1,\ldots,H_n)$, where
$H_i:\Sigma\to\{0,1\}$. For $0<p\le1/2$, define the biased
distribution
\[
H\leftarrow\wBias_{n,p,\Sigma}
\]
by making all values $H_i(x)$, indexed by
$(i,x)\in[n]\times\Sigma$, mutually independent, with
$\operatorname{Pr}[H_i(x)=1]=p$.
For $c=(c_1,\ldots,c_n)\in\Sigma^n$, write
\[
H(c):=H_1(c_1)\|\cdots\|H_n(c_n).
\]

\begin{definition}[Code intersection relation]
\label{tools:def-code-intersection}
For a code $C\subseteq\Sigma^n$ and a hash family $H$, define
\[
R_{C,H}:=\{(b,c)\in\{0,1\}^n\times\Sigma^n:
                   c\in C,\ H(c)=b\}.
\]
The code intersection problem on a target $b\in\{0,1\}^n$ asks
for a codeword $c$ with $(b,c)\in R_{C,H}$.
Equivalently, it asks for an element of
$C\cap\prod_{i=1}^nH_i^{-1}(b_i)$.
This includes padded targets $b=z\|0^{n-\ell}$, for any integer
$0\le\ell\le n$ and $z\in\{0,1\}^\ell$.
\end{definition}

The following lemma bounds the probability of producing many
distinct solutions without querying the hash. The bound allows
the algorithm to receive a classical string that depends
arbitrarily on the hash.

\begin{lemma}[No-query guessing with classical side information]
\label{tools:lem-guessing-upper-bound}
For sufficiently large $\lambda$, fix a code $C=C_\lambda$ from
an $(r,L_{\max})$-good family, integers $1\le L\le L_{\max}$
and $t\ge0$, and $0<p\le1/2$.
Let $\Guesser$ be a randomized or quantum algorithm, fixed
independently of $H$, which makes no oracle queries and whose only
$H$-dependent input is a classical string $w\in\{0,1\}^t$.
For an output tuple $T=((b_j,c^{(j)}))_{j=1}^L$, let
$\mathsf{Valid}_H(T)$ mean that its pairs are distinct and
$(b_j,c^{(j)})\in R_{C,H}$ for every $j$. Then
\[
\mathbb E_{H\leftarrow\wBias_{n,p,\Sigma}}\!\left[
 \max_{w\in\{0,1\}^t}
 \Pr[T\leftarrow\Guesser(1^L,w)]{\mathsf{Valid}_H(T)}
\right]
\le\min\{1,\,2^t(1-p)^{nL/2}\}.
\]
In particular, the same upper bound holds for any choice of
$w=w(H)$. No running-time bound on $\Guesser$ is required.
For the parameters of \Cref{tools:thm-good-code-existence},
$p=\lambda^{-4}$ and $L\le2^\lambda$, the bound is at most
\[
\min\{1,\,2^t(1-\lambda^{-4})^{\lambda^5L/2}\}.
\]
\end{lemma}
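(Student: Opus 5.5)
The plan is to first handle the case $t=0$ and then remove the side information with a union bound over $w$. The union bound is cheap because the maximum over $w$ sits inside the expectation: for each fixed $H$,
\[
\max_{w}\Pr[\mathsf{Valid}_H(T)]\le\sum_{w\in\{0,1\}^t}\Pr[T\leftarrow\Guesser(1^L,w)]{\mathsf{Valid}_H(T)}.
\]
Taking expectation over $H$ then gives $2^t$ times the worst no-advice bound. Here we use that, for each fixed $w$, the algorithm $\Guesser(1^L,w)$ is a fixed query-free procedure independent of $H$. The trivial bound $1$ gives the minimum.

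For a fixed $w$, the output distribution of $\Guesser(1^L,w)$ does not depend on $H$. We may therefore swap the order of sampling: first draw $T$, then draw $H\leftarrow\wBias_{n,p,\Sigma}$ independently. It suffices to show that for every fixed tuple $T=((b_j,c^{(j)}))_{j}$ of distinct pairs, $\Pr_H[\mathsf{Valid}_H(T)]\le(1-p)^{nL/2}$. If $T$ is not valid for any $H$, the probability is $0$. So assume every $c^{(j)}\in C$.

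The main step is handling repeated codewords. The pairs are distinct, but two pairs could share a codeword with different targets $b_j$. Such pairs can never both be valid, since $H(c)$ is a function of $c$. So in any tuple with positive probability the codewords $c^{(1)},\dots,c^{(L)}$ are distinct. Coordinate-symbol diversity \eqref{tools:eq-code-diversity} then applies, with $L\le L_{\max}$: the set $P=\{(i,c^{(j)}_i)\}$ of coordinate-symbol locations has size at least $nL/2$.

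Validity prescribes the value of $H_i(x)$ at every $(i,x)\in P$. If two prescriptions at the same location conflict, the probability is $0$. Otherwise we get $|P|$ independent constraints. Each holds with probability $p$ or $1-p$, and both are at most $1-p$ since $p\le1/2$. Hence
\[
\Pr_H[\mathsf{Valid}_H(T)]\le(1-p)^{|P|}\le(1-p)^{nL/2}.
\]
Averaging over $T$ and summing over $w$ gives the claimed bound. The final specialization follows by substituting $n=q\ge\lambda^5$ from \Cref{tools:thm-good-code-existence} and $p=\lambda^{-4}$, using monotonicity in $n$.

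I expect only the distinctness bookkeeping and the justification of the sampling swap (including quantum $\Guesser$) to need care. Since $\Guesser$ makes no queries, its output for a fixed $w$ is a classical random variable independent of $H$.
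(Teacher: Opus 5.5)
Your proposal is correct and follows the paper's proof essentially step for step. You fix $w$ and an output tuple, observe that a tuple valid with positive probability must have distinct codewords, invoke coordinate-symbol diversity to obtain at least $nL/2$ independently prescribed hash bits (each holding with probability at most $1-p$), bound $\max_w$ by $\sum_w$ inside the expectation, and specialize via $n=q\ge\lambda^5$, exactly as the paper does.
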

\begin{proof}
Fix $w$ and an output tuple $T$ of $\Guesser(1^L,w)$.
Its distribution is independent of $H$. If $T$ can be valid, its
codewords must be distinct: a repeated codeword with different
targets cannot satisfy both equations. By~\eqref{tools:eq-code-diversity},
these codewords touch at least $nL/2$ distinct hash inputs $(i,x)$.
The targets prescribe one output bit at every such input.
Inconsistent prescriptions have probability zero; otherwise
independence and $p\le1/2$ give
\[
\Pr[H\leftarrow\wBias_{n,p,\Sigma}]{\mathsf{Valid}_H(T)}
\le(1-p)^{nL/2}.
\]
Averaging over the output distribution gives the same bound for
each fixed $w$. Writing
$a_w(H):=\Pr[T\leftarrow\Guesser(1^L,w)]{\mathsf{Valid}_H(T)}$,
we conclude that
\[
\mathbb E_H[\max_w a_w(H)]
\le\sum_{w\in\{0,1\}^t}\mathbb E_H[a_w(H)]
\le2^t(1-p)^{nL/2}.
\]
The left side is also at most one. The final specialization uses
$n\ge\lambda^5$.
\end{proof}

\subsection{The Code Intersection State and Good Hashes}
\label{tools:subsec-state-and-good-hashes}

We now turn to solving code intersection using a quantum state.
For this part, we only need efficient construction of the code and
error correction for its dual.

For the state and solver below, let $C\subseteq\Sigma^n$, where
$\Sigma=\mathbb F_q^s$, be an $\mathbb F_q$-linear code with an
efficiently computable generator matrix and efficient field
representation and arithmetic. Assume that $C^\perp$
admits efficient unique decoding of up to $r$ symbol errors.
The parameters $n,s,\log q$ are polynomially bounded in $\lambda$.
Fix $0<p\leq 1/4$ with $1/p=\operatorname{poly}(\lambda)$ and
an integer $0\leq\ell\leq n$.

\paragraph{The code intersection state.}
Liu~\cite{Liu:2023} observed that the target-independent
preprocessing of the Yamakawa--Zhandry algorithm can be supplied
as a quantum state, allowing code intersection to be solved
without querying $H$. We use the biased-hash formulation
of~\cite{BHV:2026}.

For $i\in[n]$ and $b\in\{0,1\}$, let
$S_{i,b}:=H_i^{-1}(b)$. When $S_{i,b}\neq\varnothing$,
define
\[
|H_i^{-1}(b)\rangle
 :=\frac{1}{\sqrt{|S_{i,b}|}}
   \sum_{x\in S_{i,b}}|x\rangle.
\]
Also define
\[
|\phi_2\rangle
 :=\frac{1}{\sqrt{|C|}}\sum_{c\in C}|c\rangle.
\]
If every $S_{i,b}$ is nonempty, set the \emph{code intersection state}
to be
\[
\ket{\psi_H}
 :=
 \left(\bigotimes_{i=1}^{n}|H_i^{-1}(0)\rangle\right)
 \otimes
 \left(\bigotimes_{i=1}^{n}|H_i^{-1}(1)\rangle\right)
 \otimes|\phi_2\rangle.
\]
Otherwise, set $\ket{\psi_H}=\ket{\bot}$,
where $\ket{\bot}$ is a distinguished failure state.
This convention makes the state well defined for every
hash family $H$. Its size is $O(ns\log q)=\operatorname{poly}(\lambda)$ qubits.

The Yamakawa--Zhandry algorithm~\cite{YZ:2022} exploits the
Fourier duality between a linear code and its dual.
The biased variant of~\cite{BHV:2026} accommodates codes
whose duals have subconstant relative distance by making
most target hash values zero. Write $\BiasedYZ(\ket{\psi_H},z)$
for this algorithm, using the decoder for $C^\perp$ and target
$z\|0^{n-\ell}$. The algorithm makes no queries to $H$.

We will use two properties of the hash: the solver should succeed
for every target, and both preimage sets at each coordinate should
have sufficient density. The latter property will be used for
cloning in \Cref{tools:sec-cloning}. We collect these requirements
in the following definition.

\begin{definition}[Good hashes]\label{tools:def-good-hash}
For the fixed code, decoder, and parameters above, let $\Good$
be the collection of all hash families $H$ satisfying:
\begin{enumerate}
    \item For every $z\in\{0,1\}^{\ell}$,
    \[
    \Pr[c\leftarrow\BiasedYZ(\ket{\psi_H},z)]{
        c\in C\ \land\ H(c)=z\|0^{n-\ell}}
    \geq 1-2^{-\lambda}.
    \]
    \item For every $i\in[n]$,
    \[
    \frac p2\leq\frac{|H_i^{-1}(1)|}{|\Sigma|}\leq 2p.
    \]
    In particular, $|H_i^{-1}(0)|/|\Sigma|\geq 1-2p\geq 1/2$.
\end{enumerate}
\end{definition}

\begin{lemma}[Probability of good hashes]\label{tools:lem-good-hash-probability}
Suppose $r\geq\ell$, put
\[
\mu:=\Pr[X\leftarrow\operatorname{Bin}(n-\ell,p)]{X>r-\ell},
\]
and assume
$16\mu^{1/4}+2^{-4\lambda}\leq2^{-\lambda}$. Then
\[
\Pr[H\leftarrow\wBias_{n,p,\Sigma}]{H\notin\Good}
\leq2^{\ell}\left(2n(1-p)^{|\Sigma|}+2\sqrt\mu\right)
+2n\exp\left(-p|\Sigma|/8\right).
\]
In particular, for the parameters of
\Cref{tools:thm-good-code-existence}, with
\[
p=\lambda^{-4},\qquad
\ell=\lceil\log_2 n\rceil+\lceil\log_2|\Sigma|\rceil
=\Theta(\lambda\log\lambda),
\]
for all sufficiently large $\lambda$,
\[
\Pr[H\leftarrow\wBias_{n,p,\Sigma}]{H\in\Good}
\geq1-2^{-2\lambda}.
\]
\end{lemma}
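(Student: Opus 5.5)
We bound the failure of each of the two conditions in \Cref{tools:def-good-hash} separately and combine them with a union bound. The term $2n\exp(-p|\Sigma|/8)$ handles the density condition. The term $2^\ell(\cdots)$ handles the solver condition, via a union bound over the $2^\ell$ targets $z$.

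\emph{Density.} For fixed $i$, $|H_i^{-1}(1)|\sim\operatorname{Bin}(|\Sigma|,p)$ with mean $p|\Sigma|$. A multiplicative Chernoff bound gives a probability of at most $\exp(-p|\Sigma|/8)$ for falling below $p|\Sigma|/2$. It gives at most $\exp(-p|\Sigma|/3)\le\exp(-p|\Sigma|/8)$ for exceeding $2p|\Sigma|$. A union bound over $i\in[n]$ and the two tails yields $2n\exp(-p|\Sigma|/8)$. Since $p\le1/4$, the lower bound on $|H_i^{-1}(0)|$ is immediate.

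\emph{Solver, for a fixed $z$.} First, the state is not $\ket{\bot}$ unless some $S_{i,b}$ is empty. This happens with probability at most $n(1-p)^{|\Sigma|}+n p^{|\Sigma|}\le 2n(1-p)^{|\Sigma|}$ because $p\le1/2$; this accounts for the first summand.

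Conditioned on all $S_{i,b}$ being nonempty, I follow the analysis of~\cite{BHV:2026}. Write $\QFT\ket{\phi_1(b)}=\sum_e a_b(e)\ket e$. The only point where the schematic transformation fails is the decoding step, applied to terms with $\operatorname{wt}(e)>r$. For each coordinate $i$, the Fourier amplitude at $e_i=0$ of $\ket{H_i^{-1}(b_i)}$ is $\sqrt{|S_{i,b_i}|/|\Sigma|}$.

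On the last $n-\ell$ coordinates the target bit is $0$, so the probability that $e_i\neq0$ is $1-|S_{i,0}|/|\Sigma|$, i.e.\ the fraction of ones. In expectation over $H$ this fraction is $p$, independently across coordinates. The first $\ell$ coordinates can contribute at most $\ell$ to the weight. Hence the expected (over $H$) weight of the error distribution beyond $r$ is at most $\Pr[X>r-\ell]=\mu$ with $X\sim\operatorname{Bin}(n-\ell,p)$. This uses that, averaging over $H$, $\Pr[e_i\ne0]$ on the padded coordinates equals $\mathbb E[|S_{i,1}|/|\Sigma|]=p$, and that product structure yields a binomial mixture.

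Let $\epsilon_H$ be the mass of the state on $\operatorname{wt}(e)>r$. Then $\mathbb E_H[\epsilon_H]\le\mu$, and Markov's inequality gives $\Pr_H[\epsilon_H>\sqrt\mu]\le\sqrt\mu$. Whenever $\epsilon_H\le\sqrt\mu$, a gentle-measurement/trace-distance argument bounds the deviation of the final output from the ideal state $\ket{\phi_1(b)\cdot\mathbf 1_C}$ by $O(\mu^{1/4})$. One must also check that the ideal state is nonzero with noticeable norm (the normalization issue in~\cite{BHV:2026}). Their analysis gives a failure probability of at most $16\mu^{1/4}+2^{-4\lambda}$, where the $2^{-4\lambda}$ term comes from the approximate preparation of $\ket{\phi_2}$ and the QFT. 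By assumption this is at most $2^{-\lambda}$. The remaining bad event contributes $\sqrt\mu$, and I allot a second $\sqrt\mu$ to the analogous Markov bound controlling the normalization (success amplitude). This gives the term $2\sqrt\mu$.

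I expect this step to be the main obstacle: making precise that the per-$H$ error mass averages to the binomial tail $\mu$, and that small error mass translates into success probability $1-16\mu^{1/4}$ rather than just closeness to an unnormalized state. I would import the corresponding lemma of~\cite{BHV:2026} essentially verbatim, checking only that its constants match.

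\emph{Parameters.} We have $|\Sigma|=q^\lambda\ge 2^{5\lambda\log\lambda}$, so $(1-p)^{|\Sigma|}$ and $\exp(-p|\Sigma|/8)$ are doubly exponentially small and absorb $2^\ell n=\operatorname{poly}(n|\Sigma|)$. For $\mu$, the binomial has mean $(n-\ell)p\le 2\lambda$, while $r-\ell\ge\lambda^2/2-O(\lambda\log\lambda)\ge\lambda^2/3$. A Chernoff bound then gives $\mu\le 2^{-\Omega(\lambda^2)}$. Hence $16\mu^{1/4}+2^{-4\lambda}\le2^{-\lambda}$, and $2^\ell\cdot2\sqrt\mu=n|\Sigma|\cdot2^{-\Omega(\lambda^2)}$. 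Since $\log(n|\Sigma|)=O(\lambda\log\lambda)$, this is at most $2^{-2\lambda-1}$ for large $\lambda$. Summing the bounds gives $\Pr[H\notin\Good]\le2^{-2\lambda}$.
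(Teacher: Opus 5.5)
Your proposal is correct and follows the paper's proof: Chernoff plus a union bound over $i$ for the density condition, then \cite[Theorem~5.22]{BHV:2026} for each fixed target. For the latter, the prefix contributes at most $\ell$ nonzero error symbols and the suffix count is $\operatorname{Bin}(n-\ell,p)$, giving the tail $\mu$; a union bound over the $2^\ell$ targets and the same concrete estimates ($\mu=2^{-\Omega(\lambda^2)}$ from a binomial tail, negligible concentration terms) finish the argument. Your reconstruction of where the $2n(1-p)^{|\Sigma|}$ and $2\sqrt\mu$ terms arise inside that theorem is speculative but harmless, since you, like the paper, ultimately use it as a black box.
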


\begin{lemma}[Biased Yamakawa--Zhandry algorithm]\label{tools:lem-biased-yz}
There exists an efficient algorithm $\BiasedYZ$ such that, for
all sufficiently large $\lambda$, every $H\in\Good$, and every
$z\in\{0,1\}^{\ell}$,
\[
\Pr[c\leftarrow\BiasedYZ(\ket{\psi_H},z)]{
    c\in C\ \land\ H(c)=z\|0^{n-\ell}}
\geq1-2^{-\lambda}.
\]
The algorithm makes no oracle queries.
\end{lemma}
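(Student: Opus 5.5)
The statement to prove is Lemma~\ref{tools:lem-biased-yz}. I would observe that it is nearly a restatement of item~1 of \Cref{tools:def-good-hash}, once $\BiasedYZ$ is fixed to be the algorithm of~\cite{BHV:2026} run on $\ket{\psi_H}$. The plan is therefore: (i) specify $\BiasedYZ$ concretely and check that it is efficient and query-free; (ii) invoke the definition of $\Good$ for the success bound.

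\textbf{Step (i): the algorithm.} On input $\ket{\psi_H}$ and $z\in\{0,1\}^\ell$, let $b=z\|0^{n-\ell}$. For each $i$, select the register $\ket{H_i^{-1}(b_i)}$ from the two preimage registers at coordinate $i$. This selection is classical control on $b$ and involves no query to $H$. The result is $\ket{\phi_1(b)}$ together with $\ket{\phi_2}$. Then run the pipeline of the technical overview:
\begin{enumerate}
\item apply $\QFT$ over $\Sigma^n$ to both registers;
\item add register $B$ into register $A$;
\item apply $\operatorname{Dec}_{C^\perp}$ coherently to $A$ and subtract the result from $B$;
\item apply $\QFT_A^{-1}$ and measure $A$ to obtain $c$.
\end{enumerate}

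For efficiency, note that the $\QFT$ over $\Sigma^n=(\mathbb F_q^s)^n$ factors into $ns$ copies of the $\mathbb F_q$ Fourier transform. These copies are implementable to exponentially small error in time $\operatorname{poly}(\log q)$. The decoder is polynomial-time by hypothesis, and every other step is arithmetic in $\mathbb F_q$. So the whole algorithm runs in polynomial time and makes no oracle queries. If $\ket{\psi_H}=\ket{\bot}$, it outputs $\bot$; this case does not arise for $H\in\Good$, by item~2 of \Cref{tools:def-good-hash}.

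\textbf{Step (ii): the success bound.} With the algorithm fixed, item~1 of \Cref{tools:def-good-hash} states the required inequality for exactly this $\BiasedYZ$, for every $z\in\{0,1\}^\ell$ and every $H\in\Good$. Hence the lemma follows immediately. The implementation errors of the approximate $\QFT$ add at most $2^{-\omega(\lambda)}$. This is absorbed by treating the idealized algorithm as the one referenced in $\Good$, or by tightening the slack in \Cref{tools:lem-good-hash-probability}.

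\textbf{Main obstacle.} The substantive content is not this lemma but the nonemptiness of $\Good$ for the real parameters, which is \Cref{tools:lem-good-hash-probability}. That argument has two parts:
\begin{itemize}
\item the Fourier amplitude $a_b(e)$ concentrates on errors $e$ of weight $\le r$, because $\operatorname{wt}(e)$ behaves like $\ell$ plus $\operatorname{Bin}(n-\ell,p)$;
\item therefore the decoding step is correct except with amplitude roughly $\sqrt\mu$, which gives success $1-O(\mu^{1/4})$ per target, and Markov plus a union bound over $2^\ell$ targets yields the claimed bound.
\end{itemize}
Since that lemma is taken as given, the only care needed here is to state precisely that the algorithm in Lemma~\ref{tools:lem-biased-yz} is the same one named in the definition of $\Good$.
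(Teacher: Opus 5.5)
Your proposal is correct and follows the paper's proof: the paper likewise takes efficiency and query-freeness from the algorithm of \cite[Theorem~5.22]{BHV:2026} together with the efficient dual decoder, and reads the success bound directly off the first condition of $\Good$. Of your two ways of handling implementation error, only the second---letting $\Good$ refer to the implemented algorithm, which is what the paper does---gives exactly $1-2^{-\lambda}$; the idealized reading would only give $1-2^{-\lambda}-2^{-\omega(\lambda)}$ for the implemented algorithm, which is harmless downstream but is not literally the stated bound.
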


We prove these two lemmas in the following subsection.

\subsection{Solving Code Intersection and Probability of Good Hashes}

We first recall why the solver is efficient and correct for a fixed
good hash. Then, we bound the probability that a biased hash is good
and specialize the bound to the code parameters above.

\begin{proof}[Proof of \Cref{tools:lem-biased-yz}]
The algorithm runs in polynomial time by
\cite[Theorem~5.22]{BHV:2026} and the assumed efficiency of the
dual decoder. For every $H\in\Good$, its correctness for every
$z\in\{0,1\}^{\ell}$ is the first condition of
\Cref{tools:def-good-hash}.
\end{proof}

\begin{proof}[Proof of \Cref{tools:lem-good-hash-probability}]
We prove the two conditions in \Cref{tools:def-good-hash} in turn.
First, we consider correctness of the solver.
Put $\epsilon:=16\mu^{1/4}+2^{-4\lambda}$.
For a target $z\|0^{n-\ell}$, the error distribution in
\cite[Definition~5.15]{BHV:2026} has independent coordinates.
There are at most $\ell$ nonzero symbols in the prefix, and the
number of nonzero symbols in the suffix is distributed as
$\operatorname{Bin}(n-\ell,p)$. Thus the error has weight at most
$r$, and is corrected by the dual decoder, except with probability
at most $\mu$. Applying \cite[Theorem~5.22]{BHV:2026} gives, for
each fixed $z$, success probability at least $1-\epsilon$ except
on a set of hashes of probability at most
$2n(1-p)^{|\Sigma|}+2\sqrt\mu$. A union bound over the $2^\ell$
targets gives simultaneous correctness. Since
$\epsilon\leq2^{-\lambda}$, the first condition in
\Cref{tools:def-good-hash} fails with probability at most
$2^{\ell}(2n(1-p)^{|\Sigma|}+2\sqrt\mu)$.

We now turn to the density condition. Fix $i\in[n]$ and write
$X_i:=|H_i^{-1}(1)|$. Then
\[
X_i\sim\operatorname{Bin}(|\Sigma|,p),\qquad
m:=\mathbb E[X_i]=p|\Sigma|.
\]
By the multiplicative Chernoff bound,
\[
\Pr{X_i<m/2}\leq\exp(-m/8),\qquad
\Pr{X_i>2m}\leq\exp(-m/3).
\]
Hence $\Pr{X_i\notin[m/2,2m]}\leq2e^{-m/8}$.
A union bound over all $i\in[n]$ gives failure probability at
most $2n\exp(-p|\Sigma|/8)$. Dividing by $|\Sigma|$ gives the
required density bounds. Combining the two failure probabilities
proves the first assertion.

It remains to verify the bound for the concrete parameters.
Here $|\Sigma|=q^\lambda$ and
$q\geq\lambda^5$, so
\[
m=\frac{q^\lambda}{\lambda^4}\geq\lambda^{5\lambda-4}.
\]
Moreover, $n=q\leq2\lambda^5$. Thus the concentration failure
probability is at most $2^{-3\lambda}$ for all sufficiently
large $\lambda$.

To bound the solver error, let $a=\lfloor\lambda^2/4\rfloor$.
Since $r=\lfloor\lambda^2/2\rfloor$ and
$\ell=\Theta(\lambda\log\lambda)$, we have $r-\ell\geq a$
for sufficiently large $\lambda$. The mean of
$X\sim\operatorname{Bin}(n-\ell,p)$ is at most $2\lambda$.
The binomial tail bound therefore gives
\[
\mu\leq\Pr{X\geq a}
\leq\left(\frac{2e\lambda}{a}\right)^a
=2^{-\Omega(\lambda^2\log\lambda)}.
\]
It follows that $16\mu^{1/4}+2^{-4\lambda}\leq2^{-\lambda}$
and
\[
2^\ell\left(2n(1-p)^{|\Sigma|}+2\sqrt\mu\right)
\leq2^{-3\lambda}
\]
for sufficiently large $\lambda$. A union bound now gives
$\Pr[H\leftarrow\wBias_{n,p,\Sigma}]{H\notin\Good}
\leq2\cdot2^{-3\lambda}\leq2^{-2\lambda}$.
\end{proof}
\newpage\section{Self-Encrypted Oracles and Cloning}\label{tools:sec-cloning}

In this section, we define self-encrypted oracles and show that
they allow the code intersection state to be cloned. We first
state the cloning guarantee, and then give the algorithm and prove
its correctness.

Fix a code, decoder, and parameters as in
\Cref{tools:subsec-state-and-good-hashes}, and let $\ket{\psi_H}$
and $\Good$ be the state and good-hash family defined there.

\paragraph{Self-encrypted oracles.}
For cloning, assume $2^\ell\geq n|\Sigma|$, and fix an efficiently computable
injective encoding
$\bin:[n]\times\Sigma\longrightarrow\{0,1\}^{\ell}$.
Define $M[H]:[n]\times\Sigma\times\Sigma^n\longrightarrow
\{0,1,\bot\}$ by
\[
M[H](i,x,c)=
\begin{cases}
H_i(x) & \text{if $c\in C$ and
$H(c)=\bin(i,x)\|0^{n-\ell}$},\\
\bot & \text{otherwise}.
\end{cases}
\]
The oracle releases a hash value when supplied with a codeword
solving the corresponding code intersection problem.

\begin{lemma}[Correctness of cloning]\label{tools:lem-cloning}
There exists an efficient quantum algorithm $\Clone$ such that,
for every $H\in\Good$,
\[
F\left(\Clone^{M[H]}(\ket{\psi_H}\bra{\psi_H}),\ket{\psi_H}^{\otimes2}\right)
\geq 1-\operatorname{negl}(\lambda).
\]
\end{lemma}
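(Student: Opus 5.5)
The plan is to implement the cloner outlined in the technical overview and to control its error with \Cref{lem:hybrid-channel}. Fix $H\in\Good$. The basic subroutine $U_i$, for $i\in[n]$, starts from $\ket{\psi_H}$ and a fresh register holding $|\Sigma|^{-1/2}\sum_x\ket{x}$. Controlled on $x$, it does three things. First, it runs a purification $V$ of $\BiasedYZ(\cdot,\bin(i,x))$ on the advice, producing a codeword register together with garbage. Second, it queries $M[H]$ on $(i,x,c)$ and XORs the answer into a fresh output register. Third, it applies $V^{-1}$. The targets are $\bin(i,x)\|0^{n-\ell}$, so by \Cref{tools:lem-biased-yz} we may write
\[
V\ket{\psi_H}\ket{0}=\sqrt{1-\delta_x}\,\ket{\mathrm{good}_x}+\sqrt{\delta_x}\,\ket{\mathrm{bad}_x},
\]
with $\delta_x\le 2^{-\lambda}$. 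Every codeword in the support of $\ket{\mathrm{good}_x}$ is a valid solution, and on every such codeword the oracle answer equals $H_i(x)$. The query therefore acts on the good branch as $\ket{\mathrm{good}_x}\ket{0}\mapsto\ket{\mathrm{good}_x}\ket{H_i(x)}$, independent of which codeword was found. Consequently $V^{-1}$ returns that branch exactly to $\ket{\psi_H}$.

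The bad branch contributes a perturbation of norm at most $2\sqrt{\delta_x}$ for each $x$. Averaging over the superposition over $x$, the output is within Euclidean distance $O(2^{-\lambda/2})$ of $\ket{\psi_H}\otimes\ket{\psi_i^H}$, where $\ket{\psi_i^H}=|\Sigma|^{-1/2}\sum_x\ket{x}\ket{H_i(x)}$. The key point is that the ideal output is a product state: the advice is preserved and decoupled from the extracted register. One subtlety is that $\BiasedYZ$ must also be performed with trace-distance error $\operatorname{negl}$ despite state-preparation approximations. This is absorbed into $\delta_x$ by running the decoder and the QFT to exponential precision.

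The full cloner repeats this subroutine. For each $i\in[n]$ and $b\in\{0,1\}$, it runs $U_i$ up to $R=\lambda^{6}$ times. After each run it measures the last qubit of the extracted $\ket{\psi_i^H}$, which collapses the $x$-register to $\ket{H_i^{-1}(b')}$, and it keeps the first outcome obtained for each value $b'$. By the second condition of \Cref{tools:def-good-hash}, each outcome has probability at least $p/2=\lambda^{-4}/2$. Hence the chance that some $(i,b)$ is never obtained is at most $2n(1-\lambda^{-4}/2)^{R}=\operatorname{negl}(\lambda)$. Finally, $\ket{\phi_2}$ is prepared without queries using the Remark following \Cref{tools:def-good-code}, and the registers are assembled into a second copy of $\ket{\psi_H}$. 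The number of rounds is $T=2nR=\operatorname{poly}(\lambda)$.

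To analyze the error, compare the real channels with ideal channels $\widetilde U_i$ that map $\rho\otimes\ket{+_\Sigma}\!\bra{+_\Sigma}$ to $\rho\otimes\ket{\psi_i^H}\!\bra{\psi_i^H}$ on the advice register. In the ideal process the advice register stays exactly $\ket{\psi_H}$ throughout, so each real step is applied to the state $\ket{\psi_H}$ tensored with unrelated registers. Each step therefore incurs error $\epsilon_t=O(2^{-\lambda/2})$. \Cref{lem:hybrid-channel}, which explicitly permits interleaved measurements, then bounds the total trace distance by $T\cdot O(2^{-\lambda/2})$. In the ideal process the output equals $\ket{\psi_H}^{\otimes2}$ except on the failure event above, up to the $\ket{\phi_2}$ preparation error. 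The trace distance bound then converts to $F\ge 1-\operatorname{negl}(\lambda)$, since $F^2\ge 1-D$ for a pure target.

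The main obstacle is the first step: we must check rigorously that uncomputation restores the advice even though $\BiasedYZ$ may output different valid codewords in superposition. The argument rests on the oracle answer being a function of $(i,x)$ alone on the valid set, and I would make it precise via the gentle-measurement style bound on $\|(V^{-1}\mathcal{O}V-\mathrm{id}\otimes X^{H_i(x)})\ket{\psi_H}\|$ stated above.
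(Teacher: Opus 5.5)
Your proposal is correct and follows essentially the same route as the paper: you run the purified $\BiasedYZ$ solver coherently, make one query to $M[H]$, and uncompute, which gives per-call error $O(2^{-\lambda/2})$ against the ideal map $\ket{\psi_H}\mapsto\ket{\psi_H}\ket{\psi_i^H}$; you then repeat this polynomially often on the same preserved advice, control the accumulated error with \Cref{lem:hybrid-channel}, postselect the preimage states, and prepare $\ket{\phi_2}$ without queries. The differences are cosmetic. You interleave the measurements and stop adaptively with $R=\lambda^6$, while the paper measures all $nR$ response registers at the end with $R=\lceil\lambda/p\rceil$. Also, your uncomputation step is best phrased, as in the paper, by comparing against the ideal state in which the bad branch also carries $\ket{H_i(x)}$: $V^{-1}$ maps that state exactly to $\ket{\psi_H}\ket{0}\ket{H_i(x)}$, whereas applying $V^{-1}$ to the good branch alone does not return exactly $\ket{\psi_H}$.
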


We give the cloning algorithm and prove its correctness in the
following subsection.

\subsection{Cloning Correctness}

\begin{proof}[Proof of \Cref{tools:lem-cloning}]

The proof has two steps. First, we show how to obtain hash values
while approximately preserving the code intersection state. Then,
we repeat this procedure to prepare a new copy of the state.

For $i\in[n]$ and $x\in\Sigma$, let $U^{\YZ}_{i,x}$ be the
unitary purifying $\BiasedYZ(\cdot,\bin(i,x))$.
It takes a state register $\bA$ and zero-initialized working
register $\bWa$ to an output register $\bO$ encoding a vector in
$\Sigma^n$ and working register $\bWb$:
\[
U^{\YZ}_{i,x}:\bA\otimes\bWa\longrightarrow\bO\otimes\bWb.
\]
By the construction of $\BiasedYZ$, we can assume the sizes of
$\bWa$ and $\bWb$ are independent of $i$ and $x$. Define the
controlled unitary $U^{\YZ}$ by
\[
\ket{i,x}\otimes\ket{\psi,w}
\longmapsto\ket{i,x}\otimes U^{\YZ}_{i,x}\ket{\psi,w},
\]
and denote its inverse by $U^{\YZ,-1}$.

The idea is to exploit the ability to solve the code intersection
problem with $\ket{\psi_H}$: apply $U^{\YZ}$ to unlock access to
$M[H]$ (to get the hash value), and then apply $U^{\YZ,-1}$ to
uncompute and get back the original $\ket{\psi_H}$.

Write $\Hash_i^{M[H]}$ for the following subroutine of the cloning
algorithm. Prepare $\ket{i}_{\bI}$ and a uniform superposition
of $x\in\Sigma$ in $\bX$,
apply $U^{\YZ}$ to the input state and zero-initialized workspace,
and query $M[H]$ into a zero-initialized response register $\bB$ with distinct
encodings for $0,1,\bot$. Apply $U^{\YZ,-1}$ and retain the
state register $\bA$ together with $\bX\otimes\bB$.

We can prepare $\ket{\phi_2}$ on our own using the generator
matrix of $C$. We describe publicly known state preparations
ideally; implementing each to error $2^{-2\lambda}$ adds only
a negligible error over polynomially many calls.
It suffices to prepare $\ket{H_i^{-1}(0)}$ and
$\ket{H_i^{-1}(1)}$ for each $i\in[n]$. We obtain them by measuring
the output bit of multiple copies of
\[
\frac{1}{\sqrt{|\Sigma|}}\sum_{x\in\Sigma}\ket{x}\ket{H_i(x)}.
\]
The density bounds in \Cref{tools:def-good-hash} ensure that
both measurement outcomes occur with sufficient probability.
We use the same preserved copy of $\ket{\psi_H}$ to run the
extraction algorithm several times.

\begin{figure}[H]
\centering
\begin{algbox}{$\Clone^{M[H]}(\ket{\psi_H}\bra{\psi_H})$:}
\begin{enumerate}[label=\arabic*., leftmargin=1.5em]
    \item For $i=1,\ldots,n$, run $\Hash_i^{M[H]}$ iteratively
    $R=\lceil\lambda/p\rceil$ times using the same preserved
    state register; retain the $R$ output pairs for each $i$.
    \item Measure all response registers $\bB$. For every
    $i\in[n]$, retain one $\bX$ register with outcome $0$ and
    one with outcome $1$; otherwise, output $\bot$.
    \item Prepare $\ket{\phi_2}$ without making any query.
    \item Re-arrange and output the preserved state register
    together with the newly assembled state.
\end{enumerate}
\end{algbox}
\caption{The algorithm clones $\ket{\psi_H}$.}
\label{tools:fig-clone}
\end{figure}

For $i\in[n]$, define
\[
|\psi_i^H\rangle:=\frac{1}{\sqrt{|\Sigma|}}
\sum_{x\in\Sigma}|x\rangle_X|H_i(x)\rangle_B.
\]
We first show that one call of $\Hash_i^{M[H]}$ approximately
implements
\[
\ket{\psi_H}\longmapsto\ket{\psi_H}|\psi_i^H\rangle.
\]
Fix $i\in[n]$ and $x\in\Sigma$, and let
$z=\bin(i,x)$. Write
\[
|\xi_{i,x}\rangle:=U^{\YZ}_{i,x}
\left(\ket{\psi_H}_A|0\rangle_{W_1}\right).
\]
Let $\Pi_{i,x}$ denote the projector onto those outputs $c\in C$
satisfying $H(c)=z\|0^{n-\ell}$. Since $H\in\Good$,
\Cref{tools:def-good-hash} gives
\[
\|(I-\Pi_{i,x})|\xi_{i,x}\rangle\|^2\leq 2^{-\lambda}.
\]
Let $h=H_i(x)$. After querying $M[H]$ and subsequently applying
$U^{\YZ,-1}_{i,x}$, the real state is
\[
U^{\YZ,-1}_{i,x}\left(
\Pi_{i,x}|\xi_{i,x}\rangle|h\rangle_B+
(I-\Pi_{i,x})|\xi_{i,x}\rangle|\bot\rangle_B\right).
\]
On the other hand, the ideal state is
\[
\ket{\psi_H}_A|0\rangle_{W_1}|h\rangle_B
=U^{\YZ,-1}_{i,x}\left(
\Pi_{i,x}|\xi_{i,x}\rangle|h\rangle_B+
(I-\Pi_{i,x})|\xi_{i,x}\rangle|h\rangle_B\right).
\]
The Euclidean distance between the real and ideal states is
therefore at most
\[
\left\|(I-\Pi_{i,x})|\xi_{i,x}\rangle
\bigl(|\bot\rangle-|h\rangle\bigr)\right\|
\leq\sqrt2\,\|(I-\Pi_{i,x})|\xi_{i,x}\rangle\|
\leq\sqrt2\,2^{-\lambda/2}.
\]
Because the states corresponding to distinct $x$ are supported
on orthogonal values of the $X$ register, the same bound holds
when $x$ is placed in the uniform superposition. After tracing
out the working registers, monotonicity of trace distance thus gives
\begin{equation}\label{tools:eq-hash-extraction}
D\left(\Hash_i^{M[H]}(\ket{\psi_H}\bra{\psi_H}),
\ket{\psi_H}\bra{\psi_H}\otimes|\psi_i^H\rangle\langle\psi_i^H|\right)
\leq\epsilon_{\mathrm{hash}},
\qquad\epsilon_{\mathrm{hash}}:=\sqrt2\,2^{-\lambda/2}.
\end{equation}
We next run the extraction procedure $R=\lceil\lambda/p\rceil$
times for every $i\in[n]$. The total number of calls is $T=nR$.
Let $\rho_{\mathrm{pre}}$ denote the real joint state obtained
after these $T$ calls and before performing any measurement,
and define the corresponding ideal state
\[
\sigma_{\mathrm{pre}}:=\ket{\psi_H}\bra{\psi_H}\otimes
\bigotimes_{i=1}^n
\left(|\psi_i^H\rangle\langle\psi_i^H|\right)^{\otimes R}.
\]
The ideal extraction channel appends a fresh
$|\psi_i^H\rangle\langle\psi_i^H|$ and acts as the identity on
all existing registers. Applying \Cref{lem:hybrid-channel} to
\eqref{tools:eq-hash-extraction}, we obtain
\begin{equation}\label{tools:eq-extraction-error}
\delta_{\mathrm{ext}}:=D(\rho_{\mathrm{pre}},\sigma_{\mathrm{pre}})
\leq nR\epsilon_{\mathrm{hash}}=\operatorname{negl}(\lambda).
\end{equation}

It remains to analyze the reconstruction procedure. Measuring
the $B$ register of one copy of $|\psi_i^H\rangle$ in the ideal
state gives outcome $b\in\{0,1\}$ with probability
\[
q_{i,b}:=\frac{|H_i^{-1}(b)|}{|\Sigma|}.
\]
Conditioned on obtaining outcome $b$, the $X$ register is exactly
\[
|H_i^{-1}(b)\rangle
:=\frac{1}{\sqrt{|H_i^{-1}(b)|}}
\sum_{x:H_i(x)=b}|x\rangle.
\]
Since $H\in\Good$, the density condition in
\Cref{tools:def-good-hash} gives
\[
q_{i,1}\geq p/2,\qquad q_{i,0}\geq 1/2.
\]
Consequently, among $R$ independent copies, the probability of
seeing no outcome $1$ is at most
\[
(1-q_{i,1})^R\leq e^{-Rq_{i,1}}\leq e^{-\lambda/2},
\]
whereas the probability of seeing no outcome $0$ is at most
$(1-q_{i,0})^R\leq 2^{-R}$. A union bound over all $i\in[n]$
therefore shows that reconstruction fails for at least one $i$
with probability at most
\begin{equation}\label{tools:eq-reconstruction-error}
\eta_{\mathrm{post}}:=n\left(e^{-\lambda/2}+2^{-R}\right)
=\operatorname{negl}(\lambda).
\end{equation}
Whenever reconstruction succeeds, we obtain one exact copy of
both $|H_i^{-1}(0)\rangle$ and $|H_i^{-1}(1)\rangle$ for every
$i\in[n]$. Since $|\phi_2\rangle$ can be prepared without oracle
queries, these registers form one fresh copy of $\ket{\psi_H}$,
while the original ideal state register remains untouched.
Thus, conditioned on success, the ideal reconstruction procedure
outputs exactly $\ket{\psi_H}^{\otimes2}$.

Regard the entire reconstruction procedure---including all
measurements, the choice of one register with each outcome, the
preparation of $|\phi_2\rangle$, and the failure output---as a
single CPTP map $\mathcal M$. Let
$\rho_{\mathrm{out}}:=\mathcal M(\rho_{\mathrm{pre}})$ and
$\sigma_{\mathrm{out}}:=\mathcal M(\sigma_{\mathrm{pre}})$.
By contractivity of trace distance and
\eqref{tools:eq-extraction-error},
\[
D(\rho_{\mathrm{out}},\sigma_{\mathrm{out}})\leq\delta_{\mathrm{ext}}.
\]
Let $\Pi_{\psi}:=\ket{\psi_H}^{\otimes2}\bra{\psi_H}^{\otimes2}$.
By \eqref{tools:eq-reconstruction-error},
$\Tr(\Pi_{\psi}\sigma_{\mathrm{out}})\geq1-\eta_{\mathrm{post}}$,
and hence
\[
\Tr(\Pi_{\psi}\rho_{\mathrm{out}})
\geq1-\eta_{\mathrm{post}}-\delta_{\mathrm{ext}}.
\]
For sufficiently large $\lambda$, therefore,
\[
F(\rho_{\mathrm{out}},\ket{\psi_H}^{\otimes2})
=\sqrt{\Tr(\Pi_{\psi}\rho_{\mathrm{out}})}
\geq\sqrt{1-\eta_{\mathrm{post}}-\delta_{\mathrm{ext}}}
=1-\operatorname{negl}(\lambda).
\]
This proves the lemma.

\end{proof}

\newpage\part{Separations}
\section{Separating $\cQMA$ from $\QCMA$}\label{sec:QMA}
In this section, we prove the following theorem.
\mainseparationQMA*

To this end, we prove an oracle-input separation between these two classes, through the following problem.

\begin{definition}[Self-Encrypted Code Intersection Subset Size Problem]
\label{subsec:CCISS}
Let $\Sigma$ be a finite alphabet of code symbols, $C\subseteq\Sigma^n$
a code of length $n$, and $0\le\ell\le n$. Fix an efficiently
computable injective encoding
$\bin:[n]\times\Sigma\hookrightarrow\{0,1\}^{\ell}$ and a family
$\mathcal G$ of binary hash functions
$H=(H_i:\Sigma\to\{0,1\})_{i\in[n]}$.
For $c\in\Sigma^n$, write $H(c):=H_1(c_1)\cdots H_n(c_n)$.
For any binary hash family $H$ and
$E\subseteq\{0,1\}^{\ell}\times\Sigma^n$, define the oracle
$M[H,E]:[n]\times\Sigma\times\Sigma^n\to\{0,1,\bot\}$ by
\[
M[H,E](i,x,c)=\begin{cases}
    H_i(x) & \text{if $c\in C$, $H(c)=\bin(i,x)\|0^{n-\ell}$,} \\
           & \quad\text{and $(\bin(i,x),c)\in E$,} \\
    \bot & \text{otherwise}.
\end{cases}
\]
The \emph{self-encrypted code intersection subset size (CISS)
problem} is to distinguish oracle-inputs $M[H,E]$ with:\footnote{For
$\mathcal G=\Good$ below, every encoded target has a codeword
preimage, so the YES and NO oracle families are disjoint.}
\begin{enumerate}
    \item \textbf{YES:} $H\in\mathcal G$ and
    $E=\{0,1\}^{\ell}\times\Sigma^n$.
    \item \textbf{NO:} $H\in\mathcal G$ and
    $E\subseteq F\times\Sigma^n$ for some
    $F\subseteq\{0,1\}^{\ell}$ with $|F|\le n|\Sigma|/3$.
\end{enumerate}
Write $M[H]:=M[H,\{0,1\}^{\ell}\times\Sigma^n]$ for the
self-encrypted oracle of \Cref{tools:sec-cloning}.
\end{definition}

\begin{theorem}[Oracle-input separation]
\label{thm:self-encrypted-ciss}
Let $(C_\lambda\subseteq\Sigma^n)_\lambda$ be an
$(r,L_{\max})$-good code family as in \Cref{tools:def-good-code}, with
\[
\begin{gathered}
q\text{ prime},\quad \lambda^5\le q\le2\lambda^5,\quad
n=q,\quad s=\lambda,\quad \Sigma=\mathbb F_q^s,\\
r=\lfloor\lambda^2/2\rfloor,\quad L_{\max}=2^\lambda,\quad
p=\lambda^{-4},\quad
\ell=\lceil\log_2 n\rceil+\lceil\log_2|\Sigma|\rceil.
\end{gathered}
\]
Set $\mathcal G=\Good$ from \Cref{tools:def-good-hash}, using the
dual decoder of the code family.\footnote{We restrict the promise
to sufficiently large $\lambda$ so that the preceding code and
hash lemmas apply; in particular, $\ell\le n$ and $p\le1/4$.
All algorithms receive $1^\lambda$ and know the public code and
encoding. We leave $1^\lambda$ implicit below.}
For the self-encrypted CISS problem, let
$\mathcal Y_\lambda,\mathcal N_\lambda$ be its YES and NO oracle families.
Then:
\begin{enumerate}
    \item The problem belongs to $\cQMA$ in the oracle-input model.
    \item For every pair of polynomial bounds $t,Q$, all sufficiently
    large $\lambda$, and every quantum verifier $\A$ with a
    $t(\lambda)$-bit classical witness and at most $Q(\lambda)$
    oracle queries, there exists
    $M\in\mathcal Y_\lambda\cup\mathcal N_\lambda$ such that
    \[
    \begin{aligned}
    &M\in\mathcal Y_\lambda
    &&\text{and}\quad
    \max_{w\in\{0,1\}^{t(\lambda)}}
    \Pr{\A^M(1^\lambda,w)=1}<\frac23,\\
    &\text{or}\quad M\in\mathcal N_\lambda
    &&\text{and}\quad
    \max_{w\in\{0,1\}^{t(\lambda)}}
    \Pr{\A^M(1^\lambda,w)=1}>\frac13.
    \end{aligned}
    \]
    In particular, the problem does not belong to $\QCMA$ in this model.
\end{enumerate}
\end{theorem}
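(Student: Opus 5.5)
The proof splits into the two claims, handled separately.

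\textbf{Membership in $\cQMA$.} The honest witness for a YES oracle $M[H]$ with $H\in\Good$ will be $\ket{\psi_H}$. The verifier proceeds as follows. It samples a uniformly random $(i,x)\in[n]\times\Sigma$ and runs $\BiasedYZ(\ket{\psi},\bin(i,x))$ to obtain $c$. It queries $M[H,E](i,x,c)$ and accepts if and only if the answer is not $\bot$.

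\emph{Completeness.} By \Cref{tools:lem-biased-yz}, acceptance probability is at least $1-2^{-\lambda}$, since $E$ is full in the YES case.

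\emph{Soundness.} For any witness, acceptance requires $\bin(i,x)\in F$. Because $\bin$ is injective, this happens with probability at most $|F|/(n|\Sigma|)\le 1/3$, whatever the witness.

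The resulting gap is $1-2^{-\lambda}$ versus $1/3$, which already satisfies the $2/3$ versus $1/3$ thresholds. Clonability of the honest witness is exactly \Cref{tools:lem-cloning}, applied with oracle access to $M[H]$.

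\textbf{Hardness against $\QCMA$.} Suppose for contradiction that $\A$ is correct on every instance. Then for each $H\in\Good$ there is a witness $w_H$ with $\Pr{\A^{M[H]}(w_H)=1}\ge 2/3$, while every $E$ of NO form gives acceptance at most $1/3$. I will build a no-query $\Guesser$ that takes $w_H$ as its only $H$-dependent input and outputs $L$ distinct valid pairs with probability at least $2^{-O(L)}\mathrm{poly}(\lambda)^{-L}$. I will then contradict \Cref{tools:lem-guessing-upper-bound}, whose bound is $2^t(1-\lambda^{-4})^{\lambda^5L/2}=2^{t-\Omega(\lambda L)}$, by taking $L=\Theta(t)$. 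This requires $L\le 2^\lambda$, which holds, and requires averaging over $H\leftarrow\wBias$ restricted to $\Good$. That restriction costs at most $2^{-2\lambda}$ by \Cref{tools:lem-good-hash-probability}, so I will state the guesser's success conditioned on $\Good$ and absorb this loss.

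\emph{Extraction round.} The guesser maintains a list $S$ of pairs $(b_j,c^{(j)})$ with $b_j=\bin(i_j,x_j)$, together with guessed bits $h_j$. It simulates $\A$ with oracle $M[\tilde H,E_S]$, where $E_S$ contains exactly the pairs in $S$. On a query $(i,x,c)$ the simulated oracle returns $h_j$ if $(\bin(i,x),c)\in S$ and $\bot$ otherwise. If every guess $h_j=H_{i_j}(x_j)$ is correct and every pair in $S$ is valid, this simulation equals $M[H,E_S]$ exactly, and the simulation needs no oracle queries.

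While $|S|\le n|\Sigma|/3$, the oracle $M[H,E_S]$ is a NO instance, so its acceptance probability is at most $1/3$, whereas $M[H]$ accepts with probability at least $2/3$. The two oracles differ only on valid pairs outside $S$. By \Cref{lem:query-hybrid}, $\sum_t p_t\ge 1/(36Q)$, where $p_t$ is the weight the $M[H,E_S]$ run places on such pairs.

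The guesser therefore runs the simulated $M[H,E_S]$ execution, picks $t\in[Q]$ uniformly, and measures the $t$-th query. This yields a new valid pair with probability at least $1/(36Q^2)$. It adds the pair to $S$ with a fresh uniformly random guessed bit, which is correct with probability $1/2$. The new pair is automatically distinct from those already in $S$.

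Running the hybrid lemma with the simulated restricted oracle as the base algorithm is what makes the query weight measurable without access to $H$. After $L$ rounds every pair is valid and every guessed bit is correct with probability at least $(72Q^2)^{-L}$.

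\emph{Main obstacle.} The delicate point is the conditioning between rounds. Each round's success must be lower-bounded given that all previous pairs are valid and all previous bits correct; the simulation is faithful only under this condition. The per-round bound uses only the worst-case completeness/soundness gap, so it should hold conditionally, and the round probabilities then multiply. With $t$ polynomial and $L=2t$, the guesser's success $(72Q^2)^{-2t}$ exceeds $2^{t}\cdot 2^{-\Omega(\lambda t)}$ for large $\lambda$, giving the contradiction.
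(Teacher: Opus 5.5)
Your proposal is correct and follows the paper's proof essentially step for step. The $\cQMA$ part uses the same one-query verifier, with soundness from $|F|\le n|\Sigma|/3$ and clonability from \Cref{tools:lem-cloning}; the hardness part uses the same query-free guesser, which simulates $M[H,E_{r-1}]$ from stored valid pairs and guessed bits, extracts a new valid pair via the hybrid lemma with per-round probability at least $1/(72Q^2)$, and is played off against \Cref{tools:lem-guessing-upper-bound}. The only cosmetic issue is that your choice $L=2t$ degenerates when $t=0$; the paper's $L=\max\{t,\lambda\}$, or simply $L=\max\{1,t\}$, avoids this.
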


Such code families exist by \Cref{tools:thm-good-code-existence}.
We prove the two parts in \Cref{subsec:cqma-proof-system,subsec:qcma-hardness},
respectively; the fixed-oracle separation then follows by the
diagonalization described in \Cref{subsec:complexity-theory}.
Throughout this section, write $\Bias=\wBias_{n,p,\Sigma}$.

\subsection{The $\cQMA$ Proof System}
\label{subsec:cqma-proof-system}

In this subsection, we prove the $\cQMA$ part of
\Cref{thm:self-encrypted-ciss}. On a YES instance, we use the code intersection state
from \Cref{tools:subsec-state-and-good-hashes} as our honest witness:
\[
\advstate:=\ket{\psi_H}
=\left(\bigotimes_{i=1}^n\ket{H_i^{-1}(0)}\right)
\otimes\left(\bigotimes_{i=1}^n\ket{H_i^{-1}(1)}\right)
\otimes\frac{1}{\sqrt{|C|}}\sum_{c\in C}\ket{c}.
\]
Here $\ket{H_i^{-1}(b)}$ is the uniform superposition over
$H_i^{-1}(b)$. The witness has $O(ns\log q)=\operatorname{poly}(\lambda)$
qubits.

We define the verifier $V$ as follows.\footnote{We use rejection
sampling with a cutoff of $\lambda$ attempts. Let $N=n|\Sigma|$.
In each attempt, sample a $\lceil\log_2 N\rceil$-bit integer and
retain it if it is less than $N$, interpreting it as a pair in
$[n]\times\Sigma$ using a fixed efficient enumeration. Each attempt
succeeds with probability at least $1/2$, so sampling fails with
probability at most $2^{-\lambda}$. Conditioned on success, the
pair is exactly uniform.}

\begin{algbox}{$V^{M[H,E]}(\rho)$}
\begin{enumerate}[label=\arabic*., leftmargin=1.5em]
    \item Sample $(i,x)$ uniformly from $[n]\times\Sigma$.
    If sampling fails, output $0$.
    \item Run $\BiasedYZ(\rho,\bin(i,x))$ and measure its output
    to obtain $c$. If $c\notin\Sigma^n$, output $0$.
    \item Query $M[H,E](i,x,c)$. Output $1$ if the answer is
    $0$ or $1$, and output $0$ if the answer is $\bot$.
\end{enumerate}
\end{algbox}

By \Cref{tools:lem-biased-yz}, the algorithm $\BiasedYZ$ runs in
polynomial time and makes no oracle queries. The sampling step
also runs in polynomial time. Thus, $V$ is a polynomial-time
uniform quantum query algorithm making at most one oracle query.

We first prove completeness. Fix $H\in\Good$ and let
$E=\{0,1\}^{\ell}\times\Sigma^n$. Conditioned on successful
sampling, consider any pair $(i,x)$ chosen by the verifier.
By \Cref{tools:lem-biased-yz}, on input $\advstate$ the solver
outputs a codeword $c\in C$ satisfying
\[
H(c)=\bin(i,x)\|0^{n-\ell}
\]
with probability at least $1-2^{-\lambda}$. Whenever this happens,
the oracle returns $H_i(x)\in\{0,1\}$, and the verifier accepts.
Since sampling succeeds with probability at least $1-2^{-\lambda}$,
we get
\[
\Pr{V^{M[H,E]}(\advstate)=1}
\ge (1-2^{-\lambda})^2\ge\frac23
\]
for all sufficiently large $\lambda$.

We now prove soundness. Fix any $H$, any
$E\subseteq F\times\Sigma^n$ with $|F|\le n|\Sigma|/3$, and any
witness $\rho$. Observe that, by the definition of $M[H,E]$,
the verifier can accept only if $\bin(i,x)\in F$, regardless of
the codeword produced by the solver. The pair $(i,x)$ is chosen
independently of the witness and, conditioned on successful
sampling, is uniform over $[n]\times\Sigma$. Since $\bin$ is
injective and a sampling failure causes rejection, it follows that
\[
\Pr{V^{M[H,E]}(\rho)=1}
\le\frac{\bigl|F\cap\bin([n]\times\Sigma)\bigr|}{n|\Sigma|}
\le\frac{|F|}{n|\Sigma|}\le\frac13.
\]
This proves soundness.

Finally, on a YES instance we have $M[H,E]=M[H]$. By
\Cref{tools:lem-cloning}, the same honest witness $\advstate$
can be efficiently cloned using this oracle with negligible error.
Together with completeness and soundness, this gives the claimed
$\cQMA$ proof system.

\subsection{The Non-Existence of $\QCMA$ Proof System}
\label{subsec:qcma-hardness}

In this subsection, we prove the $\QCMA$ lower bound in
\Cref{thm:self-encrypted-ciss}. We follow the classical-hardness
argument of \cite{BHV:2026}. At a high level, a verifier for the
self-encrypted CISS problem gives a query-free algorithm that
outputs many valid code-intersection pairs. We then apply
\Cref{tools:lem-guessing-upper-bound} to rule out such an algorithm.

Our oracle introduces one additional difficulty. In
\cite{BHV:2026}, a successful oracle query only reveals whether
the queried code-intersection pair belongs to the hidden set.
In our construction, a valid query $(i,x,c)$ additionally returns
$H_i(x)$. Thus, to simulate the oracle, the guesser must also
predict the hash values revealed by these queries. We will analyze
the guesser for each fixed value of its initial witness, and then
apply \Cref{tools:lem-guessing-upper-bound}. For conciseness, write
$\Guesser(1^L,w)$ for the algorithm in \Cref{fig:guesser}
conditioned on its initial witness being $w$.

For a tuple $T=((z_j,c^{(j)}))_{j=1}^L$, with
$z_j\in\{0,1\}^{\ell}$, write $\mathsf{Valid}_H(T)$ for the event
\[
\begin{gathered}
(z_j,c^{(j)})\neq(z_k,c^{(k)})\quad\text{for all }j\neq k,\\
c^{(j)}\in C,\qquad H(c^{(j)})=z_j\|0^{n-\ell}
\quad\text{for all }j\in[L].
\end{gathered}
\]
This is the validity condition of
\Cref{tools:lem-guessing-upper-bound} after padding each target.\footnote{For
set outputs, we order the pairs canonically and treat an output
with fewer than $L$ pairs as a failure.}

\begin{lemma}[Guessers from $\QCMA$ verifiers]
\label{lem:qcma-guesser}
Fix $\lambda$, and suppose that a quantum verifier $\A$ takes a
$t$-bit classical witness, makes at most $Q\ge1$ oracle queries,
and satisfies completeness and soundness for every promised
self-encrypted CISS instance at parameter $\lambda$.
For every integer $1\le L\le\lfloor n|\Sigma|/3\rfloor$, there
exists a query-free algorithm $\Guesser(1^L)$, fixed independently
of $H$, such that for every $H\in\Good$, some
$w_H\in\{0,1\}^t$ satisfies
\[
\Pr[T\leftarrow\Guesser(1^L,w_H)]{\mathsf{Valid}_H(T)}
\ge\left(\frac{1}{72Q^2}\right)^L.
\]
\end{lemma}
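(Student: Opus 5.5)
We construct $\Guesser$ as an iterative query-extraction procedure in the style of \cite{BHV:2026}. It maintains a simulated set $E_j$ of extracted pairs together with a table of guessed hash bits. Initially $E_0=\varnothing$ and the table is empty. The witness given to $\Guesser$ is the classical witness $w_H$ that makes $\A^{M[H]}$ accept with probability at least $2/3$; completeness guarantees such a witness for every $H\in\Good$.

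In round $j=1,\dots,L$, $\Guesser$ runs $\A(w_H)$ with its oracle simulated by $\widetilde M_j$. This oracle answers $(i,x,c)$ by returning the stored guessed bit if $(\bin(i,x),c)\in E_{j-1}$, and $\bot$ otherwise. It then picks $t\in[Q]$ uniformly, runs $\A$ up to its $t$th query, and measures the query register to get $(i,x,c)$. If $(\bin(i,x),c)\notin E_{j-1}$, it adds the pair to $E_j$ and records a fresh uniformly random guess $g\in\{0,1\}$ for $H_i(x)$. After $L$ rounds, it outputs the pairs of $E_L$. Everything here is query-free and independent of $H$, as the lemma requires.

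The analysis conditions, round by round, on the good event $G_{j-1}$: all pairs in $E_{j-1}$ are valid and all recorded guesses are correct. On $G_{j-1}$ the simulated oracle equals $M[H,E'_{j-1}]$, where $E'_{j-1}\subseteq E_{j-1}$ consists of the valid pairs, listed with targets $F=\{z:(z,c)\in E_{j-1}\}$. Since $|F|\le j-1<L\le n|\Sigma|/3$, this is a NO instance (with the same $H\in\Good$). Soundness then gives acceptance at most $1/3$ with witness $w_H$, versus at least $2/3$ under $M[H]$.

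$M[H]$ and $M[H,E'_{j-1}]$ differ only on queries $(i,x,c)$ that are valid with $(\bin(i,x),c)\notin E_{j-1}$; call this set $S$. By \Cref{lem:query-hybrid}, $1/3\le 2\bigl(Q\sum_t p_t(S)\bigr)^{1/2}$, where the $p_t$ are computed under the oracle $M[H,E'_{j-1}]$. The lemma lets us take the reference run on either side, and we take it to be the simulated one because that is the run we actually execute. Hence $\sum_t p_t(S)\ge 1/(36Q)$. Sampling a uniform $t$ and measuring yields a new valid pair with probability at least $1/(36Q^2)$. The fresh hash guess is correct with probability $1/2$, so $\Pr[G_j\mid G_{j-1}]\ge 1/(72Q^2)$.

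Multiplying over the $L$ rounds gives the bound. On $G_L$ the $L$ pairs are distinct, because each is new relative to the previous set, and all are valid, so $\mathsf{Valid}_H$ holds.

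The main obstacle is the simulation step. We must justify that, on $G_{j-1}$, the simulator exactly reproduces a legitimate NO-instance oracle even though it only knows guessed bits. This works because the set $E$ is indexed by $(\bin(i,x),c)$, so each stored pair unlocks exactly one hash bit, $H_i(x)$. We must also make sure that querying a pair already in $E_{j-1}$ (or an invalid one) is simply treated as a failed round. The failure to extract is absorbed by the per-round bound, so the constant $72Q^2$ emerges from the hybrid bound and the factor $1/2$.
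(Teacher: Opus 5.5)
Your proposal is correct and follows the paper's proof essentially step for step. The shared steps are: restart $\A$ on the fixed witness $w_H$ under the table-simulated oracle, which on the good event is exactly the NO-instance oracle $M[H,E_{r-1}]$; use soundness and the hybrid lemma, evaluated on the restricted-oracle run, to get query mass at least $1/(36Q^2)$ on new valid pairs at a uniformly chosen query; and pay a factor $1/2$ for the hash-bit guess, giving $1/(72Q^2)$ per round. Your cosmetic deviations do not affect the bound: you insert only genuinely new pairs, store unpadded targets, and leave it open whether an existing guess for $(i,x)$ is overwritten (either choice keeps the guess correct with probability at least $1/2$).
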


\begin{proof}
We use the guesser given in \Cref{fig:guesser}. First, note that
by purifying the verifier and padding it with unused queries if
necessary, we may write its execution as
\[
V_Q M V_{Q-1}M\cdots M V_0\ket{w,0},
\]
followed by a measurement. Here the $V_j$ are oracle-independent
unitaries, and $M$ denotes the oracle unitary with distinct binary
encodings for $0,1,\bot$.\footnote{For an unused query, put the
response register in the uniform superposition over all binary
basis states, which is invariant under every XOR. In
\Cref{fig:guesser}, $f_{\Delta_{r-1}}$ suppresses the argument
$\widehat H_{r-1}$, and the query prefix for $j=0$ is $V_0\ket{w,0}$.}

The guesser in \Cref{fig:guesser} maintains a set $\Delta$ of
code-intersection pairs and a table $\widehat H$ of guessed hash
bits. It simulates the verifier using these tables, measures a
random query, and adds the resulting pair to $\Delta$. All oracle
calls in this simulation are implemented from the tables, so the
guesser makes no queries to $H$ or $M[H,E]$.

\begin{figure}[H]
\centering
\begin{algbox}{$\Guesser(1^L)$:}
\begin{enumerate}[label=\arabic*., leftmargin=1.5em]
    \item Sample a random $w \in \bit^{t}$ and initialize $\Delta_0 = \emptyset$ and $\widehat{H}_0 = \emptyset.$
    \item For all $r \in [L]:$
        \begin{enumerate}
            \item Sample a random $j \gets \set{0, \ldots, Q-1}$.
            \item Compute the state $ V_j\widetilde{M}_{r-1}V_{j-1} \ldots \widetilde{M}_{r-1}V_0\ket{w, 0},$ where $\widetilde{M}_{r-1}$ is the oracle unitary defined by
            \[
                \ket{i, x, c}\ket{y}\ket{z} \mapsto \ket{i, x, c}\ket{y  \oplus f_{\Delta_{r-1}}(i,x,c)}\ket{z},
            \]
            where 
            \[
                f_{\Delta_{r-1}, \widehat{H}_{r-1}}(i, x, c) = \begin{cases}
                    \widehat{H}_{r-1}(i, x) & \text{if $(\bin(i,x)\|0^{n-\ell}, c)\in \Delta_{r-1}$}  \\
                    \bot & \text{otherwise}.
                \end{cases}
            \]
            \item Measure the first register in the standard basis for output $(i, x, c)$.
            \item Sample a hash value $b \gets \bit$ and set $\widehat H_r
            \coloneqq
            \widehat H_{r-1}[(i,x)\mapsto b]$.
            \item Update $\Delta_r \coloneqq \Delta_{r-1} \cup \set{(\bin(i,x)\|0^{n-\ell}, c)}$.
        \end{enumerate} 
    \item Outputs $\set{(z[1:\ell], c): (z, c) \in \Delta_L}.$
\end{enumerate}
\end{algbox}
\caption{}
\label{fig:guesser}
\label{FigHash}
\end{figure}

Fix any $H\in\Good$ for the rest of the proof. By completeness,
there is a witness $w_H$ such that $\A^{M[H]}$ accepts with
probability at least $2/3$. We condition on the initial witness
being $w_H$. We will show that, in each round, the guesser finds
a new valid pair and guesses its hash bit correctly with
probability at least $1/(72Q^2)$, conditioned on this having
happened in all preceding rounds.

For round $r$, assume that the first $r-1$ rounds have succeeded.
Note that the set $\Delta_{r-1}$ in the figure stores padded
targets. Thus, let
\[
E_{r-1}:=\{(z[1:\ell],c):(z,c)\in\Delta_{r-1}\}.
\]
Since all pairs in $\Delta_{r-1}$ are valid and all guessed hash
bits are correct, the simulated oracle is exactly $M[H,E_{r-1}]$.
Let $F_{r-1}$ be the projection of $E_{r-1}$ onto its first
coordinate. Observe that
\[
E_{r-1}\subseteq F_{r-1}\times\Sigma^n,
\qquad |F_{r-1}|\le r-1<L\le n|\Sigma|/3.
\]
Hence, $M[H,E_{r-1}]$ is a NO instance. By soundness,
$\A^{M[H,E_{r-1}]}$ accepts $w_H$ with probability at most $1/3$.

We now claim that measuring a uniformly chosen query in this
execution gives a new valid pair with probability at least
$1/(36Q^2)$. To see this, note that $M[H]$ and $M[H,E_{r-1}]$
differ precisely on the inputs in
\[
S_{r-1}:=\{(i,x,c):c\in C,\ H(c)=\bin(i,x)\|0^{n-\ell},
                  (\bin(i,x),c)\notin E_{r-1}\}.
\]
For $j\in\{0,\ldots,Q-1\}$, let $\mu_j$ be the query weight
on $S_{r-1}$ immediately before query $j+1$ in the
$M[H,E_{r-1}]$ execution. The acceptance probabilities of the
two executions differ by at least $1/3$. Thus, by
\Cref{lem:query-hybrid},
\[
\frac13\le2\left(Q\sum_{j=0}^{Q-1}\mu_j\right)^{1/2},
\qquad
\frac1Q\sum_{j=0}^{Q-1}\mu_j\ge\frac{1}{36Q^2},
\]
as claimed.

After measuring the query, the guesser samples a fresh bit for
$H_i(x)$, which is correct with probability $1/2$. This is true
also when the table already has an entry for $(i,x)$ and the new
sample replaces it. Therefore, conditioned on the first $r-1$
rounds having succeeded, round $r$ succeeds with probability at
least $1/(72Q^2)$. Multiplying these probabilities over all $L$
rounds completes the proof.
\end{proof}

\begin{lemma}
    For any constant $a > 0$, let $Q(\secp), t(\secp)$ be functions that satisfy $Q(\secp) \le a \secp^a, t(\secp) \le a \secp^a$ for sufficiently large $\secp$. Then for sufficiently large $\secp$, for all quantum query algorithms $\A$ which take a classical $t(\secp)$-bit witness and makes $Q(\secp)$ oracle queries $M[H,E]$ of size $\secp$, there exists an oracle $M[H^*,E^*]$ of size $\secp$ such that $H^* \in \Good$ and 
    \begin{enumerate}
        \item either $E^* = \bit^\ell \times \Sigma^n$, but for all $t(\secp)$-bit witnesses $w$,
        \[
            \Pr{\A^{M[H^*,E^*]}(w) = 1} < \frac{2}{3},
        \]
        \item or $E^* \subseteq F^*  \times \Sigma^n \subseteq \bit^\ell \times \Sigma^n$ such that $|F^*|\le n|\Sigma|/3$, but there exists a $t(\secp)$-bit witness $w$ such that 
        \[
            \Pr{\A^{M[H^*,E^*]}(w) = 1} > \frac{1}{3}.
        \]
    \end{enumerate} 
\end{lemma}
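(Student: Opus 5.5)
We argue by contradiction, combining \Cref{lem:qcma-guesser} with \Cref{tools:lem-guessing-upper-bound}. Suppose that for some sufficiently large $\secp$ there is an algorithm $\A$ with a $t(\secp)$-bit classical witness and $Q(\secp)$ queries for which neither alternative holds for any $H^*\in\Good$. Then every YES oracle $M[H]$ with $H\in\Good$ has a witness accepted with probability at least $2/3$. Likewise, every NO oracle $M[H,E]$ with $H\in\Good$ and $E\subseteq F\times\Sigma^n$, $|F|\le n|\Sigma|/3$, has all witnesses accepted with probability at most $1/3$. So $\A$ satisfies completeness and soundness for every promised instance at parameter $\secp$, which is exactly the hypothesis of \Cref{lem:qcma-guesser}. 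We may assume $Q\ge1$; a query-free $\A$ cannot distinguish $M[H]$ from $M[H,\varnothing]$, so one of the two alternatives holds trivially.

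Fix an integer $L$, to be chosen below, with $1\le L\le\min\{\lfloor n|\Sigma|/3\rfloor,2^\secp\}$. By \Cref{lem:qcma-guesser} there is a query-free $\Guesser(1^L)$, fixed independently of $H$, such that for every $H\in\Good$ some $w_H$ satisfies $\Pr{\mathsf{Valid}_H(\Guesser(1^L,w_H))}\ge(72Q^2)^{-L}$. We average over $H\leftarrow\Bias$. By \Cref{tools:lem-good-hash-probability}, $\Pr{H\in\Good}\ge1-2^{-2\secp}\ge1/2$, which gives
\[
\mathbb E_{H}\Bigl[\max_{w}\Pr{\mathsf{Valid}_H(\Guesser(1^L,w))}\Bigr]\ge\tfrac12(72Q^2)^{-L}.
\]
On the other hand, \Cref{tools:lem-guessing-upper-bound} with $p=\secp^{-4}$ bounds the same quantity by $2^{t}(1-\secp^{-4})^{\secp^5L/2}\le2^{t}e^{-\secp L/2}$.

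It remains to choose $L$ so that the two bounds contradict each other. We need $\tfrac12(72Q^2)^{-L}>2^te^{-\secp L/2}$, i.e.
\[
L\bigl(\secp/2-\ln(72Q^2)\bigr)>t\ln2+\ln2 .
\]
Since $Q\le a\secp^a$, we have $\ln(72Q^2)=O(\log\secp)\le\secp/4$ for large $\secp$. It therefore suffices that $L\secp/4>(t+1)\ln2$, and taking $L=\lceil 4(t+1)/\secp\rceil+1=O(\secp^{a})$ works. This $L$ is at most $2^\secp$ and at most $n|\Sigma|/3$ for large $\secp$, since $|\Sigma|=q^\secp$.

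The step needing most care is this parameter bookkeeping. The per-round loss $1/(72Q^2)$, including the factor $1/2$ for guessing the hash bit, is only polylogarithmic in the exponent, while the biased-hash bound gains $\Omega(\secp)$ per pair. One must also check that the range constraints on $L$ in both lemmas are met simultaneously and that restricting to $\Good$ costs only a constant factor. Once the contradiction is reached, some $H^*\in\Good$ and $E^*$ violate completeness or soundness, which is exactly one of the two alternatives in the statement.
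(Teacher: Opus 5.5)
Your proposal is correct and follows the paper's argument: you negate the conclusion to obtain completeness and soundness on every promised instance, apply \Cref{lem:qcma-guesser} to get the lower bound $(72Q^2)^{-L}$ for each good hash, average over $H\leftarrow\Bias$ using the probability of $\Good$, and contradict \Cref{tools:lem-guessing-upper-bound}. The differences are cosmetic and your versions also work. The paper pads to $Q=\max\{1,Q(\lambda)\}$, takes $L=\max\{t,\lambda\}$, and compares $L$th roots, whereas you dispose of $Q=0$ directly and compare logarithms with the smaller choice $L=\lceil 4(t+1)/\lambda\rceil+1$.
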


\begin{proof}
Suppose for a contradiction that $\A$ satisfies completeness and
soundness for every promised instance at parameter $\lambda$.
We write $t=t(\lambda)$ and $Q=\max\{1,Q(\lambda)\}$, and set
\[
L:=\max\{t,\lambda\}.
\]
Note that $L$ is polynomially bounded. Thus, for sufficiently
large $\lambda$, we have
\[
L\le\min\{2^\lambda,\lfloor n|\Sigma|/3\rfloor\}.
\]
We now apply \Cref{lem:qcma-guesser}. Let
\[
P_H(w):=\Pr[T\leftarrow\Guesser(1^L,w)]{\mathsf{Valid}_H(T)}.
\]
For every $H\in\Good$, the lemma gives
$\max_w P_H(w)\ge(72Q^2)^{-L}$. By
\Cref{tools:lem-good-hash-probability}, a hash sampled from $\Bias$
belongs to $\Good$ with probability at least $1-2^{-2\lambda}$.
Averaging over $H$, we obtain
\[
\mathbb E_{H\leftarrow\Bias}\!\left[\max_{w\in\{0,1\}^t}P_H(w)\right]
\ge(1-2^{-2\lambda})(72Q^2)^{-L}.
\]

On the other hand, note that for each fixed $w$, the distribution
of $\Guesser(1^L,w)$ is independent of $H$.
After padding each output target $z$ to $z\|0^{n-\ell}$,
\Cref{tools:lem-guessing-upper-bound} applies directly and gives
\begin{align*}
\mathbb E_{H\leftarrow\Bias}\!\left[\max_{w\in\{0,1\}^t}P_H(w)\right]
&\le2^t(1-\lambda^{-4})^{nL/2}\\
&\le2^t e^{-\lambda L/2}
\le\left(2e^{-\lambda/2}\right)^L,
\end{align*}
where we used $n\ge\lambda^5$ and $t\le L$.
Combining the two bounds and taking $L$th roots yields
\[
\frac{1}{144Q^2}
\le\frac{(1-2^{-2\lambda})^{1/L}}{72Q^2}
\le2e^{-\lambda/2}
\]
for sufficiently large $\lambda$. However, $Q$ is polynomially
bounded, so this is a contradiction. This completes the proof of
the lemma, and hence the
$\QCMA$ part of \Cref{thm:self-encrypted-ciss}.
\end{proof}
\newpage\section{Separating $\cBQPqpoly$ from $\BQPpoly$}\label{sec:BQP}
In this section, we prove the following theorem.
\mainseparationBQP*

To this end, we prove an oracle-input separation between these two
classes. We use the same code family, parameters, and self-encrypted
oracle $M[H]$ as in \Cref{thm:self-encrypted-ciss}. Write
$\Bias=\wBias_{n,p,\Sigma}$ and
\[
D:=\bin([n]\times\Sigma),\qquad N:=|D|=n|\Sigma|.
\]
In particular, $2^\ell/4\le N\le2^\ell$ and
$\ell=\Theta(\lambda\log\lambda)$.

\begin{definition}[Self-encrypted function evaluation]
\label{def:self-encrypted-evaluation}
Let $C\subseteq\Sigma^n$ be a code over a finite alphabet $\Sigma$,
let $H=(H_i:\Sigma\to\{0,1\})_{i\in[n]}$ be binary hash functions,
and let $G:D\to\{0,1\}$. Extend $G$ to $\{0,1\}^{\ell}$ by setting
$G(z)=0$ for $z\notin D$. Define
\[
M'[H,G](i,x,c)=\begin{cases}
    G(\bin(i,x))\|H_i(x)
        & \text{if $c\in C$ and $H(c)=\bin(i,x)\|0^{n-\ell}$,}\\
    \bot & \text{otherwise}.
\end{cases}
\]
The self-encrypted function evaluation problem is to compute $G(z)$,
given $z\in\{0,1\}^{\ell}$ and oracle access to $M'[H,G]$.
\end{definition}

\begin{theorem}[Oracle-input separation]
\label{thm:advice-oracle-input}
Fix an $(r,L_{\max})$-good code family and the parameters of
\Cref{thm:self-encrypted-ciss}. Let $\cD_\lambda$ be the distribution
of $(G,M')$ obtained by sampling $H\gets\Bias$ and, independently,
a uniformly random $G:D\to\{0,1\}$, and setting $M'=M'[H,G]$.
Then:
\begin{enumerate}
    \item There are polynomial-time uniform quantum algorithms
    $\A,\operatorname{Clone}$ and a family of
    $\operatorname{poly}(\lambda)$-qubit pure advice states
    $\{\ket{\alpha_{M'}}\}_{M'}$, depending only on $M'$, such that
    $\A$ makes at most one oracle query and
    \[
    \Pr[(G,M')\gets\cD_\lambda][\Bigg]{
    \begin{gathered}
    \forall z\in\{0,1\}^{\ell},\quad
    \Pr{\A^{M'}(\ket{\alpha_{M'}},z)=G(z)}\ge1-2^{-\lambda},\\
    F\!\left(\operatorname{Clone}^{M'}(\ket{\alpha_{M'}}),
             \ket{\alpha_{M'}}^{\otimes2}\right)
        \ge1-\negl(\lambda)
    \end{gathered}
    }\ge1-2^{-2\lambda},
    \]
    for all sufficiently large $\lambda$.
    \item For every pair of polynomial bounds $t,Q$, all sufficiently
    large $\lambda$, every quantum query algorithm $\B$ making at most
    $Q(\lambda)$ queries, and every family of $t(\lambda)$-bit classical
    advice strings $\{adv_{M'}\}_{M'}$,
    \[
    \Pr[(G,M')\gets\cD_\lambda,\ z\gets D]{
        \B^{M'}(adv_{M'},z)=G(z)}\le\frac35.
    \]
\end{enumerate}
\end{theorem}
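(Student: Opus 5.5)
For part 1, I would take $\ket{\alpha_{M'}}:=\ket{\psi_H}$, the code intersection state. It depends on $M'$ only through $H$, and whenever $G$ is not identically zero on $D$, $H$ is determined by $M'$ (and I would fix $\ket{\alpha_{M'}}$ for any one consistent $H$ otherwise).

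The algorithm $\A(\ket{\psi_H},z)$ first rejects any $z\notin D$ by outputting $0$. Otherwise it runs $\BiasedYZ(\ket{\psi_H},z)$, measures a codeword $c$, decodes $z=\bin(i,x)$, queries $M'(i,x,c)$, and outputs the first bit of the answer. By \Cref{tools:lem-biased-yz}, on every $H\in\Good$ this succeeds for every $z$ with probability at least $1-2^{-\lambda}$. Since $\Pr[H\in\Good]\ge1-2^{-2\lambda}$ by \Cref{tools:lem-good-hash-probability}, the overall probability bound over $\cD_\lambda$ follows.

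For cloning, note that $M'$ reveals $H_i(x)$ as its second bit under exactly the same unlocking condition as $M[H]$. The cloner of \Cref{tools:lem-cloning} therefore works verbatim: it keeps the second bit and leaves $G$ in a register that is uncomputed by the $U^{\YZ,-1}$ step. The only change is in the ideal state, where the response register now holds $G(\bin(i,x))\|H_i(x)$, which is still a deterministic function of $x$. So the same $\sqrt2\,2^{-\lambda/2}$ bound per call applies, and the state $\sum_x\ket{x}\ket{G(\bin(i,x))}\ket{H_i(x)}$ can be measured on the last bit, with the $G$-register erased by a second unlocked query (or simply measured, since conditioning on $H_i(x)$ only is what matters; to be safe, the cloner queries again to XOR out $G$).

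For part 2, I would argue by contradiction, following \Cref{lem:qcma-guesser}. Suppose $\B$ with advice $adv_{M'}$ has average success greater than $3/5$. The extraction step relies on a hybrid between $M'[H,G]$ and a simulated oracle that answers only on a stored set $\Delta$ of valid pairs and returns $\bot$ elsewhere. I would have the guesser, in each round, sample $z\gets D$ and run $\B$ against the simulated oracle. Without queries to unstored valid pairs, $\B$ cannot learn $G(z)$ for $z$ outside the projection $F$ of $\Delta$ (this needs care, since the simulated oracle is built from the guessed tables). On a uniformly random $z\notin F$, its success is then at most $1/2$, because $G(z)$ is a fresh uniform bit independent of the view. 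Since $|F|<L\ll N$, the success gap is at least about $1/10$. The hybrid lemma then places $\Omega(1/Q^2)$ average query weight on unstored valid pairs, and measuring a random query yields a new valid pair.

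The guesser must guess two bits per stored pair, namely $G(\bin(i,x))$ and $H_i(x)$, costing a factor $1/4$ per round. It also guesses the advice, for a total of $2^{-t}$. I would also handle the averaging over $(G,H)$: by Markov, with probability at least $\Omega(1)$ over $(G,H)$ the advice yields success greater than $3/5-\epsilon$ over $z$, and on these instances the extraction works. This gives a query-free guesser with $\mathbb E_H[\max_w P_H(w)]\ge\Omega(1)\cdot(cQ^{-2})^L$. Comparing with \Cref{tools:lem-guessing-upper-bound}, where the advice string plays the role of $w$ and $G$ is absorbed into the guesser's internal randomness because it is independent of $H$ and the bound holds for any $w(H,G)$ via a union over strings, we take $L=\max\{t,\lambda\}$ and obtain a contradiction exactly as before.

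The main obstacle is the decision-versus-search gap: a single hybrid comparison must show that success on random $z$ noticeably above $1/2$ forces query weight on unstored pairs. This in turn requires arguing that the simulated execution's success on $z\notin F$ is at most $1/2+O(L/N)$ even though the advice depends on $G$. Here I would treat $adv$ as fixed, average over $G(z)$ conditioned on the advice, and lose only what the $t$-bit advice can encode. This is the usual box-problem argument, and I would try to bound it by a union over advice strings inside the guessing bound rather than pointwise.
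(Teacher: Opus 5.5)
Your Part~2 takes a different route from the paper. The route can be made to work, but the step it rests on is not proved.

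\textbf{How the routes differ.} The paper never bounds $\B$ against a database-restricted oracle. It first proves a search lower bound for $M[H]$ with classical advice (\Cref{lem:avg-seach-separation}, via the $\smallset$ dichotomy and the punctured oracle $M_H^*$). It then punctures $G$ only at the target $z$ and applies the finite-domain Yao box bound. This forces query mass on valid preimages of $\B$'s own target, which contradicts the search bound after hardwiring a fixed $G^*$. You instead iterate the extraction of \Cref{lem:qcma-guesser} directly on $\B$.

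\textbf{The gap.} Your route needs the following. Against the simulated oracle, which agrees with $M'[H,G]$ on the stored pairs $\Delta$ and returns $\bot$ elsewhere, the success averaged over $z\gets D$ must be at most $1/2+o(1)$. Your justification, that $G(z)$ is a fresh uniform bit independent of the view, is false. The advice $adv_{M'[H,G]}$ is a function of all of $G$. Moreover, $\Delta$ is produced by a process correlated with $G$, both through the advice and through conditioning on earlier $G$-bit guesses being correct. You flag this but leave it open. Your suggested remedy, a union over advice strings inside the guessing bound, does not by itself address it, because \Cref{tools:lem-guessing-upper-bound} concerns the randomness of $H$, not of $G$.

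\textbf{How to close it.} Use concentration over $G$ that holds for all configurations at once. Fix $H$, an advice string $a\in\{0,1\}^t$, and a database $\Delta$ of fewer than $L$ pairs with target projection $F$.
\begin{itemize}
\item Given $G|_F$, the simulated oracle is determined. So $z\mapsto\operatorname{Pr}[\B(a,z)=1]$ is a fixed function on $D\setminus F$, while $G|_{D\setminus F}$ is uniform and independent of $G|_F$.
\item By Hoeffding, for a small constant $\epsilon$, the average success over $z\gets D$ exceeds $1/2+\epsilon+L/N$ with probability at most $e^{-2\epsilon^2N}$.
\item There are only $2^{\operatorname{poly}(\lambda)}$ pairs $(a,\Delta)$, whereas $N\ge\lambda^{5\lambda}$. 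So, except with negligible probability over $G$, the bound holds simultaneously for all of them. In particular it holds for $a=adv_{M'[H,G]}$ and for whatever database the guesser holds.
\item Let $s_{H,G}$ be the real success over $z$. Then $s_{H,G}>3/5-\epsilon$ with probability at least $\epsilon/(2/5+\epsilon)$ over $(H,G)$.
\end{itemize}
Together these give a constant gap on a constant fraction of $(H,G)$. Your per-round bound of $\Omega(Q^{-2})\cdot\frac14$ and the comparison with \Cref{tools:lem-guessing-upper-bound} at $L=\max\{t,\lambda\}$ then go through.

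\textbf{What your route buys.} With this step filled in, you bypass both \Cref{lem:avg-seach-separation} and \Cref{lem:yao-box}. Restricting to $\Delta$ hides $G$ on all but fewer than $L$ targets, so plain counting suffices. The paper needs box-problem machinery because it punctures only at $z$ and $G$ stays queryable elsewhere.

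\textbf{Part~1.} Two points need correcting.
\begin{itemize}
\item $M'$ determines $H$ because $H\in\Good$ makes every encoded target solvable, so the second output bit reveals every $H_i(x)$. Whether $G\equiv0$ is irrelevant.
\item $U^{\YZ,-1}$ does not act on the response register, so it does not uncompute the $G$-bit. Measuring that bit is not harmless: it would collapse $\bX$ onto $\{x:G(\bin(i,x))=g\}$ and destroy $\ket{H_i^{-1}(b)}$. Your fallback of XOR-ing the $G$-bit out with another unlocked query is what is needed. The paper achieves the same by simulating each $M[H]$ query with two $M'$ queries.
\end{itemize}
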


We prove the two parts in \Cref{subsec:clonable-advice,subsec:classical-advice},
respectively, and then obtain the fixed-oracle separation by
diagonalization.\footnote{As in \Cref{subsec:quantum-query-model},
the lower bounds allow arbitrary oracle-independent computation and
nonuniform algorithms. The upper-bound algorithms are uniform and
polynomial time. All algorithms receive $1^\lambda$, which we leave
implicit. We use the usual fixed-length binary encoding of $(i,x)$,
so membership in $D$ and decoding on $D$ are efficient.}

\subsection{The $\cBQPqpoly$ Algorithm}
\label{subsec:clonable-advice}

In this subsection, we prove the first part of
\Cref{thm:advice-oracle-input}. We use the code intersection state from
\Cref{tools:subsec-state-and-good-hashes} as our quantum advice:
\[
\ket{\alpha_{M'[H,G]}}:=\ket{\psi_H}
=\left(\bigotimes_{i=1}^n\ket{H_i^{-1}(0)}\right)
\otimes\left(\bigotimes_{i=1}^n\ket{H_i^{-1}(1)}\right)
\otimes\frac{1}{\sqrt{|C|}}\sum_{c\in C}\ket{c}.
\]
Here we use this state whenever $H\in\Good$, and assign an arbitrary
fixed basis state to any remaining oracle.\footnote{This assignment
depends only on $M'$. Indeed, if $H\in\Good$, every encoded target
has a codeword preimage. Thus, for every $(i,x)$, some answer of
$M'[H,G](i,x,\cdot)$ reveals $H_i(x)$, uniquely determining $H$.
Consequently, two representations of the same oracle cannot give
different good hashes.}
The advice has $O(ns\log q)=\operatorname{poly}(\lambda)$ qubits and
is the same for every input $z$.

On input $z\notin D$, the algorithm outputs $0$. Otherwise, it decodes
$z=\bin(i,x)$ and runs $\BiasedYZ(\ket{\psi_H},z)$ to obtain $c$.
If $c\notin\Sigma^n$, it outputs $0$; otherwise, it queries
$M'[H,G](i,x,c)$. If the answer is not $\bot$, it outputs its
first bit; otherwise, it outputs $0$.

Fix any $H\in\Good$ and $G$ for the rest of the argument. First,
note that the algorithm is always correct outside $D$. For every
$z\in D$, \Cref{tools:lem-biased-yz} gives
\[
\Pr[c\gets\BiasedYZ(\ket{\psi_H},z)]{
    c\in C\ \land\ H(c)=z\|0^{n-\ell}}\ge1-2^{-\lambda}.
\]
Whenever this happens, the first bit returned by $M'[H,G]$ is $G(z)$.
Thus, the algorithm has the claimed correctness for every $z$.

We now prove that the advice can be cloned. Observe that $M'[H,G]$
allows us to simulate $M[H]$ coherently: compute the answer of $M'$,
copy its hash-bit component, or $\bot$ if the answer is $\bot$, to the
response register, and uncompute the answer of $M'$. This implements
one query to $M[H]$ using two queries to $M'$. Applying this simulation
to the cloning algorithm of \Cref{tools:lem-cloning} gives
\[
F\!\left(\operatorname{Clone}^{M'[H,G]}(\ket{\psi_H}),
         \ket{\psi_H}^{\otimes2}\right)\ge1-\negl(\lambda).
\]
The simulation is exact, so it does not change the cloning error.
Finally, \Cref{tools:lem-good-hash-probability} gives
\[
\Pr[H\gets\Bias]{H\in\Good}\ge1-2^{-2\lambda}.
\]
This proves the first part of \Cref{thm:advice-oracle-input}.

\subsection{The Non-Existence of Classical Advice}
\label{subsec:classical-advice}

In this subsection, we prove the second part of
\Cref{thm:advice-oracle-input}. We begin by proving a search-like
separation between $\cBQPqpoly$ and $\BQPpoly$, following
\cite{BHV:2026}.

\begin{lemma}\label{lem:avg-seach-separation}
For the code family and parameters of \Cref{thm:self-encrypted-ciss},
the following hold:
\begin{enumerate}
    \item There is a polynomial-time uniform quantum query algorithm
    $\A$ and a family of $\operatorname{poly}(\lambda)$-qubit advice
    states $\{\ket{\alpha_M}\}_M$, depending only on $M$, such that
    \[
    \Pr[H\gets\Bias][\Bigg]{
        \begin{gathered}
        \forall z\in\{0,1\}^{\ell},\\
        \Pr[c\gets\A^{M[H]}(\ket{\alpha_{M[H]}},z)]{
            c\in C\ \land\ H(c)=z\|0^{n-\ell}}
        \ge1-2^{-\lambda}
        \end{gathered}}
    \ge1-2^{-2\lambda}.
    \]
    On the same event, the advice can be efficiently cloned using
    $M[H]$ with negligible error.
    \item For all quantum query algorithms $\B$ making
    $Q(\lambda)=\operatorname{poly}(\lambda)$ queries to $M[H]$ and
    all families of $t(\lambda)=\operatorname{poly}(\lambda)$-bit
    classical advice $\{adv_M\}_M$ depending only on $M$,
    \[
    \Pr[H\gets\Bias,\ z\gets\{0,1\}^{\ell}][\Big]{
        c\in C\ \land\ H(c)=z\|0^{n-\ell}
        \;\Big|\;c\gets\B^{M[H]}(adv_{M[H]},z)}
    \le\negl(\lambda).
    \]
    The bound is uniform over algorithms with the stated query and
    advice bounds.
\end{enumerate}
\end{lemma}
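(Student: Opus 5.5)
Part~1 is essentially already proved. I would take the advice $\ket{\alpha_{M[H]}}:=\ket{\psi_H}$ when $H\in\Good$ (well defined from $M[H]$ by the footnote argument above), and an arbitrary fixed basis state otherwise. The algorithm $\A$ simply runs $\BiasedYZ(\ket{\psi_H},z)$ and makes no query. \Cref{tools:lem-biased-yz} gives success at least $1-2^{-\lambda}$ for every $z$ whenever $H\in\Good$. \Cref{tools:lem-cloning} gives cloning on the same event. \Cref{tools:lem-good-hash-probability} bounds the probability of that event by $1-2^{-2\lambda}$.

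For Part~2, I would argue by contradiction and follow the same query-extraction strategy as \Cref{lem:qcma-guesser}. Suppose $\B$ with advice $adv_M$ succeeds with probability $\epsilon=1/\operatorname{poly}(\lambda)$ on average. By Markov, on an $\epsilon/2$ fraction of $H$ it succeeds with probability at least $\epsilon/2$ over random $z$. I would build a query-free guesser that receives the $t$-bit advice, or guesses it at cost $2^{-t}$, and maintains a table $\Delta$ of pairs and guessed hash bits $\widehat H$. It simulates $M[H]$ by the restricted oracle $M[H,E_\Delta]$, which answers $\bot$ outside $\Delta$ and otherwise returns the stored guessed bit. In each round it picks a fresh uniform $z$ and runs $\B$ against the simulated oracle. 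With probability $1/2$ it outputs the final $c$ together with $z$. Otherwise it measures a uniformly random query $(i,x,c)$ and guesses a fresh hash bit for $(i,x)$.

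The key per-round claim is the following. Conditioned on earlier rounds being correct, the simulated oracle equals $M[H,E]$ with $|E|<L$. Then either $\B^{M[H,E]}$ already outputs a valid pair for $z$ with probability at least $\epsilon/4$, or the hybrid lemma (\Cref{lem:query-hybrid}) shows the query weight on $S_E$ is at least $\epsilon^2/(64Q^2)$ on average. In the first case the new pair is fresh except with probability $L/2^{\ell}$, because $z$ is uniform and only $L$ targets lie in $\Delta$. In the second case a measured query lands on a new valid pair. Including the factor $1/2$ for the branch choice and $1/2$ for the hash guess, each round succeeds with probability $\Omega(\epsilon^2/Q^2)$. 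This yields $\Pr[\mathsf{Valid}_H]\ge 2^{-t}(\epsilon^2/(cQ^2))^L$ on a noticeable fraction of $H$.

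I then compare this with \Cref{tools:lem-guessing-upper-bound} with $L=\max\{t,\lambda\}$. The upper bound is $2^t e^{-\lambda L/2}$, so the inequality fails for large $\lambda$, as in the $\QCMA$ proof. Since $\epsilon$ was any inverse polynomial, the success probability must be negligible.

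The main obstacle is that the success event is stated for the real oracle $M[H]$ but must be transferred to the simulated one. Another subtlety is that $\B$'s output may be a pair whose target $z$ is already in $\Delta$, in which case it is not new. The case split above is meant to handle both issues. The fresh-$z$ argument must be checked carefully, because the target $z$ is random per round and independent of $\Delta$ only through the guesser's own randomness, which is fine because it is sampled afresh each round.
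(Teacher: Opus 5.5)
Your Part~1 is the paper's argument. For Part~2 your proof is correct but takes a different, shorter route.

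The paper's route splits on the event $\mathsf{SmallSet}_H$. If no set of at most $2^{\lambda/2}$ pairs absorbs almost all of $\A_1$'s query mass on differing points, the extraction guesser of \Cref{lem:spread-sampler} applies, and one use of \Cref{tools:lem-guessing-upper-bound} shows this case is rare. On $\mathsf{SmallSet}_H$, it punctures to $M_H^*$ and runs $\A_1^{M_H^*}$ about $2^{3\lambda/4}$ times on fresh targets. It then invokes the guessing bound a second time, with the roughly $2^{2\lambda/3}$-bit table of $M_H^*$ as classical side information.

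You instead make the dichotomy adaptively in each round, against the current set $E$ of extracted pairs. Either $\B^{M[H,E]}$ still solves the fresh uniform target with probability at least $\epsilon/4$, so its output is a new valid pair except with probability $L/2^{\ell}$. Or the drop of $\epsilon/4$ forces average query mass $\epsilon^2/(64Q^2)$ on unextracted valid pairs, by \Cref{lem:query-hybrid} and Jensen over $z$. A fair coin between outputting and measuring covers both cases.

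Your version gives a single polynomial-round guesser and one application of the guessing lemma with only $t$ bits of side information. It also yields a direct bound of order $\operatorname{poly}(Q)\,e^{-\Omega(\lambda)}$, with no exponential repetition, no table encoding, and no infinitely-many-$\lambda$ bookkeeping. The paper's route follows the spread-versus-concentrated template of \cite{BHV:2026}, reducing the concentrated case to a purely classical side-information statement. Your adaptive version loses nothing here.

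One detail to state explicitly concerns the output branch. When $z\in D$, decode $z=\bin(i,x)$ and guess $H_i(x)$, so the next simulated oracle is exactly $M[H,E\cup\{(z,c)\}]$. When $z\notin D$ no guess is needed: such a pair never unlocks an oracle answer but still counts toward $\mathsf{Valid}_H$.
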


\begin{proof}
The first part follows from
\Cref{tools:lem-good-hash-probability,tools:lem-biased-yz,tools:lem-cloning}:
use $\ket{\psi_H}$ and run $\BiasedYZ$ without making any queries.
As above, for a good hash the advice depends only on $M[H]$.

We now proceed to prove the second part. Suppose for a contradiction
that there exist a polynomial $P(\lambda)$, a query adversary $\B$,
and a family of classical advice strings such that, for infinitely
many $\lambda$, the success probability in the second part is at least
$1/P(\lambda)$. Write $a_H:=adv_{M[H]}$ and let $t=t(\lambda)$
be its length. We may assume $Q=Q(\lambda)\ge1$ by padding with
unused queries.

We convert $\B$ into another query adversary $\A_1$ that only takes
$a_H$ as input and makes $Q$ queries to $M[H]$. It samples a uniformly
random $z\gets\{0,1\}^{\ell}$, runs $\B^{M[H]}(a_H,z)$ to obtain
$c$, and outputs $(z,c)$. By the definition of $\B$, we have
\begin{equation}
\Pr[H\gets\Bias][\Big]{
    c\in C\ \land\ H(c)=z\|0^{n-\ell}
    \;\Big|\;(z,c)\gets\A_1^{M[H]}(a_H)}
\ge\frac{1}{P(\lambda)}.
\label{eq:A_1-lower-bound}
\end{equation}

For each $H$ and $S\subseteq\{0,1\}^{\ell}\times\Sigma^n$, define
$\mu_{H,\bar S}$ to be the total query mass that
$\A_1^{M[H,S]}(a_H)$ places on points where $M[H,S]$ and $M[H]$
differ. Note that we use the restricted oracle execution and keep
the advice $a_H$ fixed. Let $\smallset$ be the event that there exists
a set $L_H\subseteq\{0,1\}^{\ell}\times\Sigma^n$ such that
\[
|L_H|\le2^{\lambda/2},\qquad
\mu_{H,\overline{L_H}}\le\frac{1}{256P(\lambda)^2Q}.
\]
We first show that, if the query mass of $\A_1$ is sufficiently
spread out, then we can convert $\A_1$ into a sampler that generates
many valid hash-code pairs. We use $\mathsf{Valid}_H$ as defined in
\Cref{subsec:qcma-hardness}.

\begin{lemma}\label{lem:spread-sampler}
There is a query-free algorithm $\Guesser(1^L,w)$, fixed independently
of $H$, such that whenever $\smallset$ does not occur, for every
integer $1\le L\le2^{\lambda/2}$,
\[
\Pr[T\gets\Guesser(1^L,a_H)]{\mathsf{Valid}_H(T)}
\ge\left(\frac{1}{512P(\lambda)^2Q^2}\right)^L.
\]
\end{lemma}
\begin{proof}
The algorithm is the guesser in \Cref{fig:guesser}, with $\A_1$ in
place of the verifier and with its initial classical string fixed to
the input $w$, as in \Cref{subsec:qcma-hardness}.

For each round $r$, let $X_r$ be the event that the measured query
$(i,x,c)$ gives a new valid pair
$(\bin(i,x)\|0^{n-\ell},c)\notin\Delta_{r-1}$, and let $E_r$ be
the event that the guessed bit is $H_i(x)$. Condition on
$X_1\land E_1\land\cdots\land X_{r-1}\land E_{r-1}$. Then every
stored pair is valid and every stored hash bit is correct, so the
simulated oracle is exactly $M[H,S_{r-1}]$, where $S_{r-1}$ is obtained
by removing the padding from the targets in $\Delta_{r-1}$.

Since $|S_{r-1}|\le r-1\le2^{\lambda/2}$ and $\smallset$ does not
occur, the total query mass on points where this oracle differs from
$M[H]$ exceeds $1/(256P(\lambda)^2Q)$. Measuring a uniformly random
one of the $Q$ queries therefore gives
\[
\Pr{X_r\mid X_1\land E_1\land\cdots\land X_{r-1}\land E_{r-1}}
\ge\frac{1}{256P(\lambda)^2Q^2}.
\]
The guesser then samples an independent uniform bit. Thus, the
conditional probability of $X_r\land E_r$ is at least
$1/(512P(\lambda)^2Q^2)$. Multiplying these conditional probabilities
over all $L$ rounds proves the claim.
\end{proof}

Let $\Lambda$ be the infinite set of parameters for which
\Cref{eq:A_1-lower-bound} holds. We continue to show that $\A_1$'s
queries are concentrated on very few points for all but finitely
many $\lambda\in\Lambda$.

\begin{lemma}\label{lem:advice-query-concentration}
There are only finitely many $\lambda\in\Lambda$ such that
\[
\Pr[H\gets\Bias]{\smallset}<1-\frac{1}{4P(\lambda)}.
\]
\end{lemma}
\begin{proof}
Suppose for a contradiction that this inequality holds for infinitely
many $\lambda\in\Lambda$. By \Cref{lem:spread-sampler}, for every
$L\le2^{\lambda/2}$,
\[
\mathbb E_{H\gets\Bias}
\Pr[T\gets\Guesser(1^L,a_H)]{\mathsf{Valid}_H(T)}
>\frac{1}{4P(\lambda)}
\left(\frac{1}{512P(\lambda)^2Q^2}\right)^L.
\]
On the other hand, \Cref{tools:lem-guessing-upper-bound}, with $t$ bits
of classical side information, bounds the same probability by
\[
2^t(1-p)^{nL/2}\le2^t e^{-\lambda L/2}.
\]
Take $L=\max\{t,\lambda\}$, which is at most $2^{\lambda/2}$ for all
sufficiently large $\lambda$. Taking $L$th roots, the preceding two
bounds would imply
\[
\frac{1}{1024P(\lambda)^2Q^2}<2e^{-\lambda/2}
\]
for infinitely many $\lambda$, where we used
$(4P(\lambda))^{-1/L}\ge1/2$. This is impossible, since the left-hand
side is inverse polynomial and the right-hand side is exponentially
small. This completes the proof.
\end{proof}

Having shown that $\smallset$ occurs with high probability, we now
construct a sampler that generates many valid hash-code pairs.
For each $H$ satisfying $\smallset$, fix any set $L_H^*$ witnessing
this event; otherwise, set $L_H^*:=\emptyset$. Define the punctured
oracle
\[
M_H^*(i,x,c):=M[H,L_H^*](i,x,c)=\begin{cases}
H_i(x)&\text{if $c\in C$, $H(c)=\bin(i,x)\|0^{n-\ell}$,}\\
      &\quad\text{and $(\bin(i,x),c)\in L_H^*$,}\\
\bot&\text{otherwise}.
\end{cases}
\]
Let $s_H$ be the probability that $\A_1^{M_H^*}(a_H)$ outputs a
valid pair. Whenever $\smallset$ occurs, \Cref{lem:query-hybrid} bounds
the change in this probability by
\[
2\sqrt{Q\mu_{H,\overline{L_H^*}}}\le\frac{1}{8P(\lambda)}.
\]
Thus, by \Cref{eq:A_1-lower-bound,lem:advice-query-concentration}, for
all sufficiently large $\lambda\in\Lambda$,
\[
\mathbb E_{H\gets\Bias}s_H
\ge\frac{1}{P(\lambda)}-\Pr[H\gets\Bias]{\neg\smallset}
   -\frac{1}{8P(\lambda)}
\ge\frac{1}{2P(\lambda)}.
\]

We define a sampler $\A_2$ which takes $a_H$ and $1^L$ as inputs and
runs $\A_1^{M_H^*}$ independently $L$ times. Fix $H$ for the moment.
Since $\A_1$ samples a fresh uniform target in each round, the
probability that its next target equals one of the preceding targets
is at most $(r-1)/2^\ell$, even conditioned on the preceding rounds
being successful. Consequently,
\[
\Pr[T\gets\A_2^{M_H^*}(a_H,1^L)]{\mathsf{Valid}_H(T)}
\ge\left(\max\left\{0,s_H-\frac{L}{2^\ell}\right\}\right)^L.
\]
We can now average over $H$. By convexity of
$u\mapsto(\max\{0,u-L/2^\ell\})^L$, for
$L=\lfloor2^{3\lambda/4}\rfloor$ and sufficiently large $\lambda$,
\begin{align*}
\mathbb E_{H\gets\Bias}
\Pr[T\gets\A_2^{M_H^*}(a_H,1^L)]{\mathsf{Valid}_H(T)}
&\ge\left(\mathbb E_Hs_H-\frac{L}{2^\ell}\right)^L\\
&\ge\left(\frac{1}{3P(\lambda)}\right)^L.
\end{align*}
Here we used $\ell=\Theta(\lambda\log\lambda)$.

Finally, define a query-free sampler $\A_3$ that receives $a_H$ and
a description of $M_H^*$ as classical input and simulates $\A_2$.
Since $\bin$ is injective, $M_H^*$ has at most $2^{\lambda/2}$
non-$\bot$ inputs. Its table therefore has a description of length
\[
d=O\!\left(2^{\lambda/2}(\ell+n\log|\Sigma|)\right)
\le2^{2\lambda/3}.
\]
Thus, $\A_3$ receives at most $t+d\le L$ bits of classical side
information for sufficiently large $\lambda$. Applying
\Cref{tools:lem-guessing-upper-bound} once more gives the contradicting
upper bound
\[
\mathbb E_{H\gets\Bias}
\Pr[T\gets\A_3(1^L,a_H,\operatorname{desc}(M_H^*))]{
    \mathsf{Valid}_H(T)}
\le2^{t+d}e^{-\lambda L/2}
\le\left(2e^{-\lambda/2}\right)^L
<\left(\frac{1}{3P(\lambda)}\right)^L.
\]
This proves the second part. The bounds used only $t,Q$, and $P$,
so the conclusion is uniform over the stated algorithms.
\end{proof}

We next recall a lemma from \cite{BHV:2026} that will be used to
pass from search to decision.
\begin{lemma}[Adapted from {\cite[Lemma~4.21]{BHV:2026}}]
\label{lem:yao-box}
Let $G:D\to\{0,1\}$ be a uniformly random function, where $N=|D|$
satisfies
\[
\left(\frac{(t(\lambda)+\log N)Q(\lambda)}{N}\right)^{1/3}
=\negl(\lambda).
\]
Let $\B$ be a quantum query algorithm making
$Q(\lambda)=\operatorname{poly}(\lambda)\ge1$ queries to $G$, and
let $\{adv_G\}_G$ be a family of
$t(\lambda)=\operatorname{poly}(\lambda)$-bit classical advice strings.
Suppose that
\[
\Pr[G\gets\{D\to\{0,1\}\},\ z\gets D]{
    \B^G(adv_G,z)=G(z)}>\frac35.
\]
Then, for all sufficiently large $\lambda$, for at least a
$1/(4000Q(\lambda)^2)$ fraction of $v\in D$, measuring a uniformly
random query of $\B^G(adv_G,v)$, for uniformly random $G$, produces
$v$ with probability at least $1/(3200Q(\lambda)^2)$.
\end{lemma}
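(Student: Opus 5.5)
The plan is to adapt the argument behind \cite[Lemma~4.21]{BHV:2026}, which is a Yao-box-style statement: an algorithm with short classical advice that predicts $G(z)$ noticeably better than chance on average must query $z$ itself with noticeable weight for many $z$. We argue by contradiction. Let $\mathrm{Bad}\subseteq D$ be the set of points $v$ for which measuring a uniformly random query of $\B^G(adv_G,v)$, averaged over $G$, yields $v$ with probability less than $1/(3200Q^2)$, and suppose $|\mathrm{Bad}|>(1-1/(4000Q^2))N$. The good points then contribute at most $1/(4000Q^2)$ to the average success, so on $\mathrm{Bad}$ the algorithm must still succeed with probability at least $3/5-1/(4000Q^2)$.

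For $v\in\mathrm{Bad}$, write $q_v$ for the total query weight on $v$, which is at most $Q\cdot 1/(3200Q^2)=1/(3200Q)$. Let $G^{(v)}$ be $G$ with the value at $v$ replaced by an independent bit. By the hybrid lemma (\Cref{lem:query-hybrid}) with $S=\{v\}$, the output distributions of $\B^{G}(adv_G,v)$ and $\B^{G^{(v)}}(adv_G,v)$ differ by at most $2\sqrt{Qq_v}$. By Cauchy--Schwarz and Jensen, the expected value over $G$ of this quantity is at most $2\sqrt{1/3200}<0.04$. The advice, however, is still $adv_G$, and it may encode $G(v)$ directly. So success beyond about $1/2+0.04$ on $\mathrm{Bad}$ must come from information about $G|_{\mathrm{Bad}}$ stored in the $t$-bit advice.

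To turn this into a contradiction we use a compression or presampling argument. A simulator holding $adv_G$ recovers, for many $v\in\mathrm{Bad}$, the bit $G(v)$ with advantage about $1/2-0.04-o(1)$ without querying $v$, though it may query $G$ elsewhere. A standard encoding argument, as in the BHV proof or the bit-fixing/presampling technique (e.g.\ Chung--Guo--Liu--Qian), then bounds the achievable advantage. With $t$ bits of advice and $Q$ queries, the average advantage at a uniformly random point that is not queried is $O\bigl(((t+\log N)Q/N)^{1/3}\bigr)$, which is negligible by the hypothesis. This yields average success on $\mathrm{Bad}$ of at most $1/2+0.04+\negl(\lambda)<3/5-1/(4000Q^2)$, the desired contradiction. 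The constants $3200$ and $4000$ are chosen so that this arithmetic closes.

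The main obstacle is the last step: rigorously bounding the advantage from advice on unqueried points for quantum query algorithms. The difficulty is that the hybrid replacement and the advice dependence interact, since $adv_G$ is a function of $G(v)$. My first attempt would be to cite the corresponding step of \cite[Lemma~4.21]{BHV:2026} verbatim, since the statement is adapted from there and only the constants and the domain $D$ change. Failing that, I would use a $P$-bit-fixing reduction to a presampled oracle of size $P\approx ((t+\log N)N^2/Q)^{1/3}$ (or a similar balance), whose error term produces the cube-root expression in the hypothesis.
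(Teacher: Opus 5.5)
Your proposal is correct and follows essentially the same route as the paper: compare $\B^G(adv_G,z)$ with the execution on the punctured oracle via the hybrid lemma, bound the punctured success by $1/2+\negl(\lambda)$ using the finite-domain Yao's box bound with classical advice, and finish by averaging. The step you flag as the main obstacle is not re-derived in the paper; it is imported as a black box from \cite[Theorem~4.20]{BHV:2026}.

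The differences are only cosmetic. You argue by contradiction on the large set $\mathrm{Bad}$, whereas the paper first lower-bounds the average query mass by $1/(1600Q^2)-\negl(\lambda)$ and then applies a Markov-type count. Restricting the Yao bound from uniform $z\in D$ to $z\in\mathrm{Bad}$ costs only a factor $N/|\mathrm{Bad}|\le(1-1/(4000Q^2))^{-1}$, so your arithmetic still closes. Also, your ``advantage about $1/2-0.04$'' should read roughly $1/10-0.04$.
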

\begin{proof}
For $z\in D$, let $G^{[z]}$ agree with $G$ except that
$G^{[z]}(z)=0$. The finite-domain Yao's box bound
\cite[Theorem~4.20]{BHV:2026} gives
\begin{equation}
\Pr[G,\ z\gets D]{\B^{G^{[z]}}(adv_G,z)=G(z)}
\le\frac12+O\!\left(\frac{(t+\log N)Q}{N}\right)^{1/3}
=\frac12+\negl(\lambda).
\label{eq:yao-punctured-bound}
\end{equation}
Let $m_{G,z}$ be the total query mass at $z$ in the execution with
$G$. The hybrid bound gives a difference in success probabilities
of at most $4\sqrt{Qm_{G,z}}$. Averaging and applying Cauchy--Schwarz,
we obtain
\[
\frac1{10}-\negl(\lambda)
\le4\sqrt{Q\mathbb E_{G,z}m_{G,z}},\qquad
\mathbb E_{G,z}\frac{m_{G,z}}Q
\ge\frac{1}{1600Q^2}-\negl(\lambda).
\]
Put $r_z:=\mathbb E_G[m_{G,z}/Q]\le1$. If $\theta$ is the fraction
of $z$ for which $r_z\ge1/(3200Q^2)$, then
$\mathbb E_zr_z\le\theta+1/(3200Q^2)$. It follows that
$\theta\ge1/(3200Q^2)-\negl(\lambda)\ge1/(4000Q^2)$ for all
sufficiently large $\lambda$, as claimed.
\end{proof}

\begin{proof}[Proof of the second part of \Cref{thm:advice-oracle-input}]
Suppose for a contradiction that there exist an algorithm $\B$
making $Q=\operatorname{poly}(\lambda)$ queries and a family of
$t=\operatorname{poly}(\lambda)$-bit advice strings such that, for
infinitely many $\lambda$,
\[
\Pr[H\gets\Bias,\ G\gets\{D\to\{0,1\}\},\ z\gets D]{
\B^{M'[H,G]}(adv_{M'[H,G]},z)=G(z)}>\frac35.
\]
We apply the argument of \Cref{lem:yao-box} directly to the valid
queries of $\B$. For a target $z$, replace $M'[H,G]$ by
$M'[H,G^{[z]}]$, keeping the original advice fixed. These oracles
differ only on valid queries $(i,x,c)$ with $\bin(i,x)=z$.

Fix $H$ for the moment. By hardwiring $H$, we can simulate the
modified oracle with at most two queries to $G^{[z]}$ per query of
$\B$.\footnote{Compute the queried value of $G^{[z]}$ in a scratch
register, use it only if the hardwired $H$ validates $(i,x,c)$,
and uncompute the scratch register. The Yao bound allows arbitrary
computation between queries.}
Thus, \Cref{eq:yao-punctured-bound} bounds the modified algorithm's
success, averaged over $G,z$, by $1/2+\negl(\lambda)$, uniformly in
$H$. The same bound holds after averaging over $H$.

Let $m_{H,G,z}$ be the total query mass of the original execution on
\[
\{(i,x,c):\bin(i,x)=z,\ c\in C,\ H(c)=z\|0^{n-\ell}\}.
\]
By the hybrid bound and Cauchy--Schwarz, for all sufficiently large
parameters under consideration,
\[
\frac1{20}\le4\sqrt{Q\mathbb E_{H,G,z}m_{H,G,z}},\qquad
\mathbb E_{H,G,z}\frac{m_{H,G,z}}Q\ge\frac{1}{6400Q^2}.
\]
In particular, measuring a uniformly random query of the original
algorithm produces a valid codeword preimage of its target with
inverse-polynomial probability.

We now obtain a contradiction to the second part of
\Cref{lem:avg-seach-separation}. For each fixed $G$, construct
$\B'[G]$ as follows. On input $z$ and advice $adv_{M'[H,G]}$, simulate
$\B^{M'[H,G]}$ using $M[H]$ and the hardwired $G$, stop immediately
before a uniformly random query of $\B$, measure $(i,x,c)$, and
output $c$. Each simulated query uses at most two queries to
$M[H]$, by computing its answer, attaching the appropriate $G$ bit
when the answer is not $\bot$, and uncomputing. Moreover, for fixed
$G$ the advice depends only on $M[H]$, since $M'[H,G]$ is determined
by $M[H]$ and $G$.

By averaging, there is a fixed $G^*$ for which
\[
\Pr[H\gets\Bias,\ z\gets D]{
c\in C\ \land\ H(c)=z\|0^{n-\ell}
\;\Big|\;c\gets\B'[G^*]^{M[H]}(adv_{M'[H,G^*]},z)}
\ge\frac{1}{6400Q^2}.
\]
When the target is instead uniform in $\{0,1\}^{\ell}$, let this
algorithm fail outside $D$. Since $|D|/2^\ell\ge1/4$, its success
is still at least $1/(25600Q^2)$, contradicting
\Cref{lem:avg-seach-separation}. This proves the second part of
\Cref{thm:advice-oracle-input}.
\end{proof}

The fixed-oracle separation follows by the diagonalization
described in \Cref{subsec:complexity-theory}, which also preserves
clonability.

\newpage\part{Applications}
\section{Applications to Quantum Fire in a Classical Oracle}
\label{sec:crypto-applications}
Our construction gives a good-code-based approach to quantum cryptography.
Its basic quantum resource is the code-intersection advice state
$\AdvState{H}$, together with classical oracles that control how the state
can be used. This differs from the hidden-subspace approach to quantum
money~\cite{AC:2012}, signature tokens~\cite{BS:2023}, and the oracle construction
of one-shot signatures~\cite{AGKZ:2020}; the latter construction's security
justification was subsequently corrected by~\cite{SZ:2025}. The classical-oracle
quantum-fire construction of~\cite{CGS:2026} also uses subspace-based one-shot
signature machinery.

Building on the Yamakawa--Zhandry algorithm~\cite{YZ:2022} and the good-code
framework of~\cite{BHV:2026}, our cloning and query-extraction techniques give
a direct proof of noninteractive untelegraphability for quantum fire.
This provides a simpler route to the classical-oracle existence result
of~\cite{CGS:2026}, whose stronger key-fire and interactive-security results
are not asserted here.
%We also describe a quantum-money mini-scheme and an honestly generated signature-token candidate; their security remains open.
Formal definitions and the quantum-fire proof appear in
Appendices~\ref{app:crypto-definitions} and~\ref{app:missing-security}.

\subsection{Classical Oracle}

\paragraph{Base oracles.}
Retain the parameters $n,\Sigma,C$, the biased hash distribution, and the
advice states from the preceding sections. Fix an injective binary encoding
$\bin:[n]\times\Sigma\to\{0,1\}^{\ell}$ and put
$\mathcal Z_\lambda:=\bin([n]\times\Sigma)$.
Uniform sampling from $\mathcal Z_\lambda$ means uniform sampling of $(i,a)$
followed by this encoding; unused binary encodings are not sampled.
For $H=(H_i)_{i\in[n]}$, recall the oracle
%\[
% O[H,E](z,c)=
% \begin{cases}
% 1,&c\in C,\ H(c)=z\Vert0^{n-\ell},\ (z,c)\in E,\\
% 0,&\text{otherwise},
% \end{cases}
%\]
%and
\[
 M[H](i,a,c)=
 \begin{cases}
 H_i(a),&c\in C,\ H(c)=\bin(i,a)\Vert0^{n-\ell}\\
 \bot,&\text{otherwise}.
 \end{cases}
\]
%Thus $O$ returns only verification outcomes, whereas $M$ also returns the hash bit needed by our cloning algorithm. Malformed inputs are rejected.

\paragraph{Global classical oracles.}
Let $I_\lambda=\{0,1\}^{\lambda}$. Independently sample
$H_u\leftarrow\wBias_{n,1/\lambda^4,\Sigma}$ for every $u\in I_\lambda$
and a uniformly random permutation $P:I_\lambda\to I_\lambda$.
Define
\[
 \mathbf H(u,i,a)=(H_u)_i(a),\qquad
 \mathbf M(P(u),i,a,c)=M[H_u](i,a,c)
\]
and set
\[
 \cM=(\mathbf H,\mathbf M,P)
\]
All components are classical function oracles with coherent evaluation
access. In particular, $P$ is accessed as
$\ket{u,v}\mapsto\ket{u,v\oplus P(u)}$; inverse evaluation is not supplied.
The inverse permutation occurs only in the mathematical definition of the
locked oracles. Generation knows $u$ and can prepare $\AdvState{H_u}$,
while the public serial number $P(u)$ permits access to the weaker locked
oracle. In~\cite[Section 8]{CGS:2026}, independently sampled classical oracles
serve as serial numbers for simplicity, and conversion to a single global
oracle with classical-string serial numbers is explicitly noted. Our
permutation implements this standard globalization step.

\subsection{Construction of Quantum Fire}
There exists a noninteractively untelegraphable quantum fire mini-scheme
relative to a classical oracle. We obtain the classical-oracle existence
result of~\cite{CGS:2026} using the following code-based construction.

\begin{construction}[Quantum fire mini-scheme]
\label{con:code-fire}
Relative to $\cM$, define the following algorithms.
\begin{enumerate}
\item $\QFireMini.\Spark(1^\lambda)$: Sample $u\leftarrow I_\lambda$,
prepare $\AdvState{H_u}$ using $\mathbf H(u,\cdot,\cdot)$, and output
\[
 (s,R)=(P(u),\AdvState{H_u}).
\]
Preparation uses a polynomial cutoff and
returns $\bot$ on failure.
\item $\QFireMini.\Clone(s,R)$: Apply our cloning algorithm with oracle
$\mathbf M(s,\cdot,\cdot,\cdot)$ and output the two flame registers.
\item $\QFireMini.\Ver(s,R)$: Apply the amplified verifier
$\widetilde V_M$ with oracle $\mathbf M(s,\cdot,\cdot,\cdot)$ and output
its acceptance bit. The underlying verifier samples $(i,a)$ uniformly,
runs $\BiasedYZ(R,\bin(i,a))$, and accepts precisely when the oracle
response is different from $\bot$.
\end{enumerate}
\end{construction}

\begin{theorem}[Correctness and untelegraphability]
\label{thm:code-fire}
Construction~\ref{con:code-fire} satisfies verification correctness and
strong cloning correctness.

More explicitly, let $A=(A_1,A_2)$ be a uniform query-bounded
adversary, with no shared quantum state between the stages, whose classical
transcript has at most $t$ bits. Let $Q_H,Q_P$ bound its total queries to
$\mathbf H,P$, respectively, and let $Q\ge1$ bound the queries to the
target locked oracle in $A_2$ followed by $\widetilde V_M$.
Put $T=Q_P+2Q_H+1$. For any integer
$1\le L\le\min\{2^\lambda,\lfloor|\mathcal Z_\lambda|/3\rfloor\}$,
\begin{align}
 \operationname{Pr}[A\text{ wins}]
 &\le b_\lambda+\eta_\lambda+\delta_\lambda
 +2^{t/(2L)}\sqrt{8}\,Q
       (1-\lambda^{-4})^{n/4}
 +2Q_H\sqrt{\operatorname{Inv}_\lambda(T)}.
 \label{eq:fire-winning-bound}
\end{align}
Here $b_\lambda=\operationname{Pr}[H\notin\Good]\le2^{-\lambda}$,
$\eta_\lambda$ bounds preparation failure conditioned on $H\in\Good$,
$\delta_\lambda\le2^{-\lambda}$ is amplified soundness error, and
$\operatorname{Inv}_\lambda(T)$ is the optimal success probability of
inverting a random $2^\lambda$-point permutation using at most $T$ forward
queries and auxiliary information independent of the permutation.
For preparation with $\lambda^6$ samples per coordinate,
$\eta_\lambda\le2n\exp(-\lambda^2/2)$.
For polynomial $t,Q,Q_H,Q_P$, choose $L=\max\{1,t\}$ to obtain
\[
 \operationname{Pr}[A\text{ wins}]
 \le b_\lambda+\eta_\lambda+\delta_\lambda
      +4Qe^{-\lambda/4}
      +O\!\left(\frac{Q_H(T+1)}{2^{\lambda/2}}\right)
 =\negl(\lambda).
\]
\end{theorem}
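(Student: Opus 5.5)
The plan is to treat correctness and security separately. Correctness follows from the tools in Part~I. Untelegraphability needs two hybrids: the first removes the adversary's direct access to the flame's hash row through $\mathbf H$, and the second reduces the second stage to the classical-witness argument of \Cref{subsec:qcma-hardness}.

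\emph{Correctness.} For $\Spark$, preparing $\AdvState{H_u}$ from $\mathbf H(u,\cdot,\cdot)$ repeats the reconstruction step of \Cref{tools:lem-cloning}. For each $i$, the algorithm prepares $\lambda^6$ copies of $|\Sigma|^{-1/2}\sum_a\ket{a}\ket{H_i(a)}$ with genuine oracle access and measures the hash bit. On $H\in\Good$ the density condition gives $q_{i,1}\ge p/2=\lambda^{-4}/2$. Hence outcome $1$ is missed with probability at most $e^{-\lambda^2/2}$ per coordinate, and outcome $0$ is missed with smaller probability. A union bound then gives $\eta_\lambda\le 2n\exp(-\lambda^2/2)$. $\ket{\phi_2}$ is prepared query-free. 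Verification correctness follows from \Cref{tools:lem-biased-yz}, because $\mathbf M(P(u),\cdot)=M[H_u]$: one run accepts with probability $1-2^{-\lambda}$, and amplification (majority over $\lambda$ runs, re-using the state approximately as in the cloning analysis, or simply running $\BiasedYZ$ coherently and uncomputing) keeps completeness $1-\negl$. Strong cloning correctness is \Cref{tools:lem-cloning} applied with the oracle $\mathbf M(s,\cdot)=M[H_u]$. The term $b_\lambda$ absorbs $H_u\notin\Good$, via \Cref{tools:lem-good-hash-probability}.

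\emph{Hybrid 1: decouple $\mathbf H(u,\cdot,\cdot)$.} Replace the row $\mathbf H(u,\cdot,\cdot)$ seen by $A_1$ and $A_2$ with an independent biased hash $H'$. The flame, the serial number $s=P(u)$ and the locked oracle $\mathbf M(s,\cdot)$ all stay defined from the true $H_u$. By \Cref{lem:query-hybrid}, the winning probability changes by at most $2\sum_t\sqrt{p_t}$, where $p_t$ is the weight of the $t$-th $\mathbf H$-query on the first coordinate $u$. Bounding this gives the term $2Q_H\sqrt{\operatorname{Inv}_\lambda(T)}$. The reduction turns a query with large weight on $u$ into an inverter for $P$ at $P(u)$ in the usual way: measure a random $\mathbf H$-query of the hybrid execution.

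The inverter samples all $H_v$ itself, so this auxiliary data is independent of $P$. It simulates $\mathbf M$ by evaluating $P$ forward and comparing, since $\mathbf M(s',\ldots)$ needs the $v$ with $P(v)=s'$. The way around this is to answer $\mathbf M(s',\cdot)$ by $M[H_u]$ when $s'=s$ and otherwise through a lazily consistent table. The honest flame also needs $u$ to be prepared; the inverter avoids this by preparing the flame from $H_u$ directly, which it knows. Checking that the forward-query count is exactly $T=Q_P+2Q_H+1$ (with $2Q_H$ coming from computing and uncomputing a $P$-evaluation per $\mathbf H$-query, and $+1$ for the final check) is bookkeeping I expect to work out. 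After this hybrid, $\mathbf H$ and $P$ carry no information about $H_u$ beyond what the $t$-bit transcript and $\mathbf M(s,\cdot)$ reveal, so they can be fixed and hardwired.

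\emph{Hybrid 2: classical-witness extraction (main step).} The combined algorithm consisting of $A_2$ followed by $\widetilde V_M$ is now a query algorithm with oracle $M[H_u]$, a $t$-bit classical input $w$ depending on $H_u$, and $Q$ queries. This is exactly the setting of \Cref{lem:qcma-guesser}, except that acceptance is $\varepsilon_H$ instead of at least $2/3$. For any restricted oracle $M[H_u,E]$ with $|F|\le L\le|\mathcal Z_\lambda|/3$, the verifier is fed a state independent of the sampled challenge. The argument in the soundness proof of \Cref{subsec:cqma-proof-system} therefore shows that $\widetilde V_M$ accepts with probability at most $\delta_\lambda$. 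The hybrid lemma then gives query weight at least $(\varepsilon_H-\delta_\lambda)^2/(4Q)$ on new valid pairs, so each guesser round succeeds with probability at least $(\varepsilon_H-\delta_\lambda)^2/(8Q^2)$, including the factor $1/2$ for the hash-bit guess. Over $L$ rounds the guesser succeeds with probability at least $((\varepsilon_H-\delta_\lambda)^2/(8Q^2))^L$.

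Averaging over $H_u$ and comparing with \Cref{tools:lem-guessing-upper-bound}, which bounds the same quantity by $2^t(1-\lambda^{-4})^{nL/2}$, uses Jensen's inequality for the convex map $x\mapsto x^{2L}$. Taking $2L$-th roots gives
\[
\mathbb E_H[\varepsilon_H]\le\delta_\lambda+\sqrt8\,Q\,2^{t/(2L)}(1-\lambda^{-4})^{n/4}.
\]
The subtle point, which I expect to be the main obstacle, is that the guesser must also simulate $A_2$'s hardwired $\mathbf H$ and $P$ queries without depending on $H_u$; Hybrid 1 is what makes this possible. Summing the errors $b_\lambda,\eta_\lambda,\delta_\lambda$ from the correctness step with the bounds from the two hybrids yields \eqref{eq:fire-winning-bound}.

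For the final bound, choose $L=\max\{1,t\}$, so $2^{t/(2L)}\le\sqrt2$, and use $n\ge\lambda^5$, so $(1-\lambda^{-4})^{n/4}\le e^{-\lambda/4}$; this gives the term $4Qe^{-\lambda/4}$. Finally, $\operatorname{Inv}_\lambda(T)=O((T+1)^2/2^\lambda)$ for a random permutation, which gives the last term.
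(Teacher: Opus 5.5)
Your proposal follows the paper's proof essentially step for step: correctness from the density condition and \Cref{tools:lem-cloning}, the permutation-locking hybrid with an inverter spending $Q_P+2Q_H+1$ forward queries, and the restricted-oracle guesser whose $L$-round success, compared with \Cref{tools:lem-guessing-upper-bound} via Jensen (power mean), gives $2^{t/(2L)}\sqrt8\,Q(1-\lambda^{-4})^{n/4}$. The one point to tighten is the simulator in Hybrid~1: do not sample rows $H_v$ indexed by $v$ and then use a ``lazily consistent'' table, since lazy sampling is delicate under superposition queries and is unnecessary here. Instead use the paper's equivalent sampling, namely a uniform serial $s$ and serial-indexed tables $K_y$ with $K_s=H_u$, all independent of $P$, and answer $\mathbf H(v,\cdot)$ by $K_{P(v)}$ and $\mathbf M(y,\cdot)$ by $M[K_y]$; this is exactly what your $2Q_H$ count presupposes.
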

The proof is given in Appendix~\ref{app:missing-security}. The permutation
term uses the inversion bound (\cite{Amb:2000} for example). The theorem implies
the existence statement above and, with signatures secure in the same
adversary model, the standard public-key upgrade.

\bibliographystyle{alpha}
\bibliography{cmd/refs}

\newpage\part{Appendices}\appendix
\section{Definitions of Quantum Cryptography}
\label{app:crypto-definitions}
All honest algorithms are uniform quantum polynomial-time algorithms.
Classical strings are measured classical outputs. In the oracle setting,
every algorithm and adversary receives coherent evaluation access to the
specified global oracle. Query-bounded adversaries may use arbitrary
oracle-independent computation between polynomially many queries, but
receive no oracle-dependent nonuniform advice unless explicitly allowed.
Probability includes oracle sampling, honest randomness, and measurements.
An execution in which honest generation outputs $\bot$ is unsuccessful.

% ===== APPENDIX: DIGITAL SIGNATURES =====
\subsection{Digital Signatures}
\label{subsec:ds-definitions}
\begin{definition}[Post-quantum digital signatures]
A digital signature scheme $\DS=(\KeyGen,\Sign,\Ver)$ consists of:
\begin{itemize}
\item $\DS.\KeyGen(1^\lambda)$, which outputs classical keys $(vk,sk)$;
\item $\DS.\Sign(sk,m)$, which outputs a classical signature $\sigma$;
\item $\DS.\Ver(vk,m,\sigma)$, which outputs a bit.
\end{itemize}
For every admissible message $m$, correctness requires
\[
 \operationname{Pr}\!\left[
 \DS.\Ver(vk,m,\sigma)=1:
 (vk,sk)\leftarrow\DS.\KeyGen(1^\lambda),\
 \sigma\leftarrow\DS.\Sign(sk,m)
 \right]\ge1-\negl(\lambda).
\]
Existential unforgeability under adaptive chosen-message attacks requires
that a $\QPT$ adversary given $vk$ and classical-query access to
$\DS.\Sign(sk,\cdot)$ cannot output an accepting pair
$(m^*,\sigma^*)$ with $m^*$ never submitted to its signing oracle,
except with negligible probability. In a query-bounded oracle theorem,
the signature scheme must be secure against that same adversary model;
ordinary computational security alone does not cover unbounded
inter-query computation.
\end{definition}

% ===== APPENDIX: FIRE =====
\subsection{Quantum Fire}
\label{subsec:fire-definitions}
We follow the quantum-fire framework of~\cite{NZ:2024,BNZ:2025,CGS:2026} and
consider noninteractive untelegraphability without pre-shared entanglement.

Intuitively, a quantum fire state can be efficiently cloned while it remains
quantum, but cannot be compressed into polynomial-size classical information
from which a valid flame can later be reconstructed.

\begin{definition}[Public-Key Quantum Fire \cite{NZ:2024}]
A public-key quantum fire scheme $\mathsf{QFire}$ consists of the following
efficient quantum algorithms.

\begin{itemize}
    \item $\mathsf{QFire.Setup}(1^\lambda)$:
    On input the security parameter, output a public key $pk$ and a secret key
    $sk$.

    \item $\mathsf{QFire.Spark}(sk)$:
    On input the secret key, output a serial number $sn$ and a flame state in a
    register $R_{\mathrm{flame}}$.

    \item $\mathsf{QFire.Clone}(pk,sn,R_{\mathrm{flame}})$:
    On input the public key, a serial number, and a flame register, output the
    input flame register $R_{\mathrm{flame}}$ together with a new flame register
    $R'_{\mathrm{flame}}$.

    \item $\mathsf{QFire.Ver}(pk,sn,R)$:
    On input the public key, a serial number, and an alleged flame state,
    output $\mathsf{TRUE}$ or $\mathsf{FALSE}$.
\end{itemize}

We require \emph{verification correctness}:
\[
\operatorname{Pr}\left[
    \begin{array}{c}
        \mathsf{out}_{\mathrm{Ver}}=\mathsf{TRUE}
    \end{array}
    :
    \begin{array}{l}
        pk,sk\leftarrow\mathsf{QFire.Setup}(1^\lambda),\\
        sn,R_{\mathrm{flame}}\leftarrow
            \mathsf{QFire.Spark}(sk),\\
        \mathsf{out}_{\mathrm{Ver}}\leftarrow
            \mathsf{QFire.Ver}(pk,sn,R_{\mathrm{flame}})
    \end{array}
\right]
\geq 1-\operatorname{negl}(\lambda),
\]
and \emph{cloning correctness}:
\[
\operatorname{Pr}\left[
    \begin{array}{c}
        \mathsf{out}_{1,\mathrm{Ver}}=\mathsf{TRUE}
        \ \wedge\
        \mathsf{out}_{2,\mathrm{Ver}}=\mathsf{TRUE}
    \end{array}
    :
    \begin{array}{l}
        pk,sk\leftarrow\mathsf{QFire.Setup}(1^\lambda),\\
        sn,R_{\mathrm{flame}}\leftarrow
            \mathsf{QFire.Spark}(sk),\\
        R_{\mathrm{flame}},R'_{\mathrm{flame}}\leftarrow
            \mathsf{QFire.Clone}(pk,sn,R_{\mathrm{flame}}),\\
        \mathsf{out}_{1,\mathrm{Ver}}\leftarrow
            \mathsf{QFire.Ver}(pk,sn,R_{\mathrm{flame}}),\\
        \mathsf{out}_{2,\mathrm{Ver}}\leftarrow
            \mathsf{QFire.Ver}(pk,sn,R'_{\mathrm{flame}})
    \end{array}
\right]
\geq 1-\operatorname{negl}(\lambda).
\]

The output of $\mathsf{QFire.Clone}$ is a mixed state in general and $R_{\mathrm{flame}}$ and $R'_{\mathrm{flame}}$ can be entangled. We say that the scheme satisfies \emph{strong cloning correctness} if,
whenever
\[
sn,\ket{\psi_{sn}}\leftarrow\mathsf{QFire.Spark}(sk),
\]
the fidelity
\[
F(\mathsf{QFire.Clone}(\ket{\psi_{sn}}\bra{\psi_{sn}}),\ket{\psi_{sn}}^{\otimes 2})
\]
is $1-\negl(\lambda)$.
\end{definition}

\begin{definition}[Untelegraphability]
Let $\mathcal A=(\mathcal A_1,\mathcal A_2)$ be an adversary.
Consider the following game.

\begin{enumerate}
    \item The challenger samples
    \[
        pk,sk\leftarrow\mathsf{QFire.Setup}(1^\lambda)
    \]
    and sends $pk$ to $\mathcal A_1$.

    \item For polynomially many rounds, the challenger samples
    \[
        sn,R_{\mathrm{flame}}
        \leftarrow\mathsf{QFire.Spark}(sk)
    \]
    and sends $(sn,R_{\mathrm{flame}})$ to $\mathcal A_1$.

    \item $\mathcal A_1$ outputs a polynomial-size classical string $L$.

    \item $\mathcal A_2(L)$ outputs $sn^\star$ and a quantum register $R'$.

    \item The challenger outputs
    \[
        \mathsf{QFire.Ver}(pk,sn^\star,R').
    \]
\end{enumerate}

We say that $\mathsf{QFire}$ is \emph{untelegraphable} if every QPT
adversary has negligible probability of making the challenger output
$\mathsf{TRUE}$.

For unconditional oracle constructions, we instead quantify over
\emph{query-bounded} adversaries, which make polynomially many oracle
queries but may perform arbitrary computation between queries.
\end{definition}

For our application it is convenient to use the corresponding
mini-scheme notion, which removes the setup algorithm and the public and
secret keys.

\begin{definition}[Quantum Fire Mini-Scheme \cite{CGS:2026}]
A quantum fire mini-scheme $\mathsf{QFireMini}$ consists of the following
efficient quantum algorithms.

\begin{itemize}
    \item $\mathsf{QFireMini.Spark}(1^\lambda)$:
    Output a serial number $sn$ and a flame state in
    $R_{\mathrm{flame}}$.

    \item $\mathsf{QFireMini.Clone}(sn,R_{\mathrm{flame}})$:
    Output $R_{\mathrm{flame}}$ together with a new flame register
    $R'_{\mathrm{flame}}$.

    \item $\mathsf{QFireMini.Ver}(sn,R)$:
    Output $\mathsf{TRUE}$ or $\mathsf{FALSE}$.
\end{itemize}

Verification correctness, cloning correctness, and strong cloning
correctness are defined analogously to the full scheme.
\end{definition}

\begin{definition}[Untelegraphability of a Quantum Fire Mini-Scheme]
Let $\mathcal A=(\mathcal A_1,\mathcal A_2)$.
Consider the following game.

\begin{enumerate}
    \item The challenger samples
    \[
        sn,R_{\mathrm{flame}}
        \leftarrow\mathsf{QFireMini.Spark}(1^\lambda)
    \]
    and sends $(sn,R_{\mathrm{flame}})$ to $\mathcal A_1$.

    \item $\mathcal A_1$ outputs a polynomial-size classical string $L$.

    \item $\mathcal A_2(L)$ outputs a quantum register $R'$.

    \item The challenger outputs
    \[
        \mathsf{QFireMini.Ver}(sn,R').
    \]
\end{enumerate}

The mini-scheme is \emph{untelegraphable} if every QPT adversary
(or every query-bounded adversary in an unconditional oracle
construction) wins with probability at most
$\operatorname{negl}(\lambda)$.
\end{definition}

\paragraph{From mini-schemes to public-key schemes.}
Use a digital signature scheme secure in the same adversary model.
Setup generates $(vk,sk)$. Spark generates $(s,R)$ with the mini-scheme
and outputs $((s,\sigma),R)$, where $\sigma\leftarrow\DS.\Sign(sk,s)$.
Verification checks $\DS.\Ver(vk,s,\sigma)$ and then the mini-scheme
verifier; cloning uses the mini-scheme cloner. An unsigned new serial
breaks signature security. For a previously issued serial, embed the
mini-scheme challenge into a uniformly chosen issuance; the other issued
flames can be generated independently, giving only a polynomial loss.
This is the standard compiler of~\cite{CGS:2026}.

\newpage% ===== APPENDIX: FIRE SECURITY PROOF =====
\section{Missing Security Proof}
\label{app:missing-security}
We supply the quantum-fire security proof deferred from the main text.
Quantum-money and honest-generation signature-token security remain
conjectural; standard OSS security fails for the stated generation
procedure.

\paragraph{Inputs from the preceding code construction.}
We use the following already established guarantees. For $H\in\Good$,
the advice state is accepted with overwhelming probability and our cloner
outputs a state negligibly close to
$(\AdvState{H}\bra{\mathrm{adv}_H})^{\otimes2}$.
The base verifier samples $z\leftarrow\mathcal Z_\lambda$, so for
$E\subseteq F\times\Sigma^n$ with $|F|\le|\mathcal Z_\lambda|/3$,
its acceptance probability is at most $1/3$ for every input state.
Witness-preserving amplification~\cite{MW05} gives a verifier
$\widetilde V_M$ with soundness error $\delta_\lambda\le2^{-\lambda}$
for all such restricted oracles. We also use the list-diversity property
from the code analysis: for $L$ distinct codewords $c^{(1)},\ldots,c^{(L)}$,
with $L\le2^\lambda$, the sets
$S_i=\{c_i^{(j)}:j\in[L]\}$ satisfy
$\sum_i|S_i|\ge nL/2$ for sufficiently large $\lambda$.
This is the list-recovery consequence used in the no-query guessing
bound~\cite{BHV:2026}.

\begin{lemma}
\label{lem:fire-synthesis}
Let $B^{M[H],\mathcal R}(w)$ output a quantum register, where
$w\in\{0,1\}^t$, and where the auxiliary oracle $\mathcal R$ is
independent of $H$. Let $Q\ge1$ bound the target-oracle queries of
$B$ followed by $\widetilde V_M$. For every admissible $L$ in
Theorem~\ref{thm:code-fire},
\begin{align*}
 &\mathbb{E}_{H,\mathcal R}\!\left[
 \mathbf1_{H\in\Good}\max_{w\in\{0,1\}^t}
 \operationname{Pr}\!\left[
 \widetilde V_M^{M[H]}
       (B^{M[H],\mathcal R}(w))=1\right]\right]\\
 &\hspace{20mm}\le\delta_\lambda+
 2^{t/(2L)}\sqrt8\,Q(1-\lambda^{-4})^{n/4}.
\end{align*}
The maximum may depend on both $H$ and $\mathcal R$.
\end{lemma}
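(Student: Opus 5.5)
We adapt the query-extraction argument of \Cref{lem:qcma-guesser} to the composite algorithm $\mathcal{C}:=\widetilde V_M\circ B$. For fixed $H,\mathcal R$ and $w$, let $a(H,\mathcal R,w)$ be the acceptance probability of $\widetilde V_M^{M[H]}(B^{M[H],\mathcal R}(w))$. Put $\alpha:=a-\delta_\lambda$, so it suffices to bound $\mathbb E[\mathbf 1_{H\in\Good}\max_w\alpha^+]$. Purify $\mathcal C$ as a $Q$-query algorithm to $M[H]$ interleaved with $\mathcal R$-queries. Since $\mathcal R$ is independent of $H$, we fix $\mathcal R$ throughout and treat it as part of the oracle-independent unitaries. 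We then run the guesser of \Cref{fig:guesser} on $\mathcal C$ with initial string $w$: in each round it simulates $M[H,E_{r-1}]$ from its tables $\Delta,\widehat H$, measures a uniformly random query, and guesses a fresh hash bit.

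The per-round analysis goes as follows. Conditioned on previous rounds succeeding, the simulated oracle is exactly $M[H,E_{r-1}]$ with $|F_{r-1}|<L\le|\mathcal Z_\lambda|/3$. By amplified soundness, the acceptance probability is then at most $\delta_\lambda$, since this holds for every input state produced by $B$. The two executions therefore differ by at least $\alpha$. \Cref{lem:query-hybrid} gives total query mass on $S_{r-1}$ at least $\alpha^2/(4Q)$, so a uniformly chosen query yields a new valid pair with probability at least $\alpha^2/(4Q^2)$. Including the hash-bit guess, the round succeeds with probability at least $\alpha^2/(8Q^2)$. Thus the query-free guesser with side information $w$ (and the fixed $\mathcal R$) outputs $L$ valid distinct pairs with probability at least $(\alpha^2/(8Q^2))^L$.

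To combine, note that \Cref{tools:lem-guessing-upper-bound} holds for each fixed $\mathcal R$, because the guesser is then fixed independently of $H$ and $\mathcal R$ is independent of $H$. It gives
\[
\mathbb E_H\Bigl[\max_w(\alpha^+)^{2L}\Bigr](8Q^2)^{-L}\le 2^t(1-\lambda^{-4})^{nL/2}.
\]
By Jensen's inequality ($x\mapsto x^{2L}$ is convex), $\bigl(\mathbb E_H[\mathbf 1_{\Good}\max_w\alpha^+]\bigr)^{2L}\le\mathbb E_H[\max_w(\alpha^+)^{2L}]$. Taking $2L$-th roots yields
\[
\mathbb E_H\bigl[\mathbf 1_{\Good}\max_w\alpha^+\bigr]\le 2^{t/(2L)}\sqrt8\,Q(1-\lambda^{-4})^{n/4}.
\]
Averaging over $\mathcal R$ and adding back $\delta_\lambda$ gives the claim.

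The main obstacle is the per-round hybrid step. The comparison must be made against the restricted execution, and soundness must hold uniformly for arbitrary states output by $B$ under the restricted oracle; this is where amplified soundness for all $E\subseteq F\times\Sigma^n$ is used. A second point needing care is that $B$'s own $M[H]$-queries are simulated together with the verifier's; this is why $Q$ counts both stages.
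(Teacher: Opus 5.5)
Your proposal is correct and follows essentially the paper's argument. You fix $\mathcal R$, run the table-based guesser on the composed algorithm, and use amplified soundness on the restricted oracle together with the hybrid lemma (evaluated on the restricted execution) to get per-round success at least $(a-\delta_\lambda)_+^2/(8Q^2)$; you then finish with the guessing bound and a $2L$-th-moment Jensen (power-mean) step. The only difference is cosmetic: you invoke the side-information form of the guessing lemma, which puts the factor $2^t$ on the upper bound, whereas the paper has the guesser sample $w$ uniformly and pays $2^{-t}$ on the lower bound; this is the same union bound.
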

\begin{proof}
Fix $\mathcal R=r$; it can be incorporated into the inter-query operations
because it is independent of $H$. Denote the composed algorithm by $T_r$
and let
$p_{H,r}=\max_w\operationname{Pr}[T_r^{M[H]}(w)=1]$.
We construct a no-query guesser that simulates restricted target oracles.

Choose $w$ uniformly from $\{0,1\}^t$. Maintain a database $\Delta$ of
distinct pairs $(z,c)$ and a table of guessed bits $h(i,a)$, initially
empty. At each of $L$ rounds, simulate $T_r(w)$ with an oracle that returns
$h(i,a)$ when $(\bin(i,a),c)\in\Delta$, and $\bot$ otherwise.
Choose one of its $Q$ target-query positions uniformly, measure the
query input $(i,a,c)$, and insert $(\bin(i,a),c)$ into $\Delta$.
If $(i,a)$ has not previously been assigned a bit, guess its bit uniformly;
otherwise retain its previously guessed value. Restart each simulation
from the same classical input $w$. Pad with dummy queries if necessary.
This guesser makes no queries to $H$ or $M[H]$.

Fix $H\in\Good$ and a maximizing witness $w=w_{H,r}$. Condition on
all previous rounds having extracted distinct valid pairs and having
guessed the associated bits correctly. The simulated oracle is then
exactly $M[H,\Delta]$. Its challenge support has size at most
$|\Delta|<L\le|\mathcal Z_\lambda|/3$, so $T_r$ accepts under this
oracle with probability at most $\delta_\lambda$, regardless of the
state synthesized by its first part.

Let $\mu_j$ be the query weight, in the restricted-oracle execution, on
inputs where $M[H]$ and $M[H,\Delta]$ differ. The quantum hybrid bound
gives
\[
 (p_{H,r}-\delta_\lambda)_+
 \le2\sum_{j=1}^{Q}\sqrt{\mu_j},\qquad
 \frac1Q\sum_{j=1}^{Q}\mu_j
 \ge\frac{(p_{H,r}-\delta_\lambda)_+^2}{4Q^2},
\]
where $x_+=\max\{x,0\}$. Thus measurement at a uniformly chosen
query extracts a new valid pair with at least the latter probability.
A fresh hash-bit guess is correct with probability $1/2$; an already
correctly guessed bit introduces no loss. Each round therefore succeeds
with probability at least
$(p_{H,r}-\delta_\lambda)_+^2/(8Q^2)$.
Guessing $w_{H,r}$ initially costs $2^{-t}$, so the guesser's overall
success probability is at least
\[
 2^{-t}(8Q^2)^{-L}
 \mathbb{E}_H\!\left[
 \mathbf1_{H\in\Good}(p_{H,r}-\delta_\lambda)_+^{2L}\right].
\]

Conversely, fix any output of this no-query guesser. If all its $L$
pairs are distinct and valid, their codewords must be distinct, since
$H(c)$ determines the challenge. The list-diversity property implies
at least $nL/2$ distinct hash inputs $(i,a)$ among these codewords.
Each corresponding prescribed output bit has probability at most
$1-\lambda^{-4}$ under the independent biased hashes. Consequently,
the guesser's success probability is at most
$\beta_\lambda^L$, where
$\beta_\lambda=(1-\lambda^{-4})^{n/2}$.
It follows that
\[
 \mathbb{E}_H\!\left[
 \mathbf1_{H\in\Good}(p_{H,r}-\delta_\lambda)_+^{2L}\right]
 \le2^t(8Q^2\beta_\lambda)^L.
\]
Apply the power-mean inequality to
$X=\mathbf1_{H\in\Good}(p_{H,r}-\delta_\lambda)_+$ to obtain
\[
 \mathbb{E}_H\left[\mathbf1_{H\in\Good}p_{H,r}\right]
 \le\delta_\lambda+\mathbb{E}_H[X]
 \le\delta_\lambda+
       2^{t/(2L)}\sqrt8\,Q\sqrt{\beta_\lambda}.
\]
Finally average over $r$. This proves the lemma, including its
oracle-dependent choice of the best classical transcript.
\end{proof}

\begin{lemma}[Permutation locking]
\label{lem:fire-locking}
In the target-serial fire game, replace the public hash row at
$u=P^{-1}(s)$ by an independent hash family, while retaining the original
target flame and locked oracle. An adversary with $Q_H$ public-hash
queries and $Q_P$ forward-permutation queries changes its winning
probability by at most
\[
 2Q_H\sqrt{\operatorname{Inv}_\lambda(Q_P+2Q_H+1)}
 =O\!\left(\frac{Q_H(Q_P+2Q_H+2)}{2^{\lambda/2}}\right).
\]
This statement applies to the particular correlated global oracles
defined here, not to arbitrary additional oracles that could reveal $u$.
\end{lemma}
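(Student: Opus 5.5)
We will prove the lemma by a standard one-way-to-hiding / hybrid argument over the adversary's queries to the public hash oracle $\mathbf H$. We compare two worlds. In the real world the row $\mathbf H(u,\cdot,\cdot)$ at $u=P^{-1}(s)$ equals $H_u$. In the modified world this row is replaced by an independent $H'\leftarrow\wBias_{n,1/\lambda^4,\Sigma}$. In both worlds the target flame $\AdvState{H_u}$, the serial $s=P(u)$, and the locked oracle $\mathbf M(s,\cdot)=M[H_u]$ are unchanged. The two oracles $\mathbf H$ and $\mathbf H'$ differ only on inputs whose first component is $u$. \Cref{lem:query-hybrid} then bounds the change in winning probability by
\[
2\sum_{k=1}^{Q_H}\sqrt{p_k},
\]
where $p_k$ is the weight that the $k$th $\mathbf H$-query places on $\{u\}\times[n]\times\Sigma$, measured in the modified execution. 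The lemma therefore reduces to showing $p_k\le\operatorname{Inv}_\lambda(Q_P+2Q_H+1)$ for each $k$.

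To obtain this bound, we build from the adversary a permutation inverter. It runs the modified game up to the $k$th $\mathbf H$-query, measures the query register, and outputs the first component as its guess for $P^{-1}(s)$. The inverter must simulate everything in the modified game without knowing $u$. It uses the following pieces.
\begin{itemize}
\item Hash rows $H_v$ for every $v$, including $H_u$ and $H'$, are sampled by the inverter itself. Since the inverter does not know $u$, it samples one family $H_v$ per $v$ and treats the family at the unknown $u$ as $H'$. It also samples a separate family $H^\star$ to play the role of $H_u$. Then $\mathbf H'$ is exactly $v\mapsto H_v$, which needs no knowledge of $u$.
\item The flame $\AdvState{H^\star}$ is prepared from $H^\star$, and the locked oracle at serial $s$ is $M[H^\star]$. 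Both are independent of $P$, so they count as auxiliary information independent of the permutation.
\item Queries $\mathbf M(s',\cdot)$ with $s'\neq s$ require $M[H_{P^{-1}(s')}]$, which the inverter cannot evaluate without inversion. We handle these by defining the global locked oracle lazily through forward queries. The inverter answers $\mathbf M(s',\ldots)$ only after computing $P(v)$ for the relevant $v$. The cleanest way to do this is to charge two forward $P$-queries per public-hash query, which is where the $2Q_H$ in the argument $T=Q_P+2Q_H+1$ comes from. The final $+1$ accounts for the single forward query used to generate $s=P(u)$ and to check the output.
\item Queries to $P$ are forwarded directly to the challenger's permutation.
\end{itemize}
If the simulation is exact, each $p_k$ equals the inverter's success probability, so $p_k\le\operatorname{Inv}_\lambda(T)$. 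Summing over $k$ gives $2Q_H\sqrt{\operatorname{Inv}_\lambda(T)}$. The explicit form then follows from the standard bound $\operatorname{Inv}_\lambda(T)=O((T+1)^2/2^\lambda)$ for permutation inversion \cite{Amb:2000}.

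The main obstacle is step three: simulating the non-target locked oracles $\mathbf M(s',\cdot)$ using only forward evaluations of $P$, while keeping the query count within $T$. We would first try to restrict the lemma to the regime in which non-target locked oracles are either irrelevant or can be simulated by re-randomizing all non-target rows. Such rows are independent of the target instance. Because the oracle $\mathbf M$ is keyed through $P$, its values at $s'\neq s$ are determined by independent families, so the simulator can sample a fresh independent family per serial on demand using a random function of $s'$. This gives an identically distributed view, provided the adversary's $\mathbf H$-rows at the preimages are not correlated with it. Keeping that correlation consistent is exactly what the forward queries are spent on. This is also the reason the lemma is stated only for these particular correlated global oracles and not for arbitrary additional oracles that could reveal $u$.
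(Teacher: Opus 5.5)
\textbf{The reduction to permutation inversion does not work as you specify it.} Your outer structure matches the paper's: a hybrid over the $Q_H$ public-hash queries, weights measured in the replaced experiment, and each weight bounded via inversion. The problem is the simulation.

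\emph{Where it breaks.} You index the public rows by preimage, sampling $H_v$ for every $v$. That makes public-hash queries free, but a non-target locked query $\mathbf M(s',\cdot)$ then needs $H_{P^{-1}(s')}$, which amounts to inverting $P$ at $s'$. Forward queries cannot supply this. The query specifies only $s'$, possibly in superposition, so there is no ``relevant $v$'' whose image you could compute. Lazily tying locked answers to earlier public-hash queries is also not available for superposition queries.

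\emph{Consequences.} Your count of two forward queries per public-hash query corresponds to no step of your simulation, since there public-hash queries use no $P$-queries at all. Your fallback of restricting to a regime where non-target locked oracles are irrelevant would prove a weaker statement than the lemma. Also, the inverter cannot ``generate $s=P(u)$'', because it does not know $u$. The serial $s$ must be its inversion challenge, and the $+1$ is the final check $P(v)=s$.

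\textbf{The missing idea is to index every hash family by serial number instead.} Since $u$ is uniform and independent of $P$, the serial $s=P(u)$ is uniform and independent of $P$. The families $K_y:=H_{P^{-1}(y)}$, for $y\in I_\lambda$, are i.i.d.\ and independent of $(P,s)$, with $K_s=H_u=:H^\star$. The inverter, given challenge $s$, then proceeds as follows.
\begin{enumerate}
\item It samples $H^\star$, an independent $\widehat H$, and $K_y$ for each $y\neq s$, and prepares $\AdvState{H^\star}$.
\item It answers $\mathbf M(y,\cdot)$ by table lookup: $M[K_y]$ for $y\neq s$, and $M[H^\star]$ for $y=s$.
\item It answers $\mathbf H(v,i,a)$ by computing $P(v)$ into a scratch register with one forward query, reading $(K_{P(v)})_i(a)$ with $K_s$ replaced by $\widehat H$, and uncomputing $P(v)$ with a second forward query.
\end{enumerate}
This reproduces the replaced experiment exactly. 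It uses at most $Q_P+2Q_H$ forward queries before any public-hash query, and all its tables and the flame are independent of $P$. This gives $p_k\le\operatorname{Inv}_\lambda(Q_P+2Q_H+1)$.

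Your last paragraph gestures at this idea (``a fresh independent family per serial'', with forward queries spent on consistency) but never commits to it, and it contradicts your first bullet.
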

\begin{proof}
An equivalent sampling description first chooses a uniform serial $s$,
a target hash $H^*$, and independent hashes $K_y$ for $y\ne s$.
All are independent of $P$. The locked target oracle is $M[H^*]$,
and the other locked oracles are $M[K_y]$.
In the real experiment the public hash row at $v$ is $K_{P(v)}$, with
$K_s=H^*$. In the replaced experiment set $K_s=\widehat H$ for an
independent $\widehat H$, only for the public-hash interface. The flame
and target locked interface still use $H^*$.

Every query in the replaced experiment can be simulated with access to
forward $P$ and the independently sampled tables. A public-hash query
uses two evaluations of $P$, to compute and uncompute $P(v)$;
locked-oracle queries use their tables directly. Hence the state before
any public-hash query can be produced using at most $Q_P+2Q_H$ forward
queries. Measuring its row register and checking $P(v)=s$ would invert
$P$ with the query weight $\nu_j$ on $v=P^{-1}(s)$ as its success
probability. All simulation tables and the initial flame are independent
of $P$, so
$\nu_j\le\operatorname{Inv}_\lambda(Q_P+2Q_H+1)$.

The two experiments differ only on that row. A telescoping hybrid
evaluated on the replaced-experiment query states bounds their final
trace distance by $2\sum_{j=1}^{Q_H}\sqrt{\nu_j}$.
Measurements, the classical-only transfer between stages, and final
verification are channels and cannot increase trace distance.
The stated explicit bound follows. The asymptotic estimate uses
$\operatorname{Inv}_\lambda(T)=O((T+1)^2/2^\lambda)$~\cite{Amb:2000};
the hybrid step is the standard query argument~\cite{BBBV:1997}.
\end{proof}

\begin{proof}[Proof of Theorem~\ref{thm:code-fire}]
Conditioned on $H_u\in\Good$, each coordinate's one-preimage density
is at least $1/(2\lambda^4)$ and its zero-preimage density at least $1/2$.
Using $\lambda^6$ independent samples per coordinate, the probability
of failing to obtain both outcomes is at most
$2n\exp(-\lambda^2/2)$. Successful preparation gives the required
uniform preimage states exactly. Verification completeness and strong
cloning correctness follow from the earlier verifier and cloner
guarantees. Polynomial recursive cloning preserves correctness by the
channel hybrid bound.

For security, replace honest generation by the ideal target state
$\AdvState{H^*}$ on good hashes, assigning any fixed state on bad hashes.
This changes the game by at most $b_\lambda+\eta_\lambda$.
Apply Lemma~\ref{lem:fire-locking} to replace the public target hash row.
In this hybrid, only the target locked oracle $M[H^*]$ available to
$A_2$ depends on $H^*$. All public hash rows, $P$, and the other locked
oracles together form an auxiliary oracle $\mathcal R$ whose distribution
is independent of $H^*$. This independence follows from the serial-indexed
sampling description in the locking lemma, even though the auxiliary
components are mutually correlated.

The first stage may produce a transcript whose distribution depends on
both $H^*$ and $\mathcal R$. For each fixed pair, its conditional winning
probability is at most the maximum over all $t$-bit transcripts. The
second stage has no quantum side information from the first stage.
Thus, on good target hashes, its average success is bounded by
Lemma~\ref{lem:fire-synthesis}, with $B=A_2$. Combining the generation,
locking, and synthesis bounds gives~\eqref{eq:fire-winning-bound}.

For polynomial $t$, the choice $L=\max\{1,t\}$ is admissible for all
sufficiently large $\lambda$. Since $n\ge\lambda^5$,
$(1-\lambda^{-4})^{n/4}\le\exp(-\lambda/4)$ and
$2^{t/(2L)}\sqrt8\le4$. The simplified bound follows, and every term
is negligible for polynomial query budgets. This proves noninteractive
untelegraphability.
\end{proof}

\fi

\end{document}